
\documentclass[11pt,a4paper]{amsart}
\usepackage[left=25mm,top=30mm,bottom=25mm,right=25mm]{geometry}
\usepackage[utf8]{inputenc}                     
\usepackage[english]{babel}
\usepackage[fixlanguage]{babelbib}
\usepackage{graphicx}
\usepackage{microtype}
\usepackage{amssymb,dsfont,accents}
\usepackage{mathtools}
\usepackage[all]{xy}
\usepackage{bm}
\usepackage{calrsfs}
\usepackage{tensor}
\usepackage[pdfdisplaydoctitle=true,
colorlinks=true,
urlcolor=blue,
citecolor=blue,
linkcolor=blue,
pdfstartview=FitH,
pdfpagemode=UseNone,
bookmarksnumbered=true,
unicode=true]{hyperref}                   
\usepackage{hyperxmp}
\usepackage{cmap}                               
\usepackage{enumitem} 
\usepackage{aliascnt} 

\vfuzz2pt 
\hfuzz2pt 
\newcommand{\newautorefthm}[3]{
  \expandafter\def\csname#1autorefname\endcsname{#3}
  \newaliascnt{#1}{#2}
  \newtheorem{#1}[#1]{#3}
  \aliascntresetthe{#1}
}
\newtheorem{thm}{Theorem}[section]

\newautorefthm{cor}{thm}{Corollary}
\newautorefthm{lem}{thm}{Lemma}
\newautorefthm{prop}{thm}{Proposition}

\theoremstyle{definition}
\newautorefthm{defn}{thm}{Definition}
\newautorefthm{exam}{thm}{Example}

\theoremstyle{remark}
\newautorefthm{rem}{thm}{Remark}

\numberwithin{equation}{section}


\newcommand{\RR}{\mathbb{R}}                

\newcommand{\GL}{\mathrm{GL}}               
\newcommand{\OO}{\mathrm{O}}                
\newcommand{\SO}{\mathrm{SO}}               
\DeclareMathOperator{\Diff}{Diff}           




\newcommand{\Cinf}{\mathrm{C}^{\infty}}
\newcommand{\univ}{\mM}                     
\newcommand{\body}{\mathcal{B}}             
\newcommand{\lab}{\mathcal{E}}              
\newcommand{\TM}[1][\!]{T_{#1}\mathcal{M}}  
\newcommand{\mat}{\mathit{rest}}

\newcommand{\Ric}{\mathbf{Ric}}             

\makeatletter
\def\lettername#1{\if#1\string#1#1\else\expandafter\@gobble\string#1\fi}
\makeatother
\def\myletter#1#2#3{\expandafter\newcommand\csname #1\lettername#3\endcsname{{#2 #3}}} 
\def\myletterloop#1#2<#3#4>{\ifx\relax#3\else\myletter#1#2#3\myletterloop#1#2<#4>\fi}
\def\myletters#1#2#3{\myletterloop#1#2<#3\relax>} 

\myletters{b}{\mathbf}{ABCDEFGHIJKLMOPQRSTUVWXYZabdeghklqrstwxy}
\myletters{v}{\bm}{aefnpuvw}
\myletters{b}{\bm}{\beta\gamma\delta\epsilon\theta\kappa\mu\sigma\Sigma\tau\chi\omega}
\myletter{be}{\bm}{\eta}
\myletter{b}{\pmb}{\xi}
\myletters{m}{\mathcal}{FHLMNPW}
\myletter{b}{\bm}{p}
\myletter{v}{\pmb}{b}

\newcommand\xx{\bx}

\newcommand{\UU}{{\pmb{U}}}
\renewcommand{\ba}{{\pmb{a}}}
\renewcommand{\bA}{{\pmb{A}}}

\newcommand{\WW}{{\pmb{W}}}
\newcommand{\PP}{{\pmb{P}}}
\newcommand{\bN}{{\pmb{N}}}
\newcommand{\EE}{\pmb{E}}
\newcommand{\FF}{\pmb{F}}
\newcommand{\XX}{{\pmb{X}}}
\newcommand{\YY}{{\pmb{Y}}}
\newcommand\ee{\ve}

\newcommand\uu{\vu}

\newcommand{\qV}{{\mathbf{q}}}
\newcommand{\gE}{\mathfrak{g}}
\newcommand{\bydef}{\coloneqq}

\newcommand{\pp}{\mathrm{p}}
\newcommand{\mU}{\mathcal{U}}

\newcommand{\pull}{\varphi^{*}}
\newcommand{\push}{\varphi_{*}}


\DeclareMathOperator{\id}{id}
\DeclareMathOperator{\dd}{d}
\DeclareMathOperator{\vol}{vol}
\DeclareMathOperator{\tr}{tr} %
\DeclareMathOperator{\grad}{{grad}} 
\DeclareMathOperator{\dive}{div} %
\DeclareMathOperator{\Hess}{\mathbf{Hess}} %

\DeclareMathOperator{\Wedge}{\mathsf{\Lambda}}
\DeclareMathOperator{\hodge}{\ast_{\mathnormal{g}}}

\newcommand{\pd}[1]{\frac{\partial}{\partial {#1}}}


\DeclarePairedDelimiter{\norm}{\lVert}{\rVert}
\DeclarePairedDelimiter{\abs}{\lvert}{\rvert}
\DeclarePairedDelimiter{\set}{\lbrace}{\rbrace}
\DeclarePairedDelimiter{\matcomponents}{\lbrack}{\rbrack_{\mF_{\mat}}}


\hypersetup{
  pdftitle={Relativistic second gradient theory of continuous media},
  pdfauthor={Mina Chapon, Lionel Darondeau, Rodrigue Desmorat, Clément Ecker \& Boris Kolev},
  pdfsubject={MSC 2020: 74B20; 83C55; 70G45},
  pdfkeywords={Relativistic elasticity, Second gradient theory of continuous media, Variational Relativity, General covariance, Objectivity.},
  pdflang=en,
}

\begin{document}

\title{Relativistic second gradient theory of continuous media}

\author[Chapon]{M.~Chapon}
\address[Mina Chapon]{Université Paris-Saclay, ENS Paris-Saclay, CentraleSupélec, CNRS, LMPS - Laboratoire de Mécanique Paris-Saclay, 91190, Gif-sur-Yvette, France}
\email{mina.chapon@ens-paris-saclay.fr}

\author[Darondeau]{L.~Darondeau}
\address[Lionel Darondeau]{Sorbonne Université, CNRS, IMJ-PRG, F-75005 Paris, France.}
\email{darondeau@imj-prg.fr}

\author[Desmorat]{R.~Desmorat}
\address[Rodrigue Desmorat]{Université Paris-Saclay, ENS Paris-Saclay, CentraleSupélec, CNRS, LMPS - Laboratoire de Mécanique Paris-Saclay, 91190, Gif-sur-Yvette, France}
\email{rodrigue.desmorat@ens-paris-saclay.fr}

\author[Ecker]{C.~Ecker}
\address[Clément Ecker]{Université Paris-Saclay, ENS Paris-Saclay, CentraleSupélec, CNRS, LMPS - Laboratoire de Mécanique Paris-Saclay, 91190, Gif-sur-Yvette, France}
\email{clement.ecker@ens-paris-saclay.fr}

\author[Kolev]{B.~Kolev}
\address[Boris Kolev]{Université Paris-Saclay, ENS Paris-Saclay, CentraleSupélec, CNRS, LMPS - Laboratoire de Mécanique Paris-Saclay, 91190, Gif-sur-Yvette, France}
\email{boris.kolev@ens-paris-saclay.fr}

\thanks{Authors names are in alphabetical order}

\date{July 9, 2025}%
\subjclass[2020]{74B20; 83C55; 70G45}
\keywords{Relativistic elasticity, Second gradient theory of continuous media, Variational Relativity, General covariance, Objectivity.}%


\begin{abstract}
  Variational Relativity is a framework developed by Souriau in the sixties to better formulate General Relativity and its classical limit : Classical Continuum Mechanics. It has been used, for instance, to formulate Hyperelasticity in General Relativity. In that case, two primary variables are involved, the universe (Lorentzian) metric $g$ and the matter field $\Psi$. A Lagrangian density depending on the 1-jet of these variables is then introduced which must satisfy the principle of General Covariance. Souriau proved in 1958 that under these hypotheses, the Lagrangian density depends only on the punctual value of the matter field $\Psi$ and of a secondary variable $\bK$, the conformation, an invariant of the diffeomorphism group, which is the Relativistic analog of the inverse of the right Cauchy--Green tensor. In the present work, an extension of Souriau's results to a second order gradient theory in General Relativity is presented. Accordingly, new higher order diffeomorphisms invariants are found. Their classical limits are calculated, showing that the 3-dimensional Continuum Mechanics second gradient theory can be derived from such a relativistic theory. Some of these invariants converge to objective quantities in the Galilean limit, others to non-objective quantities. The present work contributes thus to clarify the theoretical foundation of higher gradient Continuum Mechanics theory.
\end{abstract}

\maketitle

\begin{scriptsize}
  \tableofcontents
\end{scriptsize}

\section{Introduction}

Continuum Mechanics has a longstanding tradition of enhancement \cite{Min1964,Eri1966,ME1968,Eri1983,MM2010,AA2011,Eri2012,Pol2014,Ste2015}, by incorporating into constitutive equations and conservation laws time derivatives and spatial gradients, \textit{i.e.}, jets of kinematic variables. Fluid Mechanics and Solid Mechanics have however evolved along different trajectories due to the distinct nature of the quantities and settings they address, but also due to the role given to the Principle of Objectivity \cite{Nol1959,TN1965}, as fundamental or not for their modeling.

The first aim of the present work is to precise the theoretical foundations of a higher gradient Continuum Mechanics theory:
\begin{enumerate}
  \item
        by deriving second order jet Continuum Mechanics from General Relativity, starting from the Principle of General Covariance,
  \item
        by formulating Classical (meaning here \(3\)-dimensional and non-relativistic) Second Gradient Hyperelasticity as the Galilean limit (\(c\to \infty\)) of the relativistic theory when expressed in the Minkowski spacetime~\cite{Sou1964}.
\end{enumerate}
The second aim is to provide an answer, even partial, to the question
\begin{quote}
  \emph{``Is Objectivity (Frame indifference) a consequence of General Covariance ?''}
\end{quote}

We refer to the work of Germain \cite{Ger1973,Ger1973a} for an introduction to the (non relativistic) kinematics of higher gradients media. We simply recall that the finite strain gradient theory introduces an internal length in Solid Mechanics by considering in the energy density the additional kinematics variable \(\bF_{\phi}^{-1}\nabla \bF_{\phi}\), which is objective and built from the spatial second derivative \(\nabla \bF_{\phi}=(\partial^{2} x^{i}/\partial X^{J} \partial X^{K})\) of the deformation $\phi\colon \bX \mapsto \xx=\phi(\bX)$ (see \cite{ME1968,Ger1973,Eri1999,For2006,dISV2009}). An interesting special case corresponds to the small strain formulation by Aifantis and coworkers~\cite{Aif1992,AA1997}. Advanced constitutive theories introduce additional higher order stress tensors~\cite{CC1909,Tou1964,FS2003,FS2017}, possibly non-symmetric~\cite{Tou1962,Min1964} (dual to the higher order kinematics variables, see also~\cite{dSV2009,Ber2012,BF2020} and \cite[Supplement 7.2]{Ste2015}). Note that purely spatial gradients seem to be absent from fluids constitutive laws. Lastly, the specific kinematic energy \(\frac{1}{2} \norm{\partial_{t}\phi}^{2}\) is half the square norm of the (Lagrangian) velocity \(\partial_{t}\phi\), which is itself a first order time derivative indeed present in the Lagrangian density of deformed media. But higher order time derivatives \(\partial^{k}_{t}\phi\) of the deformation $\phi$ are usually not considered in Classical Continuum Mechanics Lagrangian densities  (even when micro-inertia is accounted for~\cite{MW2007}).

Elasticity has been formulated in Special Relativity \cite{GE1966,Mau1978a,PR2013,PRA2015,NWP2022} and, since 1918, in General Relativity at the astrophysics scale \cite{Nor1916,Sou1958,Syn1959,DeW1962,Ray1963,Ben1965,CQ1972,KM1992,BS2003,EBT2006,Bro2021}. Introducing the framework of Variational Relativity~\cite{Sou1958,Sou1960,Sou1964} (adopted in \cite{CQ1972,KM1992,KM1997,BS2003}), Souriau was able to model Relativistic Hyperelastic continuous media in the mindset of Gauge Theory of High Energy Physics and Quantum Mechanics~\cite{HE1973,Ble1981}. The cornerstone of the theory is the representation of perfect matter by a vector-valued function \(\Psi\), defined on the Universe and with values in a \(3\)-dimensional vector space \(V \simeq \RR^3\). Then, the \(3\)-dimensional body \(\body\) of Classical Continuum Mechanics~\cite{TN1965} is recovered as a compact submanifold with boundary of \(V\) endowed with a mass measure \(\mu\) that labels the material particles. Its preimage by \(\Psi\), the Matter World tube, corresponds to the whole evolution of the considered matter in the Universe. Note that this description of continuous media is inverse to the one used in Classical Continuous Mechanics where the configurations are described by embeddings $\mathrm{p} : \body \to \lab$, where $\lab$ is the three-dimensional space of Classical Mechanics. However, this description will be recovered once an observer (or time function) is introduced, which leads to the definition of a foliation of the Universe by spatial hypersurfaces (corresponding to the usual three-dimensional space $\lab$ \cite{Gou2012,KD2023}). Souriau's formulation of Hyperelasticity does not require \textit{a priori} the definition of a time function. He can therefore express the consequences of demanding General Covariance for the matter Lagrangian. Assuming that the contribution of the matter Lagrangian density depends only on the zero order jet of the metric \(g\) and on the first order jet \((\Psi,T\Psi)\) of the perfect matter field \(\Psi\), Souriau's 1958 theorem~\cite{Sou1958} states that General Covariance implies that this density depends only on the punctual value of \(\Psi\) and of the secondary variable
\[
  \bK
  \bydef
  T\Psi\left.g^{-1}\right.T\Psi^{\star},
\]
named by him, the \textsl{conformation}, where \((\cdot)^{\star}\) means the dual transpose.

\begin{thm}[Souriau, 1958]\label{thm:first_gradient}
  Suppose that the Lagrangian defined on the \(4\)-dimensional Universe,
  \[
    \mL^{\text{matter}}[g, \Psi]
    =
    \int
    L_{0}\left(\tensor{g}{_{\mu\nu}}, \tensor{\Psi}{^{I}}, \partial_{\lambda}\Psi^{I}\right)
    \vol_{g},
  \]
  is general covariant, where $g$ is a Lorentzian metric, $\vol_{g}$, the corresponding volume form, and $\Psi$ a matter field. Then, its Lagrangian density \(L_{0}\) can be recast as \(L_{0}=L(\Psi, \bK)\), for some function~\(L\).
\end{thm}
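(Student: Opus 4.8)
\emph{Plan of proof.}
The plan is to convert general covariance into a pointwise statement of classical invariant theory for the group $\GL^{+}(4,\RR)$. First, I would unwind the covariance identity $\mL^{\text{matter}}[\varphi_{*}g,\varphi_{*}\Psi]=\mL^{\text{matter}}[g,\Psi]$, valid for all $\varphi\in\Diff(\mathcal{M})$: performing the change of variables $\varphi^{*}$ in the integral and using the elementary fact $\varphi^{*}\vol_{\varphi_{*}g}=\vol_{g}$, it becomes, at every point $p$, the identity asserting that $L_{0}$ evaluated at $\big((\varphi_{*}g)(\varphi(p)),\ \Psi(p),\ T(\varphi_{*}\Psi)(\varphi(p))\big)$ equals $L_{0}$ evaluated at $\big(g(p),\Psi(p),T\Psi(p)\big)$ — the arbitrariness of the fields $(g,\Psi)$, or the fundamental lemma of the calculus of variations applied to the infinitesimal form, lets one pass from equality of integrals to this pointwise identity. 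Since $L_{0}$ sees only the $0$-jet of $g$ and the $1$-jet of $\Psi$, and since $\Psi$, valued in the fixed space $V$, is transported by $\varphi_{*}$ with no Jacobian weight, the transformed arguments at $\varphi(p)$ depend on $\varphi$ only through the linear isomorphism $A=T_{p}\varphi$; in a chart, writing $F=T\Psi$, this action is $g\mapsto (A^{-1})^{\star}gA^{-1}$, $\ \Psi\mapsto\Psi$, $\ F\mapsto FA^{-1}$. Conversely, any $A\in\GL^{+}(4,\RR)$ is the $1$-jet at $p$ of some $\varphi\in\Diff(\mathcal{M})$ (realise a path from $\Id$ to $A$ by a compactly supported isotopy near $p$). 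So general covariance is exactly the invariance of $L_{0}(g(p),\Psi(p),F(p))$ under this action of $\GL^{+}(4,\RR)$ at each point.

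Second, I would carry out the invariant theory of this action. A short computation, using $\big((A^{-1})^{\star}gA^{-1}\big)^{-1}=Ag^{-1}A^{\star}$, shows that $\bK=Fg^{-1}F^{\star}$ is unchanged, hence constant on orbits; the point is that it separates the orbits on a dense open set. Since the Lorentzian inner products on $\RR^{4}$ form a single connected $\GL^{+}(4,\RR)$-orbit of $\eta=\mathrm{diag}(-1,1,1,1)$, one may normalise $g=\eta$, after which the residual freedom on $F$ is $\SO(3,1)=\GL^{+}(4,\RR)\cap\OO(3,1)$ acting by $F\mapsto FO^{-1}$. Viewing the rows $F^{1},F^{2},F^{3}$ of $F$ as covectors on $\RR^{4}$ and raising indices with $\eta^{-1}$, the entries $\bK^{IJ}$ are precisely the Minkowski Gram matrix of these three covectors, on which $\SO(3,1)$ acts by isometries. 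On the dense open set where $F$ is surjective (i.e. the $F^{I}$ are linearly independent) and their span is a nondegenerate $3$-plane, two triples with the same Gram matrix span $3$-planes related by a well-defined linear isometry, which extends, by Witt's extension theorem, to an isometry of Minkowski space; and since the pointwise stabiliser in $\OO(3,1)$ of such a $3$-plane contains a determinant $(-1)$ reflection, the extension can be taken in $\SO(3,1)$. Thus $\bK$ separates the generic $\GL^{+}(4,\RR)$-orbits, so there $L_{0}=L(\Psi,\bK)$ for some function $L$; continuity of $L_{0}$ together with density of the generic locus then propagates this representation to all configurations, which is the claim.

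The crux is the invariant-theoretic step, and within it the Lorentzian (indefinite) signature. In the Riemannian case $Fg^{-1}F^{\star}$ is positive semidefinite and Gram--Schmidt settles everything; here the $3$-plane spanned by the $F^{I}$ can be a degenerate subspace of Minkowski space, so one must invoke the general form of Witt's extension theorem — valid for arbitrary, possibly degenerate, subspaces of a nondegenerate space — and take care that equality of Gram matrices genuinely yields a linear isometry between the spans, which is exactly why surjectivity of $F$ is imposed and the degenerate locus reached only afterwards by continuity. A subsidiary issue is orientation: compactly supported isotopies furnish only $\GL^{+}(4,\RR)$ and not the whole of $\GL(4,\RR)$, so completeness of $\bK$ as an invariant has to be established for this smaller group — the role played above by the determinant $(-1)$ element of the stabiliser.
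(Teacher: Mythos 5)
Your proposal is correct in the regime where the paper actually proves the theorem (on the world tube, under the perfect-matter hypothesis that $T\Psi$ is surjective with timelike kernel), but it takes a genuinely different route. Both arguments begin identically, localizing general covariance to a pointwise invariance of $L_{0}$ under an action of $\GL^{+}(4,\RR)$ on the fibre data $(g,\Psi,T\Psi)$. From there the paper runs \emph{no} orbit-separation argument at all: it first builds the general covariant rest coframe $\mF_{\mat}^{\star}=(-\UU^{\flat},\dd\Psi^{1},\dd\Psi^{2},\dd\Psi^{3})$ out of $(g,\Psi,T\Psi)$ (the extra covector being $\hodge(\dd\Psi^{1}\wedge\dd\Psi^{2}\wedge\dd\Psi^{3})$, normalized), trades the variables $(g,\Psi,T\Psi)$ for $(\Psi,\bK,\mF_{\mat}^{\star})$ --- where $\bK$ collects the nontrivial components of $g^{-1}$ in that coframe --- and then observes that the residual group action touches only the coframe variable, on which $\GL^{+}(4,\RR)$ acts \emph{transitively}; hence $L$ cannot depend on it. You instead quotient in stages: normalize $g=\eta$ by transitivity on Lorentzian inner products, reduce to $\SO(3,1)$ acting on the rows of $F=T\Psi$, and invoke Witt's extension theorem plus a determinant $-1$ reflection in the pointwise stabilizer of the spanned $3$-plane to show that the Gram matrix $\bK$ separates orbits. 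What your route buys is an explicit completeness statement for $\bK$ as an invariant, together with a precise identification of where it fails (when $\ker T\Psi$ degenerates to a null line the pointwise stabilizer of the $3$-plane becomes trivial, so no orientation-correcting reflection is available). What the paper's route buys is the complete avoidance of Witt's theorem and, crucially, a mechanism that scales to the second-order jet (its main theorem), where a direct orbit-separation computation would be far heavier.

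One caution on your last step: the extension ``by continuity'' from the generic locus to degenerate configurations is not automatic. Knowing $L_{0}=L\circ(\Psi,\bK)$ on a dense open set and that $L_{0}$ is continuous does not by itself force $L_{0}$ to be constant on the fibres of $(\Psi,\bK)$ over the degenerate locus: two degenerate data points with the same $(\Psi,\bK)$ are approximated by generic sequences whose $\bK$-values are close but distinct, so one still needs a continuity or connectedness statement for $L$ on the image of the generic locus, which you do not supply. This does not affect the theorem as the paper establishes it, since its proof is likewise only claimed on the world tube, where $\ker T\Psi$ is timelike, the $3$-plane spanned by the $\grad^{g}\Psi^{I}$ is spacelike and nondegenerate, and your Witt argument closes without any limiting process.
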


In the present work, we extend Souriau's theorem to a second order gradient theory in General Relativity. The starting point is the introduction of a second-order jet Lagrangian density, depending only on the two primary variables \(g\) (the Universe metric) and \(\Psi\) (the perfect matter field), and assumed to be general covariant~\cite{Sou1958,Sou1964,KM1997,BS2003}. We define accordingly secondary variables (which are invariants of the diffeomorphism group) and playing the same role as Souriau's conformation, but for a relativistic second gradient theory. Among them are exhibited three novel gravitation/matter field coupling tensors (see \autoref{sec:second_gradient}).

The central idea of this work is to find a systematic way to extract relativistic invariants from general covariant tensors, in the spirit of Souriau's definition of \(\bK\), which is a vector valued relativistic invariant. The main contribution of this work is the discovery that the correct formulation of the rest frame is a general covariant of the metric $g$ and the matter field $\Psi$. This allows first to produce a simple proof of Souriau's theorem and, then, to prove our result without sweating blood. There are no limit to extend our methods to higher order theories or to enrich the physical structure of the universe, but one needs a (perfect) matter field to make it work (this limits the scope of our technique to interior solutions, \textit{i.e.} solutions inside matter, of General Relativity).

Finally, we show that the classical \(3\)-dimensional second gradient theory can be derived from such a relativistic theory, as the Galilean approximation. Based on our General Covariance result, we address the issue of the status of the Principle of Objectivity in Classical (non-relativistic) Continuum Mechanics. Specifically, we show that the spatial gradient-type General Covariant secondary variables (modeling the deformation of solids) converge to objective quantities at the Galilean limit, while time derivative-type General Covariant secondary variables may converge to non-objective quantities at the Galilean limit.

The outline of the paper is as follows. In~\autoref{sec:GR_framework}, we recall some basic concepts in Variational Relativity: General Covariance, Lagrangian action functionals, matter field and matter rest frame. In~\autoref{sec:lagrangians}, we state and prove our main result, \autoref{thm:second_gradient}. We also provide an application to gradient fluids and a new and easy proof of Souriau's theorem. In~\autoref{sec:observer-frames}, we introduce the concept of time function, of observer frame, and the associated foliation by spacelike hypersurfaces (or so-called space-time structure). Our relativistic invariants are rewritten using this space-time decomposition and useful specifications of them are provided in the case of the Minkowski spacetime. Finally in~\autoref{sec:classical-limit}, we calculate their classical limits and discuss the question of objectivity (frame-indifference) of these limits. Besides, in order to be as self-contained as possible, we have added two appendices to summarize the main mathematical concepts and formulas used in this paper,

\section{Matter fields in Variational Relativity}
\label{sec:GR_framework}

The Universe is assumed to be a \(4\)-dimensional orientable manifold \(\univ\), endowed with an hyperbolic metric \(g\), of signature \((-+++)\).

\subsection{General Covariance}

The main postulate of General Relativity is precisely that \emph{Physical laws must be independent of the choice of coordinates}~\cite{Ein1988}. This principle is known as \textsl{General Covariance}, or invariance by reparameterization. In order to lift the ambiguity created by the imprecise wording of this principle~\cite{Nor1993,WS2009}, let us insist on what is meant here.

The Universe \(\univ\) is locally diffeomorphic to \(\RR^{4}\), \textit{via} either \textsl{coordinate charts}
\[
  \bx\colon\univ\to\RR^{4}
\]
or via \textsl{parameterizations}
\[
  \bxi\colon\RR^{4}\to\univ,
\]
also called \emph{plots} by Souriau and Iglesias-Zemmour in a more general setting, see~\cite{Sou1980,Igl2013}. Of course, parameterizations are the inverse of coordinates charts and \textit{vice versa}. A \textsl{reparameterization at the source} (or \textsl{change of coordinates}) is a local diffeomorphism $\varPhi$ from \(\RR^{4}\) to \(\RR^{4}\), whereas a \textsl{reparameterization at the goal} (or plainly a \textsl{reparameterization}) is a local diffeomorphism $\varphi$ from \(\univ\) to \(\univ\). Both changes of coordinates and reparameterizations act on either coordinates charts or parameterizations, by composition. However, whereas adopting the point of view of coordinates charts or of parameterizations is a strictly equivalent choice, it is not anymore the case for changes of coordinates and reparameterizations (see the figures and the explanations below).
\begin{figure}[htbp]
  \includegraphics[width=.49\textwidth]{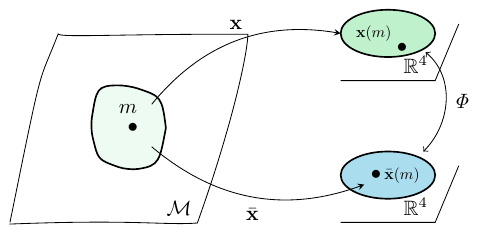}
  \includegraphics[width=.49\textwidth]{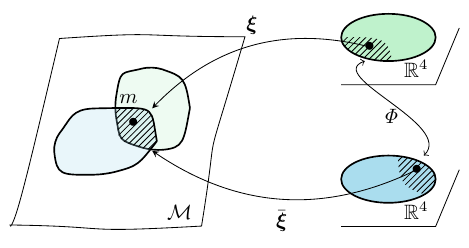}
  \caption{A change of coordinates (left), and a reparameterization at the source  (right).}
\end{figure}
\begin{figure}[htbp]
  \includegraphics{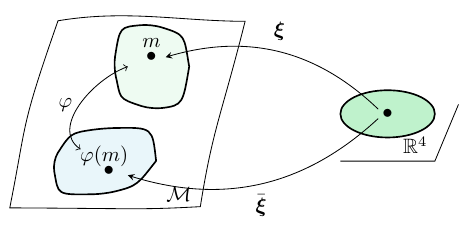}
  \caption{A reparameterization (at the goal).}
\end{figure}

Firstly, there is no \emph{canonical} equivalence between changes of coordinates and reparameterizations: one needs to fix an arbitrary (local) diffeomorphism \(\RR^{4}\to\univ\). Secondly and more importantly, even after identification, the actions of the two diffeomorphism groups are different. Changes of coordinates are involved in the process of gluing \emph{local objects} into \emph{global objects} such as tensor fields, whereas change of parameterizations are used to compare \emph{global objects} at different positions by sending them in the same \emph{local chart}.

One often aims at formulating theories using intrinsic formulas involving tensor fields (which is checked using changes of coordinates). General covariant tensor fields depend themselves on other tensor fields in a general covariant manner (which is checked using change of parameterizations).

Now that the principle is clearly stated, let us be a little more concrete. Variational Relativity, as formulated by Souriau~\cite{Sou1958,Sou1964}, involves \textsl{primary} variables (for instance, the metric \(g\)), and \textsl{secondary} variables, which depend on the primary variables in a covariant manner, for instance the Riemannian covariant derivative \(\nabla^{g}\), the Riemann curvature tensor \(\bR^{g}\), the Ricci tensor \(\Ric^{g}\) or the scalar curvature \(R^{g}\) (see \autoref{sec:nabla-and-tensors}). Acting in a covariant manner means that an action of the diffeomorphism group (denoted $\star$, see \autoref{subsec:diff-action-tensors}) on the primary variable \(\bP\) and the secondary variable \(\bS\) has been defined in such a way that
\[
  \bS(\varphi \star \bP) = \varphi \star \bS(\bP),
\]
for all local diffeomorphisms \(\varphi\) of the Universe \(\univ\). We then say that \(\bS\) is a \textsl{covariant} of \(\bP\). A special case, which is of great importance, is when $\bS$ is a scalar field or vector valued function and where the action of the diffeomorphism group on $\bS$ is given by
\[
  (\varphi \star \bS)(m) = \bS(\varphi(m)).
\]
In that case, we get
\begin{equation}\label{eq:relativistic-invariant}
  \bS(\varphi \star \bP)(m) = \bS(\bP)(\varphi(m)).
\end{equation}
We shall then say that $\bS$ is a \emph{relativistic invariant}. Note that the scalar curvature $R^{g}$ is a relativistic invariant, but the Riemann curvature tensor \(\bR^{g}\) and the Ricci tensor \(\Ric^{g}\) are not.

\subsection{Variational Relativity}

We now apply the principle of General Covariance to the context of Lagrangian field theory of General Relativity~\cite{Sou1958,Sou1964,HE1973}. In Lagrangian field theory, the primary variables are field variables defined on the Universe \(\univ\). The Lagrangian functional (or action)
\[
  \mL[\bP] \bydef\int L(j_{m}^{n}\bP)\,\vol_{g}
\]
is defined using a Lagrangian density \(L\) which depends on the punctual values (at \(m \in \univ\)) of the primary variables \(\bP\) and their derivatives up to a given order \(n\) (what is traditionally called their \(n\)th order jet and denoted by \(j_{m}^{n} \bP\)). Usually, when the Lagrangian satisfies the principle of General Covariance, its Lagrangian density \(L\) does not depend on the whole jet of \(\bP\) but rather on the \(n\)th order jet of \(\bP\) through secondary variables \(\bS\) which are covariants of the diffeomorphism group. For instance, in the initial variational formulation of General Relativity (in the \textit{vacuum}), there is a unique primary variable, the metric \(g\), and the Lagrangian density is given by
\[
  L(j_{m}^{2}g) = L(g_{\mu\nu}(m),\partial_{\kappa}g_{\mu\nu}(m), \partial^{2}_{\lambda\kappa}g_{\mu\nu}(m)) = \frac{1}{2\kappa}\left( R^{g}(m) -\Lambda\right),
\]
where \(\kappa\) is the Einstein constant and \(\Lambda\) is the cosmological constant. Here, \(R^{g}\) is the scalar curvature of \(g\) which is a secondary variable and a covariant of \(g\), meaning that \(R^{\pull g}=\pull R^{g}\) (see \autoref{subsec:diff-action-tensors}).

In this work, we will consider Lagrangian functionals \(\mL\) that depend on two primary variables: the Lorentzian metric \(g\) of the Universe \(\univ\) and a vector valued function \(\Psi\colon\univ\to\RR^3\), called the \emph{matter field}~\cite{Sou1958,Sou1964,KD2023}. In order to avoid unnecessary analytical difficulties and since, in practice, we do not require that Lagrangian densities \(L\) are integrable over the whole manifold \(\univ\), usually not compact. Instead, Lagrangian densities are integrated only over relatively compact domains \(\mU \subset \univ \) (and furthermore contained in a local chart).
Therefore, we shall write
\[
  \mL_{\mU}[g,\Psi] = \int_{\mU} L(j_{m}^{\ell}g,j_{m}^{n}\Psi) \, \vol_{g},
\]
to emphasize the dependence on \(\mU\).

Let us describe the principle of General Covariance for the Lagrangian \(\mL\) in more precise terms and formulate its consequences. Let \(\varphi \colon \mU \to \bar{\mU}\) be a diffeomorphism between two open sets \(\mU\) and \(\bar{\mU}\) of the Universe \(\univ\) and let \(T\varphi \colon T\mU \to T\bar{\mU}\) be its tangent map. Then, the Lagrangian \(\mL\) is invariant by \(\varphi\) if
\[
  \mL_{\mU}[\pull g,\pull\Psi] = \mL_{\bar{\mU}}[g,\Psi],
\]
for every Lorentzian metrics \(g\) on \(\univ\), every vector valued functions \(\Psi\colon\univ \to \RR^3\), and every local diffeomorphism \(\varphi \colon \mU \to \bar{\mU}\). Here, \(\pull\) means the action by pullback of the local diffeomorphism \(\varphi\) (see \autoref{subsec:diff-action-tensors}).
One has
\[
  \pull g \bydef (T\varphi)^{\star}\,(g \circ \varphi) (T\varphi), \qquad  \pull\Psi \bydef \Psi \circ \varphi,
\]
where \((\cdot)^{\star}\) is the dual transpose. Using a change of variables, we hence observe that the Lagrangian is invariant by \(\varphi\) if
\[
  \int_{\mU} L(j_{m}^{\ell}(\pull g),j_{m}^{n}(\pull\Psi)) \, \vol_{\pull g}
  =
  \int_{\bar{\mU}} L(j_{\bar m}^{\ell} g,j_{\bar m}^{n} \Psi) \, \vol_{g}
  =
  \int_{\mU} L(j_{\varphi(m)}^{\ell} g,j_{\varphi(m)}^{n}\Psi) \, \pull\vol_{g}.
\]

Since the open set \(\mU\) is arbitrary and \(\vol_{\pull g} = \pull\vol_{g}\), one finally obtains the following condition on the Lagrangian density, for arbitrary local diffeomorphisms \(\varphi\):
\begin{equation}\label{eq:lagrangian_density}
  L(j_{m}^{\ell}(\pull g),j_{m}^{n}(\pull\Psi)) = L(j_{\varphi(m)}^{\ell} g,j_{\varphi(m)}^{n}\Psi).
\end{equation}

General covariance of the Lagrangian functional being equivalent to condition~\eqref{eq:lagrangian_density}, we get the following elementary result, on which our general reasoning in \autoref{sec:lagrangians} will rely.

\begin{lem}
  A Lagrangian functional is general covariant if and only if its density is a relativistic invariant. As a consequence, a Lagrangian which density depends only on relativistic invariants (definition~\eqref{eq:relativistic-invariant}) is automatically general covariant.
\end{lem}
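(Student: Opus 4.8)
The plan is to obtain the lemma as a direct repackaging of the pointwise identity \eqref{eq:lagrangian_density} derived just above, the key move being to regard a Lagrangian density as a scalar field attached to the primary variable $\bP \bydef (g,\Psi)$. Concretely, to a density $L$ I associate the function
\[
  \bS(\bP)(m) \bydef L\bigl(j_{m}^{\ell}g, j_{m}^{n}\Psi\bigr), \qquad m \in \univ,
\]
a scalar (or, more generally, vector-valued) field on $\univ$, and I let the diffeomorphism group act on it by composition, $(\varphi \star \bS(\bP))(m) \bydef \bS(\bP)(\varphi(m))$, exactly as in the definition \eqref{eq:relativistic-invariant} of a relativistic invariant. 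The one preliminary point to record is the functoriality of jets under pullback: $j_{m}^{\ell}(\pull g)$ and $j_{m}^{n}(\pull\Psi)$ are determined by $j_{\varphi(m)}^{\ell}g$, $j_{\varphi(m)}^{n}\Psi$ and the jet of $\varphi$ at $m$ (this is just the chain rule), so that $\bS(\varphi\star\bP)(m) = L\bigl(j_{m}^{\ell}(\pull g), j_{m}^{n}(\pull\Psi)\bigr)$ is meaningful.

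With this dictionary the first assertion is immediate and symmetric. By construction $\bS(\varphi\star\bP)(m) = L\bigl(j_{m}^{\ell}(\pull g), j_{m}^{n}(\pull\Psi)\bigr)$ while $(\varphi\star\bS(\bP))(m) = L\bigl(j_{\varphi(m)}^{\ell}g, j_{\varphi(m)}^{n}\Psi\bigr)$, so the covariance relation $\bS(\varphi\star\bP) = \varphi\star\bS(\bP)$ — i.e.\ the statement that the density $\bS(\bP)$ is a relativistic invariant — is verbatim condition \eqref{eq:lagrangian_density}. Since the discussion preceding the lemma shows, via the change of variables formula together with $\vol_{\pull g} = \pull\vol_{g}$ and the arbitrariness of the relatively compact domain $\mU$, that general covariance of $\mL$ is equivalent to \eqref{eq:lagrangian_density}, the two conditions coincide; this proves both implications at once.

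For the corollary, suppose $L$ depends on the jets $j_{m}^{\ell}g$, $j_{m}^{n}\Psi$ only through a finite family of relativistic invariants $\bS_{1}(\bP), \dots, \bS_{k}(\bP)$, say $\bS(\bP)(m) = f\bigl(\bS_{1}(\bP)(m), \dots, \bS_{k}(\bP)(m)\bigr)$ for a fixed function $f$. Then for every local diffeomorphism $\varphi$,
\begin{multline*}
  \bS(\varphi\star\bP)(m)
  = f\bigl(\bS_{1}(\varphi\star\bP)(m), \dots, \bS_{k}(\varphi\star\bP)(m)\bigr) \\
  = f\bigl(\bS_{1}(\bP)(\varphi(m)), \dots, \bS_{k}(\bP)(\varphi(m))\bigr)
  = \bS(\bP)(\varphi(m)),
\end{multline*}
where the middle equality uses that each $\bS_{i}$ is a relativistic invariant. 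Hence the density is itself a relativistic invariant, and the first part yields general covariance of $\mL$.

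There is no real obstacle here: the lemma is essentially a translation of \eqref{eq:lagrangian_density} into the vocabulary of covariants. The only steps that deserve a word of care — and which I would state rather than belabor — are the jet functoriality used to make sense of $\bS(\varphi\star\bP)$, and the identity $\vol_{\pull g} = \pull\vol_{g}$ underlying the change of variables, which rests on $\univ$ being orientable and $\varphi$ orientation-preserving (or, equivalently, on working with densities instead of volume forms).
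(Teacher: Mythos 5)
Your proposal is correct and follows essentially the same route as the paper, which states the lemma without a separate proof precisely because it is the immediate repackaging of the equivalence between general covariance of $\mL$ and condition \eqref{eq:lagrangian_density} established in the preceding paragraphs. Your added remarks on jet functoriality, on $\vol_{\pull g}=\pull\vol_{g}$, and the explicit composition argument for the second assertion merely spell out what the paper leaves implicit.
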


\subsection{Matter field}
\label{subsec:matter_field}

In Variational Relativity, following Souriau~\cite{Sou1958,Sou1964}, matter is described by the set of all \textsl{lifelines} (or \textsl{World lines}) of material points constitutive of the continuous medium under study. More precisely, one considers a smooth vector valued function
\[
  \Psi\colon\univ\to V \simeq \RR^{3}.
\]
The labels of these particles constitute a subset \(\body\subseteq V\), in general assumed to be a \(3\)-dimensional compact orientable manifold with boundary and called the \textsl{body}. It is further assumed that \(\Psi\) is a submersion on
\[
  \mW\bydef\Psi^{-1}(\body),
\]
meaning that the linear tangent map
\[
  T\Psi\colon T\mW\to TV
\]
is of rank \(3\) at each point of \(\mW\). Thus, \(\mW\) is fibered by the particles lifelines \(\Psi^{-1}(\mathbf{X})\), for \(\mathbf{X}\in\body\), and is called for this reason the body's \textsl{World tube}. It is moreover assumed (\textsl{perfect matter} assumption) that each \(1\)-dimensional subspace \(\ker T_{m}\Psi\) of \(T_{m}\univ\) is \textsl{time-like}, meaning that the restriction of the metric \(g\) to these subspaces is negative definite.

The body $\body$ is equipped, like in Classical Continuum Mechanics, with a mass measure $\mu$ (in our case, a volume form). Using the canonical coordinate system $(X^{I})$ and the canonical metric $\qV=(\delta_{IJ})$ of $V\simeq\RR^3$, $\mu$ can be recast as
\[
  \mu
  =
  \rho_{\circ} \, \vol_{\qV},
  \qquad
  \vol_{\qV}= \left.\dd X^1\wedge\dd X^2\wedge\dd X^3\right.,
\]
where \(\rho_{\circ}\) is a positive function supported on the body and interpreted as a mass density on $\body$. The integral
\[
  m=\int_{\body} \mu
\]
is itself interpreted as the total mass of the continuous media under study.

The pullback \(\Psi^{*} \mu\) by \(\Psi\) of this mass measure \(\mu\) is a \(3\)-form defined on the \(4\)-dimensional World tube \(\mW=\Psi^{-1}(\body)\). Since \(\Psi\) is assumed to be a submersion on \(\mW\), there exists a \emph{nowhere vanishing} vector field \(\PP\) on \(\mW\), such that
(see~\cite{Sou1958,Sou1964,KD2023}, \textit{cf.} also \eqref{eq:W})
\begin{equation}\label{eq:defP}
  i_{\PP} \vol_{g}
  =
  \Psi^{*}\mu,
\end{equation}
where \(i_{\PP}\) is the interior product (or left contraction) with \(\PP\). This vector field \(\PP\) is called the \textsl{current of matter}.
It spans \(\ker(T\Psi)\) and is thus time-like.
The \textsl{rest mass density} on the World tube $\mW$ is the scalar
\[
  \rho_r \bydef \sqrt{-g(\PP, \PP)} \, ,
\]
and we define the \emph{quadrivelocity} $\UU$ (this will be clarified with the introduction of a spacetime in \autoref{sec:classical-limit}) as the unit time-like vector field on $\mW$ such as
\begin{equation}\label{eq:U}
  \PP=\rho_{r} \UU,
  \qquad
  g(\UU,\UU) = -1.
\end{equation}

\begin{rem}[Mass conservation, first expression]
  Since \(\dd \mu=0\) and \(\left(\dive^{g} \PP\right) \vol_{g}=\dd ( i_{\PP} \vol_{g})=\dd\Psi^{*} \mu=\Psi^{*}\dd \mu\), a first (kinetic) relativistic expression of mass conservation is
  \[
    \dive^{g}\PP=\dive^{g}(\rho_{r} \UU)=0.
  \]
\end{rem}

The \emph{conformation tensor} \(\bK\) is a vector valued function defined on \(\univ\) with values in \(S^{2}\RR^3\) and given by
\begin{equation}\label{eq:K}
  \bK \bydef T\Psi \,g^{-1} T\Psi^{\star},
\end{equation}
where $T\Psi^{\star}$ is the dual map (or transpose) of $T\Psi$. Note that $\bK(m)$ is a symmetric contravariant tensor on $\RR^3$ which is positive definite for each $m \in \mW$ for perfect matter (see \cite[Theorem p. 9]{Lic1955} and \cite{Sou1964,KD2023}).

We denote by \( \Psi^{I}\) the components of the vector-valued function matter field \(\Psi\colon\univ\to \RR^3\) in the canonical coordinate system \((X^{I})\) of \(\RR^3\).
In this coordinate system, the components of the conformation are
\[
  K^{IJ}
  =
  g^{-1}(\dd\Psi^{I},\dd\Psi^J).
\]

Introducing the quadrivector field $\WW$, defined implicitly on $\mM$ by
\begin{equation}\label{eq:definiW}
  i_{\WW}\vol_{g} =\Psi^{*} \vol_{\bq},
\end{equation}
so that the current of matter can be written as $\PP = (\rho_{\circ} \circ \Psi) \WW$ on $\mW$, and applying \eqref{eq:norm_W}, we retrieve that
\[
  g(\PP,\PP)
  =
  -(\rho_{\circ} \circ \Psi)^{2}
  \det\big(g^{-1}(\dd\Psi^I,\dd\Psi^J)\big).
\]
and the following expression of the rest mass density due to Souriau \cite{Sou1964}
\begin{equation}\label{eq:rhor}
  \rho_{r} = (\rho_{\circ} \circ \Psi) \sqrt{\det(\bK \qV)},
\end{equation}
where $\det(\bK \qV)=\det(K^{IJ})$ is the determinant of the matrix $(K^{IJ})$.

\begin{rem}[Mass conservation, second expression]
  The expression \eqref{eq:rhor} of the rest mass density is a second (integrated) relativistic expression of mass conservation. It is a generalization of the Classical Continuum Mechanics expression \(\rho \circ \phi= \rho_{0} / \sqrt{\det (C^{IJ})}\), where \(\rho_{0}\) and \(\rho\) are respectively, the mass densities on the 3D reference configuration \(\Omega_{0}\) and the deformed configuration \(\Omega=\phi(\Omega_{0})\), and where \(\bC\) is the right Cauchy--Green tensor (defined on~\(\Omega_{0}\)).
\end{rem}

\subsection{Matter rest frame}
\label{sec:matter_frame}

The vector field \(\WW\) defined implicitly on the Universe \(\mM\) by
\[
  i_{\WW}\vol_{g} = \dd \Psi^{1} \wedge \dd \Psi^{2} \wedge \dd \Psi^{3}
\]
is non-zero at each point $m \in \mM$ at which $\Psi$ is a submersion. It is therefore non-vanishing on the world-tube $\mW$.

In Appendix~\ref{subsec:Hodge}, we prove that the covector field \(\WW^{\flat}=g\WW\) is given by
\begin{equation}\label{eq:defW}
  \WW^{\flat}\bydef\hodge( \dd\Psi^{1} \wedge \dd\Psi^{2} \wedge \dd\Psi^{3}),
\end{equation}
where \(\hodge\) denotes the Hodge star operator for metric \(g\), and therefore that \(\WW\) is orthogonal to the subspace spanned by the vector fields
\[
  \grad^{g}\Psi^{I} = (\dd\Psi^{I})^{\sharp} = g^{-1}\dd\Psi^{I}, \qquad I=1,2,3.
\]

Since \( \WW \) is timelike on the world-tube $\mW$ (perfect matter hypothesis), we conclude furthermore that \( \langle \WW \rangle^{\bot}\) is spacelike (see \cite[Theorem p. 9]{Lic1955}) and thus that the family
\[
  (\WW, \grad^{g}\Psi^{1}, \grad^{g}\Psi^{2}, \grad^{g}\Psi^{3})
\]
is a frame of $T_{m}\mM$ at each point $m \in \mW$.

In Appendix~\ref{subsec:Hodge}, we show moreover that \(\WW^{\flat}\) (and  \(\WW\)) depends on the punctual value of the Lorentzian metric \(g\) and on the first order jet of the matter field \(\Psi\). It is furthermore general covariant, meaning that
\[
  \WW^\flat(\varphi^{*}g,\varphi^{*}\Psi, \varphi^{*}T\Psi) = \varphi^{*} \WW^\flat(g,\Psi, T\Psi),
\]
for every local diffeomorphism \(\varphi\) of $\mM$.

\begin{rem}
  Whereas the covector fields \(\dd\Psi^{I}\) are naturally covariant under the group of general diffeomorphisms, the supplementary covector field \(\WW^{\flat}=\WW^{\flat}(g,\Psi, T\Psi)\) changes sign when the orientation is reversed (see Appendix~\ref{subsec:volume-pseudo-form} for more details).
\end{rem}

This covariance property is inherited by the unit timelike vector field
\[
  \UU = \frac{\WW}{\sqrt{-g(\WW,\WW)}}
\]
and by its covariant version $\UU^{\flat}$, since $g(\WW, \WW)$ is a relativistic invariant. Therefore, we shall introduce the coframe
\[
  \mF_{\text{rest}}^{\star}
  \bydef
  \left(-\UU^\flat, \EE^{1}, \EE^{2}, \EE^{3}\right),
  \qquad
  \EE^{I}:= \dd\Psi^{I},
\]
defined at each point \(m\in\mW\), and its dual frame \(\mF_{\mathrm{rest}}\)
\[
  \mF_\mat
  =
  \left(\UU,\EE_1,\EE_2,\EE_3\right).
\]
One has thus for \(I,J=1,2,3\)
\begin{equation}\label{eq:E_I}
  \UU^\flat(\EE_I)=0
  \quad\text{and}\quad
  \dd\Psi^I(\EE_J)=\tensor{\delta}{^I_J}.
\end{equation}
This frame \(\mF_{\mathrm{rest}}\) shall be referred to as the \emph{matter rest frame}, or simply as the \emph{rest frame}.

\begin{rem}
  Since \(T_{m}\Psi \colon T_{m}\mU \to V\) is surjective, it has \emph{right inverses}, which means linear mappings \(\bL_{m} \colon V \to T_{m}\mU\), such that \(T_{m}\Psi \cdot \bL_{m} = \id_{V}\). Each of them depends on the choice of a complement $S$ to \(\ker T_{m}\Psi\) and will be denoted by \((T_{m}\Psi)_{S}^{-1}\). Among them, there is one which is ``nicer'' than the others, the one corresponding precisely to \((\ker T_{m}\Psi)^{\bot}\) (which is a complement of \(\ker T_{m}\Psi\) because we have assumed that \(\ker T_{m}\Psi\) is timelike on \(\mU\)). This one will be called the \emph{orthogonal right inverse of \(T_{m}\Psi\)} and denoted by \((T_{m}\Psi)_{\UU^{\perp}}^{-1}\). Using this definition, one can observe that the vector fields \(\EE_{J}\) can be written as
  \[
    \EE_{J}= (T_{m}\Psi)_{\UU^{\perp}}^{-1}\cdot \frac{\partial}{\partial X^{J}},
  \]
  introducing the canonical basis \((\partial/\partial X^{I})\) on \(V\simeq \RR^{3}\). Indeed, we have
  \[
    \UU^{\flat}(\EE_{J})=0, \quad \text{and} \quad \EE^{I}(\EE_{J}) = \dd X^{I} \cdot T_{m}\Psi \cdot (T_{m}\Psi)_{\UU^{\perp}}^{-1} \cdot \frac{\partial}{\partial X^{J}} = \delta^{I}_{J}.
  \]
\end{rem}

We shall summarize all these considerations in the following theorem.

\begin{thm}\label{thm:rest-frame}
  Assume that \(\Psi\colon\mM \to V\) is a submersion on $\mW$ and that \(\ker T_{m}\Psi\) is timelike at each point \(m\in\mW\). Then, the frame
  \[
    \mF_{\mathrm{rest}} = \big(\UU, \EE_{1}, \EE_{2}, \EE_{3}\big),
  \]
  defined on $\mW$, and its dual frame
  \[
    \mF_{\mathrm{rest}}^{\star} = \big(-\UU^{\flat}, \EE^{1}, \EE^{2}, \EE^{3}\big),
  \]
  where \(\EE^{I}\bydef\dd \Psi^{I}\), depend only on the punctual value of the Lorentzian metric \(g\) and of the first order jet of the matter field \(\Psi\).
  Moreover, given an orientation-preserving local diffeomorphism \(\varphi\) of \(\mW\), one has
  \begin{align*}
     & \varphi^{*}\mF_{\mathrm{rest}}(g,\Psi, T\Psi) = \mF_{\mathrm{rest}}(\varphi^{*}g,\varphi^{*}\Psi, \varphi^{*}T\Psi),
    \\
     & \varphi^{*} \mF_{\mathrm{rest}}^{\star} (g,\Psi, T\Psi)= \mF_{\mathrm{rest}}^{\star} (\varphi^{*}g,\varphi^{*}\Psi, \varphi^{*}T\Psi).
  \end{align*}
  In other words, the frame \(\mF_{\mathrm{rest}}\) and its coframe \(\mF_{\mathrm{rest}}^{\star}\) are general covariant.
\end{thm}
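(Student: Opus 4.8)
The plan is to deduce the entire statement from the two properties of the auxiliary vector field $\WW$ (defined by $i_{\WW}\vol_{g}=\dd\Psi^{1}\wedge\dd\Psi^{2}\wedge\dd\Psi^{3}$) that are established in Appendix~\ref{subsec:Hodge}: that $\WW^{\flat}=\hodge(\dd\Psi^{1}\wedge\dd\Psi^{2}\wedge\dd\Psi^{3})$ is built from the punctual value of $g$ and the first-order jet of $\Psi$, and that it is general covariant for \emph{orientation-preserving} local diffeomorphisms. Combined with the naturality of the exterior derivative and of the pairing between a frame and its coframe, this should give everything. I would organize the argument in three stages: first the coframe $\mF_{\mathrm{rest}}^{\star}$, then well-definedness of the dual frame $\mF_{\mathrm{rest}}$, then its covariance by a uniqueness argument.

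\emph{Stage 1 (coframe).} The covectors $\EE^{I}=\dd\Psi^{I}$ depend only on $j_{m}^{1}\Psi$, and $\varphi^{*}\dd\Psi^{I}=\dd(\Psi^{I}\circ\varphi)=\dd\big((\varphi^{*}\Psi)^{I}\big)$, which is exactly their covariance. For $-\UU^{\flat}$, note that $g(\WW,\WW)$ is a relativistic invariant (being quadratic in $\WW$, it is insensitive even to orientation reversal) and is strictly negative on $\mW$ by the perfect-matter hypothesis; hence $\UU=\WW/\sqrt{-g(\WW,\WW)}$ and $\UU^{\flat}=g\,\UU$ are well defined on $\mW$, depend only on $g(m)$ and $j_{m}^{1}\Psi$, and inherit the covariance of $\WW$ for orientation-preserving $\varphi$. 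This settles the claim for $\mF_{\mathrm{rest}}^{\star}$.

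\emph{Stage 2 (the dual frame exists and is a first-jet object).} Since $\PP=(\rho_{\circ}\circ\Psi)\WW$ spans $\ker T_{m}\Psi$ and $\WW$ is timelike on $\mW$, the spacelike subspace $\langle\WW\rangle^{\perp}$ is a complement of $\ker T_{m}\Psi$ in $T_{m}\mM$, so $T_{m}\Psi$ restricts to a linear isomorphism $\langle\WW\rangle^{\perp}\to V$; together with $\UU\notin\langle\WW\rangle^{\perp}$ this shows $(-\UU^{\flat},\dd\Psi^{1},\dd\Psi^{2},\dd\Psi^{3})$ is a genuine coframe. Its dual basis vector $\EE_{J}$ is then characterized by $\UU^{\flat}(\EE_{J})=0$ and $\dd\Psi^{I}(\EE_{J})=\delta^{I}_{J}$, equivalently $\EE_{J}=(T_{m}\Psi)^{-1}_{\UU^{\perp}}\!\cdot\partial/\partial X^{J}$; since it is produced from $g(m)$ and $j_{m}^{1}\Psi$ by algebraic operations only (matrix inversion and orthogonal projection, no further differentiation), it depends only on the punctual value of $g$ and the first-order jet of $\Psi$.

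\emph{Stage 3 (covariance of the frame, by uniqueness).} Fix an orientation-preserving local diffeomorphism $\varphi$. Pulling back the scalar identities of Stage 2 and using naturality of the pairing gives $(\varphi^{*}\UU^{\flat})(\varphi^{*}\EE_{J})=0$ and $(\varphi^{*}\dd\Psi^{I})(\varphi^{*}\EE_{J})=\delta^{I}_{J}$; by Stage 1, $\varphi^{*}\UU^{\flat}=\UU^{\flat}(\varphi^{*}g,\varphi^{*}\Psi,\varphi^{*}T\Psi)$ and $\varphi^{*}\dd\Psi^{I}=\dd(\varphi^{*}\Psi)^{I}$, so $\varphi^{*}\EE_{J}$ obeys the defining relations of $\EE_{J}(\varphi^{*}g,\varphi^{*}\Psi,\varphi^{*}T\Psi)$ and therefore equals it by the uniqueness from Stage 2. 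Assembling the four components yields the two displayed equivariance identities. The one genuinely delicate point — and the reason the orientation-preserving hypothesis is indispensable rather than cosmetic — is precisely the covariance of $\WW$: because $\WW^{\flat}$ is a Hodge dual, it changes sign under an orientation-reversing diffeomorphism, so without that hypothesis $\UU$, and with it the whole frame, would be covariant only up to sign. Everything else is bookkeeping.
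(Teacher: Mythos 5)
Your proposal is correct and follows essentially the same route as the paper: the key inputs are exactly the properties of \(\WW^{\flat}=\hodge(\dd\Psi^{1}\wedge\dd\Psi^{2}\wedge\dd\Psi^{3})\) established in Appendix~\ref{subsec:Hodge} (dependence on \(g(m)\) and \(j^{1}_{m}\Psi\), covariance under orientation-preserving diffeomorphisms, timelikeness on \(\mW\)), followed by normalization to \(\UU\), formation of the coframe, and passage to the dual frame via the relations \(\UU^{\flat}(\EE_{J})=0\), \(\dd\Psi^{I}(\EE_{J})=\delta^{I}_{J}\). Your Stage~3 uniqueness argument merely makes explicit the covariance of the dual frame, which the paper leaves implicit when it ``summarizes these considerations,'' and your remark on the role of the orientation-preserving hypothesis matches the paper's own.
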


\section{General covariant Lagrangians involving matter fields}
\label{sec:lagrangians}

The introduction of the rest frame is essential to build relativistic invariants, similar to the components of the conformation $\bK$, in order to formulate a general covariant theory of gradient media in General Relativity.

\subsection{First Gradient Relativistic Hyperelasticity}

Relativistic Hyperelasticity has been formulated by Souriau in~\cite{Sou1958,Sou1964} starting with a matter Lagrangian density
\[
  L_{0}
  =
  L_{0}\left(\tensor{g}{_{\mu\nu}}(m), \tensor{\Psi}{^{I}}(m), \partial_{\lambda}\tensor{\Psi}{^{I}}(m)\right),
\]
defined at each point \(m\in\mM\). It is a function of the punctual values \(g(m)\) of the metric, of the matter field \(\Psi(m)\) and of its first derivative \(T_{m} \Psi\). Souriau has shown that General Covariance of the matter Lagrangian implies that its density can be recast as a function of \(\Psi(m)\) and of the conformation \(\bK(m)\) (\autoref{thm:first_gradient}). In order to illustrate the ideas of our new method---that can be extended straightforwardly to higher gradients--- we first produce an alternative simpler proof of Souriau's 1958 \autoref{thm:first_gradient}, but which is valid only on the World tube \(\mW\). This is common in General Relativity when one looks for \emph{interior solutions} of Einstein's equation (see \cite{MTW1973,MG2009}), that is solutions defined \emph{inside matter}.

\begin{proof}[Proof of \autoref{thm:first_gradient}]
  Starting with the variables \((g,\Psi,T\Psi)\), we first construct the coframe \(\mF_{\mat}^{\star}(g,\Psi,T\Psi)\) as in \autoref{subsec:matter_field}.
  Since it amounts to add the covector field \(\UU^\flat(g,\Psi,T\Psi)\) to \(T\Psi\), one can safely replace the variables \((g,\Psi,T\Psi)\) by \((g,\Psi,\mF_{\mat}^{\star})\).

  One can now replace the metric tensor \(g\) by the matrix of the components of the cometric \(g^{-1}\) in the coframe \(\mF_{\mat}^{\star}\). Taking account of \eqref{eq:K}, this amounts to replace the punctual value of the metric tensor \(g\) by the one of the conformation tensor \(\bK\), whose components \((K^{IJ})\) in the rest frame are relativistic invariants.

  Hence, without still having used the general covariance of the Lagrangian, we have been able to recast the function \(L_{0}(g,\Psi,T\Psi)\) as a new function \(L(\Psi,\bK,\mF_{\mat}^{\star})\), depending on the relativistic invariants \(\Psi\) and \(\bK\), and on the general covariant coframe \(\mF_{\mat}^{\star}\). The general covariance of the Lagrangian density hence becomes
  \[
    L(\Psi,\bK,\mF_{\mat}^{\star})
    =
    L(\Psi,\bK,\pull\mF_{\mat}^{\star}),
  \]
  for any orientation-preserving local diffeomorphism \(\varphi\) of \(\univ\).

  Since the action of the (orientation-preserving) diffeomorphism group on the punctual value of the rest frame \(\mF_{\mat}^{\star}\) involves only the standard action of the orientation-preserving linear group \(\varphi\in\GL^{+}(4,\RR)\) which acts transitively on the oriented bases of \(\RR^{4}\), we conclude that \(L\) does not depend on \(\mF_{\mat}^{\star}\). This achieves the proof.
\end{proof}

\begin{rem}
  Note \textit{a posteriori} that the result is still true if one considers covariance under the group of general diffeomorphisms instead of only orientation-preserving diffeomorphisms, because the vector field \(\UU\) is not involved in the formation of the relativistic invariants \(\Psi\) and \(K^{IJ}\). This will not anymore be the case for higher order gradients.
\end{rem}

\begin{rem}\label{rem:PushCometric}
  The choice to select the components of the cometric rather than of the metric is natural. Indeed, since \(T\Psi\) is not invertible, one can pushforward a contravariant tensor, but not a covariant tensor (see \autoref{subsec:diff-action-tensors}).
\end{rem}

\subsection{Second Gradient Relativistic Hyperelasticity}
\label{sec:second_gradient}

We now seek for a general covariant formulation of Relativistic Gradient Hyperelasticity and start with a matter Lagrangian density of general expression
\[
  L_{0}(
  \tensor{g}{_{\mu\nu}}, \partial_{\nu}\tensor{g}{_{\lambda\mu}},\partial_{\mu}\partial_{\nu}\tensor{g}{_{\kappa\lambda}},
  \tensor{\Psi}{^{I}}, \partial_{\lambda}\tensor{\Psi}{^{I}},\partial_{\lambda}\partial_{\mu}\tensor{\Psi}{^{I}}
  ),
\]
defined at each point \(m\in\mM\) of the Universe. This Lagrangian density is a function of the second order jet of the metric \(g\) and of the second order jet of the matter field \(\Psi\). We shall assume, furthermore, that \( \Psi \) is a submersion and \(\ker(T\Psi)\) is timelike.

\begin{thm}\label{thm:second_gradient}
  Consider the Lagrangian
  \[
    \mL^{\text{matter}}[g, \Psi]
    =
    \int
    L_{0}(
    \tensor{g}{_{\mu\nu}}, \partial_{\nu}\tensor{g}{_{\lambda\mu}},\partial_{\mu}\partial_{\nu}\tensor{g}{_{\kappa\lambda}},
    \tensor{\Psi}{^{I}}, \partial_{\mu}\tensor{\Psi}{^{I}},\partial_{\mu}\partial_{\nu}\tensor{\Psi}{^{I}})\vol_{g}
  \]
  depending on the field variables $g$, a Lorentzian metric, and $\Psi \colon \mM \to V$, a submersion such that \(\ker(T\Psi)\) is timelike. Then, if $\mL^{\text{matter}}$ is general covariant under the group of orientation-preserving diffeomorphisms, its Lagrangian density \(L_{0}\) can be recast, at each point of the Wold tube $\mW$, as
  \[
    L_{0} = L\left(\matcomponents{g^{-1}},\matcomponents{\bR^{g}},\Psi,\matcomponents{\Hess^{g} \Psi}\right),
  \]
  for some function \(L\), where \(\bR^{g}\) is the curvature tensor, \(\Hess^{g} \Psi\) is the Hessian tensor of the vector valued function \(\Psi\), and where $\matcomponents{\bt}$ denotes the set of components of a tensor field $\bt$ in $\mF_{\mat}$.
\end{thm}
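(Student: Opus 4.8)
The plan is to mimic the streamlined proof of Souriau's theorem given just above, but now iterating the replacement of jet-data by rest-frame components one order higher. First I would invoke \autoref{thm:rest-frame}: starting from the variables $(g, \Psi, T\Psi)$ — which are already present inside the second-order jet data — one constructs the general covariant coframe $\mF_{\mat}^{\star}(g,\Psi,T\Psi)$ on the World tube $\mW$. As in the first-gradient proof, adjoining $\UU^{\flat}$ to $T\Psi$ is an invertible operation, so one may freely trade $(g,\Psi,T\Psi)$ for $(g,\Psi,\mF_{\mat}^{\star})$ without yet using general covariance. Then the punctual value of $g$ (equivalently $g^{-1}$) gets replaced by its component matrix $\matcomponents{g^{-1}}$ in $\mF_{\mat}$, which — by the same computation as in the first-gradient case, using \eqref{eq:K} — encodes exactly the conformation $\bK$ together with the normalization $g(\UU,\UU)=-1$; these are relativistic invariants.

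The second, genuinely new, step is to handle the \emph{first} and \emph{second} derivatives of $g$ and $\Psi$. The idea is that once the metric is known, the Levi-Civita connection $\nabla^{g}$ is a covariant secondary variable built from $j^{1}g$ (see \autoref{sec:nabla-and-tensors}), so the Christoffel symbols can be substituted for $\partial g$; but $\nabla^{g}$ is not itself a relativistic invariant, so one must express the remaining jet data through genuinely invariant covariants. For the matter field this is clean: $\Hess^{g}\Psi \bydef \nabla^{g}\dd\Psi$ is a general covariant $(0,2)$-tensor valued in $V$, it depends only on $(j^{2}\Psi, j^{1}g)$, and passing from $\partial^{2}\Psi$ to $\Hess^{g}\Psi$ costs only the connection coefficients, which are themselves functions of the already-retained data $(g,\partial g)$ — so one may replace $\partial^{2}\Psi$ by the components $\matcomponents{\Hess^{g}\Psi}$. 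For the metric's second jet one cannot form an invariant linear in $\partial^{2}g$, but the Riemann tensor $\bR^{g}$ is the general covariant combination of $j^{2}g$ that remains after quotienting by the connection terms: concretely, $\partial^{2}g$ and $(\partial g)^{2}$ together determine, and are determined modulo the lower-order data, by $\bR^{g}$ plus the first jet of $g$ in normal-type bookkeeping. So one substitutes $\matcomponents{\bR^{g}}$ for the $\partial^{2}g$ slot. At this stage — still not having used general covariance — the density has been rewritten as
\[
  L_{0} = \widetilde{L}\!\left(\matcomponents{g^{-1}}, \matcomponents{\bR^{g}}, \Psi, \matcomponents{\Hess^{g}\Psi}, \mF_{\mat}^{\star}, \partial \mF_{\mat}^{\star}\right),
\]
where $\partial\mF_{\mat}^{\star}$ records whatever first-order frame data was needed to convert $\partial g$ and the Christoffel symbols; crucially, all arguments except the last two are relativistic invariants.

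The third and final step is to eliminate the frame arguments using general covariance, exactly as in the first-gradient proof, but now I must also kill the \emph{jet} of the frame, which is the main obstacle. Invariance under orientation-preserving diffeomorphisms gives $\widetilde{L}(\dots,\mF_{\mat}^{\star},\partial\mF_{\mat}^{\star}) = \widetilde{L}(\dots,\pull\mF_{\mat}^{\star},\pull(\partial\mF_{\mat}^{\star}))$ for all $\varphi$, with the invariant arguments unchanged. The point is that the pair $(\mF_{\mat}^{\star}, \partial\mF_{\mat}^{\star})$ at a point transforms under the action of the $1$-jet group $G^{2}(4) = \GL^{+}(4,\RR)\ltimes(\text{symmetric second-order part})$ — i.e. one can prescribe independently the frame ($\GL^{+}$ acts transitively on oriented bases) and, by a suitable choice of the second-order Taylor coefficients of $\varphi$, the symmetric part of $\partial\mF_{\mat}^{\star}$, while the antisymmetric part has already been arranged to vanish because $\Hess^{g}\Psi$ is symmetric and the $\UU^{\flat}$-row's antisymmetric part is itself an invariant (a multiple of $\dd\UU^{\flat}$, captured by $\bK$ and $\bR^{g}$ through the structure of $\WW^{\flat}$). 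Hence the frame-jet arguments lie in a single orbit and $\widetilde{L}$ cannot depend on them, leaving $L_{0} = L(\matcomponents{g^{-1}},\matcomponents{\bR^{g}},\Psi,\matcomponents{\Hess^{g}\Psi})$ as claimed. The delicate bookkeeping — verifying that exactly the \emph{symmetric} second-order frame data is both free and sufficient, and that no leftover invariant combination of $\partial\mF_{\mat}^{\star}$ survives outside $\bR^{g}$ and $\Hess^{g}\Psi$ — is where the argument must be carried out with care; this is the relativistic-invariant analogue of the classical fact that the only $\mathrm{Diff}$-covariants of a metric through order two are functions of the metric and its curvature.
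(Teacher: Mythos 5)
Your overall strategy---build the general covariant rest coframe via \autoref{thm:rest-frame}, trade jet data for components of covariant tensors, then kill the residual non-invariant data by transitivity of a jet-group action---is the paper's strategy, but there is a genuine gap in your handling of the second derivatives of the metric. You assert, \emph{before} invoking general covariance, that $\partial^{2}g$ together with $(\partial g)^{2}$ is ``determined modulo the lower-order data by $\bR^{g}$ plus the first jet of $g$'', and on that basis you substitute $\matcomponents{\bR^{g}}$ for the $\partial^{2}g$ slot. That is not a valid change of variables: at a point, the data $\set{\tensor{\Gamma}{^{\kappa}_{\lambda\mu}},\partial_{\nu}\tensor{\Gamma}{^{\kappa}_{\lambda\mu}}}$ contains, besides the curvature, the totally symmetric part $\tensor{\Gamma}{^{\kappa}_{(\lambda\mu,\nu)}}$ and $\tensor{\Gamma}{^{\kappa}_{\lambda\mu}}$ itself, neither of which is a function of $(g,\bR^{g})$. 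Dropping them loses information, and they cannot be recovered from the arguments you retain: $\partial\mF_{\mat}^{\star}$ involves only $\partial^{2}\Psi$ and $\partial g$, never $\partial^{2}g$, so the residual gauge part of the metric's second jet is simply absent from your $\widetilde{L}$. These components must instead be \emph{eliminated using general covariance}, and that requires diffeomorphisms with prescribed \emph{third}-order Taylor coefficients; your final step invokes only the $2$-jet group $G^{2}(4)$, which suffices to normalize $\tensor{\Gamma}{^{\kappa}_{\lambda\mu}}$ but not $\partial_{\nu}\tensor{\Gamma}{^{\kappa}_{\lambda\mu}}$. The paper's proof keeps $\Gamma$ and $\partial\Gamma$ as explicit variables and disposes of them with the normalization result \autoref{thm:normalization}: a diffeomorphism with $\varphi(m)=m$ and $T_{m}\varphi=\id$ fixes every general covariant tensor punctually, and its second- and third-order Taylor coefficients can be chosen so that $\tensor{\Gamma}{^{\kappa}_{\lambda\mu}}(\pull g)=0$ and $\partial_{\nu}\tensor{\Gamma}{^{\kappa}_{\lambda\mu}}(\pull g)=-\tfrac{1}{3}\big(\tensor{R}{^{\kappa}_{\lambda\mu\nu}}+\tensor{R}{^{\kappa}_{\mu\lambda\nu}}\big)$. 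After that, only the punctual frame $\mF_{\mat}$ remains and the $\GL^{+}(4,\RR)$-transitivity argument finishes the proof---there is no frame-jet left to worry about.

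Two smaller points. Your parenthetical claim that the antisymmetric part of the $\UU^{\flat}$-row of $\partial\mF_{\mat}^{\star}$ is ``captured by $\bK$ and $\bR^{g}$'' is incorrect: since $\UU^{\flat}$ is an algebraic function of $g$ and $T\Psi$, and $\nabla^{g}g=0$, the rest-frame components of $\nabla^{g}\UU^{\flat}$ (symmetric and antisymmetric alike) are functions of $\matcomponents{g^{-1}}$ and $\matcomponents{\Hess^{g}\Psi}$---they encode $\bA$ and $\bL$---and have nothing to do with the curvature. Finally, the ``delicate bookkeeping'' you defer at the end is precisely the content of the paper's normalization theorem; without carrying it out, the proof is incomplete.
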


\begin{rem}
  Note that, in contrast with \autoref{thm:first_gradient}, the result does not extend to covariance under the full group of general diffeomorphisms. Components that involve the time-like vector field \(\UU\) an even number of time are invariant under the group of general diffeomorphisms, but components that involve it an odd number of times change sign when the orientation is reversed.
\end{rem}

\begin{proof}
  The proof is very similar to our new proof of \autoref{thm:first_gradient}.

  \begin{enumerate}[wide]
    \item
          Starting with the variables
          \[
            \left(
            \tensor{g}{_{\mu\nu}}, \partial_{\nu}\tensor{g}{_{\lambda\mu}},\partial_{\mu}\partial_{\nu}\tensor{g}{_{\kappa\lambda}},
            \tensor{\Psi}{^{I}}, \partial_{\mu}\tensor{\Psi}{^{I}},\partial_{\mu}\partial_{\nu}\tensor{\Psi}{^{I}}
            \right),
          \]
          a first change of variables consists in substituting the first and second derivatives of \(g_{\mu\nu}\) by the Christoffel symbols of the corresponding Levi-Civita connection and their derivatives (see \autoref{sec:nabla-and-tensors}). The variables are now
          \[
            \left(
            \tensor{g}{_{\mu\nu}},
            \tensor{\Gamma}{^{\lambda}_{\mu\nu}}, \partial_{\nu}\tensor{\Gamma}{^{\kappa}_{\lambda\mu}},
            \tensor{\Psi}{^{I}}, \partial_{\mu}\tensor{\Psi}{^{I}}, \partial_{\mu}\partial_{\nu}\tensor{\Psi}{^{I}}
            \right).
          \]

    \item
          The Levi-Civita connection allows one to define the metric covariant derivative \(\nabla^{g}\) of a tensor field. In particular, one can define the Hessian of the scalar function \(\Psi^{I}\) (see \autoref{sec:nabla-and-tensors})
          \[
            \Hess^{g} \Psi^{I}
            \bydef
            \nabla^{g} \dd \Psi^{I},
          \]
          a second-order covariant symmetric tensor field, with components
          \[
            \tensor{(\Hess^{g}\Psi)}{^{I}_{\mu\nu}} = \partial_{\mu}\partial_{\nu} \Psi^{I} - \tensor{\Gamma}{^{\rho}_{\mu\nu}}\partial_{\rho}\Psi^{I}.
          \]
          This allows us to replace (using a change of variables) the second order derivatives of \(\Psi^{I}\) by the components of its covariant Hessian. The variables are now
          \[
            \left(
            \tensor{g}{_{\mu\nu}},
            \tensor{\Gamma}{^{\lambda}_{\mu\nu}}, \partial_{\nu}\tensor{\Gamma}{^{\kappa}_{\lambda\mu}},
            \tensor{\Psi}{^{I}}, \partial_{\mu}\tensor{\Psi}{^{I}}, \tensor{(\Hess^{g} \Psi)}{^{I}_{\mu\nu}}
            \right).
          \]

    \item
          So far, we have not used the general covariance hypothesis. At this stage, we have replaced most of the jet variables by the local components of general covariant tensors, at the exception of \(\tensor{\Gamma}{^{\lambda}_{\mu\nu}}\) and \(\partial_{\nu}\tensor{\Gamma}{^\kappa_{\lambda\mu}}\).
          We will now use the general covariance of the Lagrangian and the normalization result \autoref{thm:normalization} established in \autoref{subsec:diff-action-jets}  in order to replace these remaining jet variables by components of general covariant tensors.

          Consider the subgroup of \(\Diff^{+}(\univ)\) made of diffeomorphisms with trivial linear part, \textit{i.e.}
          \[
            \set*{\varphi\in\Diff(\univ)\colon\varphi(m)=m\text{ and }T_m\varphi=\mathrm{id}}.
          \]
          If \(\bt(g, \Psi)\) is a general covariant tensor field, then for such a diffeomorphism \(\varphi\), one has the punctual invariance property
          \[
            \bt(\pull g,\pull\Psi)
            =
            \pull\bt(g,\Psi)
            =
            \bt(g,\Psi),
          \]
          because punctually diffeomorphisms act on tensors via their linear parts.
          By \autoref{thm:normalization}, there exists such a diffeomorphism with
          \[
            \tensor{\Gamma}{^{\lambda}_{\mu\nu}}(\pull g)=0,
            \quad\text{and}\quad
            \partial_\nu\tensor{\Gamma}{^\kappa_{\lambda\mu}}(\pull g)
            =
            -\frac{1}{3} \big(\tensor{R}{^\kappa_{\lambda\mu\nu}}(\pull g) + \tensor{R}{^\kappa_{\mu\lambda\nu}}(\pull g)\big).
          \]
          Since all other variables are fixed by \(\varphi\), under the covariance assumption, one can replace the variables
          \(\tensor{\Gamma}{^{\lambda}_{\mu\nu}}\) by \(0\) and \(\tensor{\Gamma}{^{\kappa}_{\lambda\mu,\nu}}=\partial_\nu\tensor{\Gamma}{^\kappa_{\lambda\mu}}\) by the components \(\tensor{R}{^{\kappa}_{\lambda\mu\nu}}\) of the curvature tensor \(\bR^{g}\) (which is itself a function of \(\tensor{\Gamma}{^{\kappa}_{\lambda\mu}}\) and of \(\partial_{\nu}\tensor{\Gamma}{^{\kappa}_{\lambda\mu}}\)).
          The variables are now
          \[
            \left(
            g,
            \bR^{g},
            \Psi,T\Psi,\Hess^{g}\Psi
            \right).
          \]

    \item
          As in the proof of \autoref{thm:first_gradient}, we can replace (using the metric \(g\)) the variables \(\partial_{\mu}\tensor{\Psi}{^{I}}\) by the coframe \(\mF_{\mat}^{\star}\), or equivalently, by the frame \(\mF_{\mat}\). Finally, we can replace all the (general covariant) tensor fields by their components in the frame \(\mF_{\mat}\), which are relativistic invariants.
          The variables are now
          \[
            \left(
            \matcomponents{g^{-1}},
            \matcomponents{\bR^g},
            \Psi,
            \mF_{\mat},
            \matcomponents{\Hess^g\Psi}
            \right).
          \]

    \item
          To sum up we have recast \(L_0\) as a new Lagrangian density
          \[
            L_{0} = L(\matcomponents{g^{-1}},\matcomponents{\bR^{g}},\Psi,\mF_{\mat},\matcomponents{\Hess^{g}\Psi}),
          \]
          for some function \(L\), depending on some relativistic invariants, and on the general covariant frame \(\mF_{\mat}^{\star}\).
          We conclude, as in the proof of \autoref{thm:first_gradient}, that General Covariance implies that \(L\) does not depend on \(\mF_{\mat}\) and thus that
          \[
            L_{0} = L(\matcomponents{g^{-1}},\matcomponents{\bR^{g}},\Psi,\matcomponents{\Hess^{g}\Psi}),
          \]
          which is obviously a relativistic invariant. This concludes the proof.
          \qedhere
  \end{enumerate}
\end{proof}

We shall now detail these components, using the orthogonal 3+1 decomposition relative to the timelike vector field $\bU$ (see~\cite[Appendix A]{KD2023}).
Note first that
\begin{equation}\label{eq:confo}
  \matcomponents{g^{-1}} =
  \begin{bmatrix}
    -1 & 0   \\
    0  & \bK
  \end{bmatrix},
\end{equation}
where \(\bK\) is the conformation tensor introduced by Souriau. To express the components of the Hessian, we introduce the speed of light \(c\) and some normalizations (which will be justified once a space-time structure is introduced). We write thus
\begin{equation}\label{eq:HessALK}
  \matcomponents{\Hess^g\Psi^I}
  =
  \begin{bmatrix}
    -c^{-2}A^I                      & - c^{-1}\,\pmb{L}^I \\
    - c^{-1}(\pmb{L}^I)^{\intercal} & -\pmb{H}^I
  \end{bmatrix}.
\end{equation}
Here, the matrix-by-block representation~\eqref{eq:HessALK} corresponds to the orthogonal decomposition of the Hessian tensor \(\Hess^g\Psi^I\) relative to \(\UU\). In tensorial notation, it is written as
\[
  \Hess^g\Psi=- \frac{1}{c^{2}} \bA \otimes  \UU^{\flat} \otimes \UU^{\flat}
  + \frac{2}{c}
  \Big( \bL \, T\Psi \otimes \UU^{\flat}\Big)^{(23)}
  -  \bH(\,\cdot\,, T\Psi \,\cdot\,,T\Psi \,\cdot\,) ,
\]
where \((23)\) means the symmetrization over  the second and third factors. In components, we get
\[
  (\Hess^g\Psi^I)_{\mu\nu}=- \frac{1}{c^{2}} A^I U_{\mu} U_{\nu} + \frac{1}{c} {L^I}_{J} \left( {T\Psi^{J}}_{\mu} \, U_{\nu}+{T\Psi^{J}}_{\nu}\, U_{\mu}\right)
  -  \tensor{H}{^{I}_{JK}} {T\Psi^{J}}_{\mu} {T\Psi^{K}}_{\nu},
\]
where \(U_{\mu}=(\UU^{\flat})_{\mu}\).

\begin{itemize}[wide]
  \item
        The first order tensor \(\bA=(A^I)\) is homogeneous to an acceleration and will be called the \textsl{material relativistic acceleration}.
        It has \(3\) independent components,
        \[
          A^I
          \bydef
          - c^{2}\,(\Hess^g\Psi^I) (\UU, \UU)
          = -c^{2}\,(\nabla^{g}_{\UU}\dd\Psi^I)\cdot\UU
          ,
        \]
        and recast (since \(\dd\Psi^I(\UU)=0\)) as
        \[
          A^I = \dd\Psi^I\cdot(\nabla^{g}_{c\, \UU}c\, \UU),
        \]
        or, in tensorial notation, as
        \begin{equation}\label{eq:AHess}
          \bA= T\Psi\cdot \nabla^{g}_{c\, \UU}(c\, \UU).
        \end{equation}
  \item
        The second order tensor \(\bL=(\tensor{L}{^I_J})\) is homogeneous to a spatial gradient of velocity and will be called the \textsl{material relativistic velocity gradient}.
        It has \(9\) independent components,
        \begin{equation}\label{eq:LIJmixte}
          \tensor{L}{^I_J} =(\pmb{L}^{I})_{J}
          \bydef
          - c\,(\Hess^g\Psi^I) (\UU, \EE_{J})
          = -c\,\nabla^{g}_{\EE_{J}}(\dd\Psi^I)\cdot\UU,
        \end{equation}
        and recast (since \(\dd\Psi^I(\UU)=0\)) as
        \[
          \tensor{L}{^I_J}
          =
          \dd\Psi^I\cdot\nabla^{g}_{\EE_{J}}(c\, \UU),
        \]
        or, in tensorial notation, as
        \begin{equation}\label{eq:LHess}
          \bL=T\Psi\,( \nabla^{g} c\, \UU) \, (T\Psi)_{\UU^{\perp}}^{-1}.
        \end{equation}
  \item
        The third order tensor \(\bH=(\tensor{H}{^{I}_{JK}})\) is homogeneous to a strain gradient and will be called the \textsl{material relativistic strain gradient}.
        It satisfies the index symmetry \(\tensor{H}{^{I}_{JK}}=\tensor{H}{^{I}_{KJ}}\) and has hence \(18\) independent components,
        \begin{equation}\label{eq:HIJKmixte}
          \tensor{H}{^{I}_{JK}}=(\pmb{H}^{I})_{JK}
          \bydef
          - (\Hess^g\Psi^I) (\EE_{J}, \EE_{K})
          = -\nabla^{g}_{\EE_{K}}(\dd\Psi^I)\cdot\EE_{J},
        \end{equation}
        and recast (since \(\dd\Psi^I(\EE_{J})=\delta^{I}_{J}\)) as
        \[
          \tensor{H}{^{I}_{JK}}
          =
          \dd\Psi^I\cdot\nabla^{g}_{\EE_{K}}(\EE_{J})
          =
          \dd\Psi^I\cdot\nabla^{g}_{\EE_{J}}(\EE_{K}),
        \]
        or, in tensorial notation, as
        \begin{equation}\label{eq:HHess}
          \bH = -(\nabla^{g}T\Psi) \left(\,\cdot\,, (T\Psi)_{\UU^{\perp}}^{-1} \,\cdot\,,  (T\Psi)_{\UU^{\perp}}^{-1}\,\cdot\, \right)=T\Psi \cdot\nabla^{g}_{\EE_{K}}(\EE_{J})\otimes \dd X^{J} \otimes  \dd X^{K}.
        \end{equation}
\end{itemize}

We will not use the components of the curvature tensor \(\bR^{g}\) for applications in this work. However, the data contained in \(\matcomponents{\bR^{g}}\) will be reorganized into three new relativistic gravitation/matter coupling tensors. These tensors \(\bM_{p}\) represent new coupling invariants for gravitation/matter and could be useful at the astrophysics scale, for the study of neutron stars, of black holes and of dark energy/matter. They vanish for a flat spacetime such as the Minkowski spacetime of Special Relativity. These relativistic invariants are defined as follows.

\begin{itemize}
  \item
        The fourth order tensor \(\bM_4=(\tensor{M}{^I_{JKL}})\) is given, in components, by
        \[
          \tensor{M}{^I_{JKL}} \bydef \bR^{g}(\dd \Psi^{I}, \EE_{J}, \EE_{K}, \EE_{L}) = \tensor{R}{^{\kappa}_{\lambda\mu\nu}}
          {T \Psi^I}_{\kappa} \, E_J^\lambda\, E_K^\mu\, E_L^\nu,
        \]
        or, in tensorial notation, by
        \[
          \bM_{4} = \bR^{g}(T\Psi^{\star} \, \cdot\, , (T\Psi)_{\UU^{\perp}}^{-1} \,\cdot\,, (T\Psi)_{\UU^{\perp}}^{-1} \,\cdot\,, (T\Psi)_{\UU^{\perp}}^{-1} \,\cdot\,  ).
        \]
  \item
        The third order tensor \(\bM_3=(\tensor{M}{^I_{JK}})\) is given, in components, by
        \[
          \tensor{M}{^I_{JK}} \bydef c\, \bR^{g}(\dd \Psi^I, \EE_J, \EE_K,  \UU) = c\, \tensor{R}{^{\kappa}_{\lambda\mu\nu}}
          {T \Psi^I}_{\kappa} \, E_J^\lambda\,E_K^\mu\, U^\nu,
        \]
        or, in tensorial notation, by
        \[
          \bM_{3}=c\, \bR^{g}(T\Psi^{\star}  \, \cdot\, , (T\Psi)_{\UU^{\perp}}^{-1} \,\cdot\,, (T\Psi)_{\UU^{\perp}}^{-1} \,\cdot\,, \UU ).
        \]
  \item
        The second order tensor \(\bM_2=(\tensor{M}{^I_{J}})\) is given, in components, by
        \[
          \tensor{M}{^I_{J}} \bydef \bR^{g}(\dd \Psi^I, c\, \UU, \EE_J, c\, \UU) = c^{2}\, \tensor{R}{^{\kappa}_{\lambda\mu\nu}}
          {T \Psi^I}_{\kappa} \, U^\lambda\, E_J^\mu \,  U^\nu,
        \]
        or, in tensorial notation, by
        \[
          \bM_{2}=c^{2}\, \bR^{g}(T\Psi^{\star} \, \cdot\, , \UU , (T\Psi)_{\UU^{\perp}}^{-1} \,\cdot\,, \UU ).
        \]
\end{itemize}

\begin{rem}
  The Riemann tensor \(g\bR^{g} = (R_{\kappa\lambda\mu\nu})\) having the following symmetries
  \[
    R_{\kappa\lambda\mu\nu} = R_{\mu\nu\kappa\lambda} = -R_{\lambda\kappa\mu\nu} = - R_{\kappa\lambda\nu\mu},
    \quad \text{and} \quad
    R_{\kappa\lambda\mu\nu} + R_{\lambda\mu\kappa\nu} + R_{\mu\kappa\lambda\nu}  = 0,
  \]
  and \(20\) independent components in dimension \(4\), there are thus \(6\) independent components for \(\bM_{4}\), \(8\) for \(\bM_{3}\), and \(6\) for \(\bM_{2}\).
\end{rem}

Reformulating \autoref{thm:second_gradient} using these new secondary variables, we get the following corollary.

\begin{cor}\label{cor:second_gradient}
  If the Lagrangian
  \[
    \mL^{\text{matter}}(g, \Psi)
    =
    \int
    L_{0}(
    \tensor{g}{_{\mu\nu}}, \partial_{\nu}\tensor{g}{_{\lambda\mu}},\partial_{\mu}\partial_{\nu}\tensor{g}{_{\kappa\lambda}},
    \tensor{\Psi}{^{I}}, \partial_{\mu}\tensor{\Psi}{^{I}},\partial_{\mu}\partial_{\nu}\tensor{\Psi}{^{I}}
    )\vol_{g}
  \]
  is general covariant under the group of orientation-preserving diffeomorphisms, then, its Lagrangian density \(L_{0}\) can be written as a function
  \[
    L_{0} = L\big(\Psi, \bK, \bM_{2}, \bM_{3}, \bM_{4}, \bA, \bL, \bH\big)
  \]
  of the vector valued relativistic invariants \(\Psi, \bK, \bM_{2}, \bM_{3}, \bM_{4}, \bA,\bL, \bH \).
\end{cor}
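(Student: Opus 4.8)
This corollary is a mere translation of \autoref{thm:second_gradient} into the secondary variables introduced above, so the plan is to start from its conclusion $L_{0} = L\big(\matcomponents{g^{-1}},\matcomponents{\bR^{g}},\Psi,\matcomponents{\Hess^{g}\Psi}\big)$ and to check that each of the three families of rest-frame components occurring there is, as a tuple of relativistic invariants, mutually determined with the matching subfamily of $\big(\Psi,\bK,\bM_{2},\bM_{3},\bM_{4},\bA,\bL,\bH\big)$. First, by \eqref{eq:confo} one has $\matcomponents{g^{-1}} = \mathrm{diag}(-1,\bK)$, so $\matcomponents{g^{-1}}$ carries exactly the data of $\bK$. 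Second, by the orthogonal $3+1$ decomposition \eqref{eq:HessALK}, the family $\matcomponents{\Hess^{g}\Psi}$ is assembled block by block from $\bA$, $\bL$ and $\bH$ (up to the universal constants $c^{-1},c^{-2}$ and the transposition displayed there), hence is mutually determined with $(\bA,\bL,\bH)$. The variable $\Psi$ already appears in the list and is left untouched. That each of $\bK,\bM_{2},\bM_{3},\bM_{4},\bA,\bL,\bH$ is a relativistic invariant follows, as in the proof of \autoref{thm:second_gradient}, from the fact that these are components, in the general covariant frame $\mF_{\mat}$ (\autoref{thm:rest-frame}), of general covariant tensors.

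The only point requiring work is the identification $\matcomponents{\bR^{g}} \leftrightarrow (\bK,\bM_{2},\bM_{3},\bM_{4})$. I would argue as follows. Lowering the contravariant index of $\bR^{g}$ with $g$, the rest-frame components $\tensor{R}{^{a}_{bcd}}$ of $\bR^{g}$ and the rest-frame components $R_{abcd}$ of the $(0,4)$-tensor $g\bR^{g}$ are related by the invertible matrix $\matcomponents{g}=\mathrm{diag}(-1,\bK^{-1})$; hence, $\bK$ being at hand, knowing $\matcomponents{\bR^{g}}$ amounts to knowing all components $R_{abcd}$ of $g\bR^{g}$ in $\mF_{\mat}$. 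I then classify these according to the number of times the time slot (the $\UU$ direction) occurs among $a,b,c,d$: because $g\bR^{g}$ is antisymmetric in each of its two index pairs, this number is $0$, $1$ or $2$, and the algebraic symmetries of $g\bR^{g}$ (antisymmetry in each pair, exchange of the two pairs, first Bianchi identity) show that each of these classes is spanned by the reference components $R_{PJKL}$, $R_{PJK0}$, $R_{P0J0}$, with $P,J,K,L$ spatial slots and $0$ the time slot. Finally, using $\grad^{g}\Psi^{I}=\sum_{J}K^{IJ}\EE_{J}$ (equivalently $\EE_{J}=(T\Psi)^{-1}_{\UU^{\perp}}\,\partial/\partial X^{J}$), the definitions of $\bM_{4},\bM_{3},\bM_{2}$ take the form $\tensor{M}{^{I}_{JKL}}=\sum_{P}K^{IP}R_{PJKL}$, $\tensor{M}{^{I}_{JK}}=c\sum_{P}K^{IP}R_{PJK0}$ and $\tensor{M}{^{I}_{J}}=c^{2}\sum_{P}K^{IP}R_{P0J0}$; since $\bK$ is invertible these relations invert, so $(\bK,\bM_{2},\bM_{3},\bM_{4})$ determines, and is determined by, the whole of $\matcomponents{\bR^{g}}$. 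The counts $6+8+6=20$ recorded in the Remark just before this corollary confirm that nothing is lost.

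Substituting the three identifications into the conclusion of \autoref{thm:second_gradient} then yields $L_{0}=L(\Psi,\bK,\bM_{2},\bM_{3},\bM_{4},\bA,\bL,\bH)$ with $L$ a function of vector valued relativistic invariants, which is the assertion. The step I expect to be the main obstacle is the one just described: the tensors $\bM_{p}$ contract the \emph{contravariant} slot of $\bR^{g}$ only against $T\Psi^{\star}$, never against $\UU^{\flat}$, so one might fear that the components $\tensor{R}{^{0}_{bcd}}$ are lost; the resolution is precisely to lower that index with $g$ --- turning it into one more covariant slot of $g\bR^{g}$ --- and then to invoke the pair-exchange symmetry, which puts the former contravariant slot on the same footing as the other three. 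One should also verify, slot by slot, that the index symmetries of $\bM_{2},\bM_{3},\bM_{4}$ (accounting for their $6$, $8$ and $6$ independent components) are exactly those inherited from $R_{PJKL}$, $R_{PJK0}$, $R_{P0J0}$, so that the change of variables is genuinely bijective with no hidden constraint.
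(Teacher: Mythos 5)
Your proposal is correct and follows exactly the route the paper intends: the corollary is stated there as a direct reformulation of \autoref{thm:second_gradient}, with \eqref{eq:confo} and \eqref{eq:HessALK} giving the identifications $\matcomponents{g^{-1}}\leftrightarrow\bK$ and $\matcomponents{\Hess^{g}\Psi}\leftrightarrow(\bA,\bL,\bH)$, and the reorganization of $\matcomponents{\bR^{g}}$ into $(\bM_{2},\bM_{3},\bM_{4})$ left implicit apart from the $6+8+6=20$ component count. Your explicit reconstruction of $\matcomponents{\bR^{g}}$ from $(\bK,\bM_{2},\bM_{3},\bM_{4})$ --- lowering the contravariant slot with $\matcomponents{g}=\mathrm{diag}(-1,\bK^{-1})$ and invoking the pair symmetries of $g\bR^{g}$ --- correctly supplies the one step the paper does not write out.
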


\begin{rem}\label{rem:alternative-invariants}
  In the spirit of \autoref{rem:PushCometric}, one can formulate alternative relativistic invariants by using the contravariant forms $(\Hess^g \Psi)^{\sharp}$ and $(\bR_{g})^{\sharp}$ of the tensor fields $\Hess^g \Psi$ and $\bR_{g}$. These alternative invariants are the components of $(\Hess^g \Psi)^{\sharp}$ and $(\bR_{g})^{\sharp}$---up to normalization factors--- in the rest coframe $\mF_{\mathrm{rest}}^{\star} = (-\UU^{\flat}, \dd \Psi^{I})$. For instance, instead of the relativistic invariants $\tensor{H}{^{I}_{JK}}$ (defined by \eqref{eq:HIJKmixte}), one may introduce
  \[
    H^{IJK} \bydef -(\Hess^g \Psi^{I})^{\sharp} (\dd \Psi^{J},\dd \Psi^{K}) = -\Hess^g \Psi^{I}(\grad^{g} \Psi^{J},\grad^{g} \Psi^{K}).
  \]
  But since
  $\grad^{g} \Psi^{J} = K^{IP} \EE_{P}$,
  we get
  \[
    H^{IJK} = K^{JP}K^{KQ}H^{I}_{PQ} = (\bH^{\sharp_{\bK^{-1}}})^{IJK},
  \]
  where the inverse of the conformation, $\bK^{-1}$, is used as a metric to raise indices on the vector space $V$. The $H^{IJK}$ are the components of the \textsl{contravariant material relativistic strain gradient},
  \[
    \bH^{\sharp_{\bK^{-1}}}=(\Hess^g \Psi)^{\sharp} (\cdot, T\Psi^{\star} \cdot, T\Psi^{\star} \cdot).
  \]
  Similarly, instead of the relativistic invariants $\tensor{L}{^{I}_{J}}$ (defined by \eqref{eq:LIJmixte}) one has
  \[
    L^{IJ} \bydef c\, (\Hess^g \Psi^{I})^{\sharp} (-\UU^{\flat},\dd \Psi^{J}) = -c\, \Hess^g \Psi^{I}(\UU,\grad^{g} \Psi^{J}) =  \tensor{L}{^{I}_{P}}K^{PJ},
  \]
  which are the components of the  \textsl{contravariant material relativistic velocity gradient}
  \[
    \bL^{\sharp_{\bK^{-1}}}=\bL\cdot \bK.
  \]
  One also has
  \begin{align*}
    \tensor{M}{^{IJKL}} & \bydef \left(\bR^{g}\right)^{\sharp} (\dd \Psi^{I}, \dd \Psi^{J}, \dd \Psi^{K}, \dd \Psi^{L}) = K^{JP} K^{KQ} K^{LR} \tensor{M}{^I_{PQR}},
    \\
    \tensor{M}{^{IJK}}  & \bydef - c\,\left(\bR^{g}\right)^{\sharp} (\dd \Psi^{I}, \dd \Psi^{J}, \dd \Psi^{K}, -\UU^{\flat}) =  K^{JP} K^{KQ} \tensor{M}{^I_{PQ}},
    \\
    \tensor{M}{^{IJ}}   & \bydef c^{2}\,\left(\bR^{g}\right)^{\sharp} (\dd \Psi^{I}, \UU^{\flat}, \dd \Psi^{K}, \UU^{\flat}) = K^{JP} \tensor{M}{^I_{P}}.
  \end{align*}
  which are the components of $(\bM_{4})^{\sharp_{\bK^{-1}}}$, $(\bM_{3})^{\sharp_{\bK^{-1}}}$ and $(\bM_{2})^{\sharp_{\bK^{-1}}}=\bM_{2}\cdot \bK$, respectively.
\end{rem}

\subsection{Gradient Fluids}
\label{sec:gradient_fluids}

The classical (3D) gradient theory for fluids~\cite{FG2006,Gou2019} involves only the secondary variable $\rho$ (the fluid mass density) and considers its gradient \(\grad \rho\), or \(\grad \ln \rho\), as the main variable for constitutive modeling. In the present relativistic framework, we shall consider, as the fundamental secondary variable, the rest mass density
\begin{equation}\label{eq:rhorHere}
  \rho_{r}(g, \Psi, T\Psi)
  =
  (\rho_{\circ} \circ \Psi)
  \sqrt{\det(K^{IJ})},
  \quad \text{where} \quad
  \bK=T\Psi\, g^{-1} \, T\Psi^{\star},
\end{equation}
and \(\rho_{\circ}\) is the mass density on the body \(\body\) (a parameter). The function $\rho_{r}$ is only defined on the World tube \(\mW\) and can be used only to describe \emph{interior solutions}. The vector field \(\grad^{g}( \ln\rho_r)\) is a function of the metric \(g\) and the second order jet of the matter field \(\Psi\). Note that \(\rho_{r}\) is a relativistic invariant but \(\grad^{g}( \ln\rho_r)\) is not; it is only a relativistic covariant, meaning that
\[
  \grad^{g}( \ln\rho_r)(\varphi^{*}g, \varphi^{*}\Psi, \varphi^{*}T\Psi) = \varphi^{*} \left(\grad^{g}( \ln\rho_r)(g, \Psi, T\Psi)\right),
\]
for every (local) diffeomorphism $\varphi$. However, the components of this vector field in the matter frame \(\mF_{\mathrm{rest}}\)
\[
  h^{0} \bydef -\UU^{\flat} \cdot \grad^{g}(\ln\rho_r) = - \dd(\ln\rho_r) \cdot \UU,
  \qquad
  h^{I} \bydef \dd\Psi^{I} \cdot \grad^{g}( \ln\rho_r) = \dd(\ln\rho_r) \cdot \grad^{g} \Psi^{I}
\]
are relativistic invariants. This means that one can formulate a general covariant gradient theory for relativistic fluids using the following matter Lagrangian density
\[
  L = \rho_{r} c^{2} + E(\rho_{r}, h^{0}, h^{I}),
\]
where the pure mass term \(\rho_{r} c^{2}\) models dust \cite{Sou1958,Sou1964}.

By corollary~\ref{cor:second_gradient}, the Lagrangian density \(L\) should be expressible using the relativistic invariants \(\Psi\), \(\bK\), \(\bA\), \(\bL\), \(\bH\). This is indeed the case, and we will summarize this in the following lemma.

\begin{lem}\label{lem:gradient-fluids}
  Let \((X^{I})\) be the canonical coordinate system on \(V=\RR^{3}\).
  Given a mass measure
  \[
    \mu=\rho_{\circ} \dd X^{1}\wedge \dd X^{2} \wedge \dd X^{3}
  \]
  on the body \(\body\), we have
  \[
    h^{0} =  -\frac{1}{c} \tr \bL \quad \text{and} \quad \pmb{h} = \left(\dd (\ln \rho_{\circ}) \circ \Psi\right) \cdot \bK -  \left(\tr_{12} \bH\right)\cdot \bK,
  \]
  where $\pmb{h} = (h^{I})$, \(\bK=(K^{IJ})\) is the conformation and \(\tr_{12} \bH=(\tensor{H}{^{J}_{JI}})\).
\end{lem}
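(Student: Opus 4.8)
The plan is a direct computation from the closed form \eqref{eq:rhor} of $\rho_{r}$: differentiate $\ln\rho_{r}$ and extract the two frame components by pairing the covector $\dd(\ln\rho_{r})$ with $\UU$ and with $\grad^{g}\Psi^{I}$, using $h^{0}=-\dd(\ln\rho_{r})\cdot\UU$ and $h^{I}=\dd(\ln\rho_{r})\cdot\grad^{g}\Psi^{I}$. I will repeatedly use the identity $\grad^{g}\Psi^{I}=K^{IJ}\EE_{J}$, which is immediate from $\grad^{g}\Psi^{I}=g^{-1}\dd\Psi^{I}$, the duality relations \eqref{eq:E_I}, and the block form \eqref{eq:confo} of the components of $g^{-1}$ in $\mF_{\mat}$ (equivalently: $\UU^{\flat}(\grad^{g}\Psi^{I})=\dd\Psi^{I}(\UU)=0$ and $\dd\Psi^{K}(\grad^{g}\Psi^{I})=K^{KI}$). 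Taking the logarithm of \eqref{eq:rhor} splits $\dd(\ln\rho_{r})$ into two pieces. The piece coming from $\ln(\rho_{\circ}\circ\Psi)$ is elementary: by the chain rule it equals $(\dd(\ln\rho_{\circ})\circ\Psi)\cdot T\Psi$, which annihilates $\UU$ (as $\UU$ spans $\ker T\Psi$), hence contributes nothing to $h^{0}$, and which, paired with $\grad^{g}\Psi^{I}=K^{IJ}\EE_{J}$ and using $T\Psi(\EE_{J})=\partial/\partial X^{J}$, produces exactly the term $(\dd(\ln\rho_{\circ})\circ\Psi)\cdot\bK$ in $\pmb h$.

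The piece coming from $\tfrac12\ln\det(K^{IJ})$ is the core. Jacobi's formula for the logarithmic derivative of a determinant gives $\dd(\ln\det(K^{IJ}))=K_{IJ}\,\dd K^{IJ}$, where $(K_{IJ})$ is the inverse matrix of the (positive definite, by the perfect-matter hypothesis) matrix $(K^{IJ})$. Since $K^{IJ}=g^{-1}(\dd\Psi^{I},\dd\Psi^{J})$ and the Levi-Civita connection is metric, $\nabla^{g}g^{-1}=0$, so differentiation along any vector $X$ gives
\[
  \dd K^{IJ}(X)=\Hess^{g}\Psi^{I}(X,\grad^{g}\Psi^{J})+\Hess^{g}\Psi^{J}(X,\grad^{g}\Psi^{I}),
\]
and thus $\dd K^{IJ}$ is built entirely from the covariant Hessian $\Hess^{g}\Psi$, whose $3+1$ decomposition relative to $\UU$ is already recorded in \eqref{eq:HessALK}.

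It then remains to plug $X=\UU$ and $X=\grad^{g}\Psi^{I}$ into this formula, substitute $\grad^{g}\Psi^{J}=K^{JP}\EE_{P}$, and contract with $\tfrac12 K_{IJ}$. For $X=\UU$ one uses $\Hess^{g}\Psi^{I}(\UU,\EE_{P})=-\tfrac1c\tensor{L}{^{I}_{P}}$ from \eqref{eq:LIJmixte} and collapses the conformation contractions via $K_{IJ}K^{JP}=\delta^{P}_{I}$; together with the vanishing of the first piece on $\UU$, this delivers $h^{0}$ as the stated multiple of $\tr\bL=\tensor{L}{^{P}_{P}}$. For $X=\grad^{g}\Psi^{I}=K^{IP}\EE_{P}$ one uses $\Hess^{g}\Psi^{I}(\EE_{P},\EE_{Q})=-\tensor{H}{^{I}_{PQ}}$ from \eqref{eq:HIJKmixte}, collapses two conformation contractions, and then invokes the lower-index symmetry $\tensor{H}{^{I}_{PQ}}=\tensor{H}{^{I}_{QP}}$ to recognise the surviving contraction $\tensor{H}{^{Q}_{QP}}=(\tr_{12}\bH)_{P}$; adding the $\rho_{\circ}$-contribution computed above yields $\pmb h=(\dd(\ln\rho_{\circ})\circ\Psi)\cdot\bK-(\tr_{12}\bH)\cdot\bK$.

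There is no genuine obstacle here: the proof is bookkeeping on top of the chain rule, Jacobi's formula, and metric compatibility. The two points that demand care are (i) that $\nabla^{g}g^{-1}=0$ is exactly what turns the raw second derivatives $\partial_{\mu}\partial_{\nu}\Psi^{I}$ into the covariant Hessian, which is why only the general covariant tensors $\bL$ and $\bH$ — and not unstructured second-jet data — can enter the answer, in line with \autoref{cor:second_gradient}, even though $\grad^{g}(\ln\rho_{r})$ itself is only a relativistic covariant and not an invariant; and (ii) careful tracking of index positions and of the normalization powers of $c$ fixed in \eqref{eq:HessALK}, so that the repeated contractions with $(K_{IJ})$ and $(K^{IJ})$ collapse precisely to the traces $\tr\bL$ and $\tr_{12}\bH$ with a single residual factor $\bK$ acting on the matter indices, rather than to some other admissible contraction.
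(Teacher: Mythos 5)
Your proposal is correct and follows essentially the same route as the paper's own proof: split $\dd(\ln\rho_r)$ via \eqref{eq:rhor}, apply Jacobi's formula and metric compatibility ($\nabla^{g}g^{-1}=0$) to reduce $\dd K^{IJ}$ to the covariant Hessian, then contract in the rest frame using $\grad^{g}\Psi^{I}=K^{IJ}\EE_{J}$ together with the normalizations \eqref{eq:LIJmixte} and \eqref{eq:HIJKmixte}. The one point worth re-checking (it is shared with the paper's own write-up) is the final sign of $h^{0}$: the computation yields $\dd(\ln\rho_r)\cdot\UU=-\tfrac{1}{c}\tr\bL$, so the definition $h^{0}=-\dd(\ln\rho_r)\cdot\UU$ taken literally gives $+\tfrac{1}{c}\tr\bL$ rather than the stated $-\tfrac{1}{c}\tr\bL$.
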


\begin{rem}
  For an homogeneous fluid (\(\rho_{\circ} =  \text{constant}\)), we get simply
  \[
    \pmb{h} = -\left(\tr_{12} \bH\right) \cdot \bK , \qquad h^{I}= -\tensor{H}{^{J}_{JL}}K^{LI}.
  \]
\end{rem}

\begin{proof}[Proof of lemma~\ref{lem:gradient-fluids}]
  By~\eqref{eq:rhorHere} , we have
  \[
    \dd(\ln\rho_r) = \dd(\ln\rho_{\circ}\circ\Psi) + \frac{1}{2} \dd(\ln\det(K^{IJ})).
  \]
  But
  \[
    \dd(\ln\det(K^{IJ})) = (K^{-1})_{IJ}\dd(K^{IJ}).
  \]
  Now, for any vector field $\XX$, one has
  \begin{align*}
    \dd K^{IJ}\cdot\XX & = \nabla_{\XX}^{g}\big(g^{-1}(\dd \Psi^{I}, \dd \Psi^{J})\big)                                                         \\
                       & = g^{-1}\big(\nabla^{g}_{\XX} \dd\Psi^{I}, \dd\Psi^{J}\big) + g^{-1}\big(\dd\Psi^{I}, \nabla^{g}_{\XX}\dd\Psi^{J}\big) \\
                       & = \Hess^g\Psi^I(\XX, \grad^{g}\Psi^{J}) + \Hess^g\Psi^J(\XX,\grad^{g}\Psi^{I}).
  \end{align*}
  But
  \[
    \XX = X^{0}\UU + X^{L} \EE_{L}, \quad \grad^{g}\Psi^{I} = \bK^{IM} \EE_{M},
  \]
  with
  \[
    \Hess^g\Psi^I(\UU,\EE_{M}) = -\frac{1}{c} \, \tensor{L}{^{I}_{M}}, \qquad \Hess^g\Psi^I(\EE_{L},\EE_{M}) = -\tensor{H}{^{I}_{LM}},
  \]
  and thus
  \[
    \Hess^g\Psi^I(\XX, \grad^{g}\Psi^{J}) = -\frac{1}{c} \, X^{0} K^{JM} \tensor{L}{^{I}_{M}} -  X^{L} K^{JM}\tensor{H}{^{I}_{LM}}.
  \]
  We get therefore
  \begin{align*}
    \dd(\ln\det(K^{IJ}))\cdot\XX & = (K^{-1})_{IJ}\dd(K^{IJ})\cdot\XX                                                                                          \\
                                 & = (K^{-1})_{IJ} \left( -\frac{1}{c} \, X^{0} K^{JM} \tensor{L}{^{I}_{M}} -  X^{L} K^{JM}\tensor{H}{^{I}_{LM}} \right)       \\
                                 & \quad + (K^{-1})_{IJ} \left( -\frac{1}{c} \, X^{0} K^{IM} \tensor{L}{^{J}_{M}} -  X^{L} K^{IM}\tensor{H}{^{J}_{LM}} \right) \\
                                 & = -2 \left( \frac{1}{c} \, X^{0} \tensor{L}{^{M}_{M}} + X^{L} \tensor{H}{^{M}_{LM}} \right),
  \end{align*}
  from which we deduce that
  \[
    \frac{1}{2} \dd(\ln\det(K^{IJ})).\UU = -\frac{1}{c} \tensor{L}{^{M}_{M}},
  \]
  and
  \[
    \frac{1}{2} \dd(\ln\det(K^{IJ})). \grad^{g}\Psi^{N} = -K^{NL}\tensor{H}{^{M}_{LM}} = - \tensor{H}{^{M}_{ML}}K^{LN}.
  \]

  By the way, we have
  \[
    \dd(\ln\rho_{\circ}\circ\Psi).\UU = 0, \quad \text{and} \quad \dd(\ln\rho_{\circ}\circ\Psi).\grad^{g}\Psi^{N} = (\dd \ln\rho_{\circ})_{L}K^{LN}.
  \]
  We get thus finally
  \[
    h^{0} =  -\frac{1}{c} \tr \bL \quad \text{and} \quad \pmb{h} = h^{N}\EE_{N} = \left((\dd \ln \rho_{\circ} )\circ \Psi\right) \cdot\bK  - \left( \tr_{12} \bH\right)\cdot \bK.
    \qedhere
  \]
\end{proof}

\begin{rem}
  Different relativistic invariants can be obtained, if one starts with \(\dd (\ln \rho_{r})\) rather than \(\grad^{g}( \ln\rho_r)\). In that case, one gets
  \[
    k_{0} = \tr \bL, \quad \text{and} \quad \pmb{k} = (\dd \ln \rho_{\circ} )\circ \Psi  - \tr_{12} \bH.
  \]
\end{rem}

\section{Time, Space and observer frames}
\label{sec:observer-frames}
\subsection{Observer}
In General Relativity, in order to observe the motion of matter, it is necessary to introduce an arbitrary \textsl{time function} \(\hat{t}\) on the Universe \((\univ,g)\), that is a smooth (local) submersion
\[
  \hat{t}\colon\univ\to\RR,
\]
defined on $\univ$, whose gradient is everywhere time-like. This function defines a foliation (called a \emph{spacetime structure} or \((3+1)\)-structure \cite{ADM1962,Yor1979,Gou2012}) of \(\mM\) by spacelike hypersurfaces
\[
  \mathcal{E}_{t} := \set{m\in\mM;\;\hat{t}(m)=t}.
\]
Let \(\bN\) be the future-oriented (\textit{i.e.} $g(\bN,\grad^{g}\hat{t})<0$) normal vector field to \(\lab_{t}\). Then, the orthogonal decomposition relative to $\bN$ of the metric $g$ is given by
\[
  g = -\bN^{\flat} \otimes \bN^{\flat}+\gE,
  \quad\text{with}\quad
  \gE\,\bN=0.
\]
The degenerate metric \(\mathfrak{g}\), when restricted to each hypersurface \(\mathcal{E}_{t}\), is a Riemannian metric and coincides with the restriction of the Lorentzian metric \(g\) to \(\mathcal{E}_{t}\)~\cite[Page 6]{Lic1955}.

We shall assume furthermore that $g(\UU,\grad^{g}\hat{t})<0$, where the timelike vector field $\UU$ was defined in \autoref{subsec:matter_field} and is related to matter. This means that $\grad^{g}\hat{t}$ and $\UU$ define the same time-orientation. This allows to define the generalized \emph{Lorentz factor} as the scalar field \cite{Gou2012}
\[
  \gamma := -g(\UU,\bN),
\]
and the \textsl{spatial velocity} \(\uu\), defined by the orthogonal decomposition of \(\UU\) relative to \(\bN\) (see~\cite{Sou1958,Sou1964,Gou2012,KD2023}), given by
\begin{equation}\label{eq:Uu}
  \UU
  =
  \gamma\left(
  \bN
  +
  \frac{\uu}{c}
  \right),
  \qquad\text{with}\qquad
  g(\uu, \bN)=0.
\end{equation}
Since \(g(\UU,\UU)=g(\bN,\bN)=-1\), one recovers the familiar expression
\[
  \gamma=\frac{1}{\sqrt{1-\dfrac{\norm{\uu}_{g}^{2}}{c^{2}}}}.
\]

Let us now consider the spacelike hypersurfaces of the World tube
\[
  \Omega_{t}
  \bydef
  \mW\cap\lab_t,
\]
that play the same role as the configuration manifolds in Classical Continuum Mechanics. Note that whereas the restriction \(\Psi_{t}\colon\Omega_{t}\to\body\) is not assumed (\textit{a priori}) bijective, its differential \(T\Psi_{t}\) is always a linear isomorphism. Since \(\lab_t\) is orthogonal to \(\bN\), at a point \(m\) of \(\Omega_{t}\) one has the orthogonal decomposition
\[
  T\Psi
  =
  - (T\Psi\cdot \bN) \otimes \bN^\flat
  +
  T\Psi_{t},
\]
where $T\Psi_{t}$ is the restriction of $T\Psi$ to the spacelike vector space $T\Omega_{t}$. We denote by \(\bF_m\) the linear map \(T_{\Psi(m)}\body\to T_{m}\Omega_{t}\) (the \textsl{deformation gradient}, see~\cite{KD2023}) defined by
\[
  \bF_m
  \bydef
  (T_m\Psi_t)^{-1}.
\]

\begin{rem}
  The deformation gradient $\bF$ corresponds to the right inverse $(T\Psi)_{\bN^{\perp}}^{-1}$, which is characterized by
  \[
    T\Psi\cdot (T\Psi)_{\bN^{\perp}}^{-1} = \id_{V}, \quad \text{and} \quad \bN^{\flat} \cdot (T\Psi)_{\bN^{\perp}}^{-1} = 0.
  \]
  Similarly, the right inverse $(T\Psi)_{\UU^{\perp}}^{-1}$ of the linear tangent map $T\Psi$ is the unique solution of
  \[
    T\Psi\cdot (T\Psi)_{\UU^{\perp}}^{-1} = \id_{V}, \quad \text{and} \quad \UU^{\flat} \cdot (T\Psi)_{\UU^{\perp}}^{-1} = 0.
  \]
  The relation between these two right inverses is then given by
  \[
    (T\Psi)_{\UU^{\perp}}^{-1}=\UU \otimes (\UU^{\flat}\cdot \bF) + \bF.
  \]
\end{rem}

Using the fact that
\[
  T\Psi\cdot\UU = \gamma \left( T\Psi\cdot\bN + \frac{1}{c}\, T\Psi\cdot\uu \right) = \gamma \left( T\Psi\cdot\bN + \frac{1}{c}\, T\Psi_{t} \cdot\uu \right) = 0,
\]
since \(\uu\) is tangent to \(\Omega_t\), we get
\[
  T\Psi\cdot \bN  = -\frac{1}{c}\, \bF^{-1}\uu.
\]
The orthogonal decomposition of $T\Psi$ relative to $\bN$ recast thus as
\[
  T\Psi = \frac{1}{c} (\bF^{-1}\uu) \otimes \bN^\flat + \bF^{-1},
  \qquad
  \bF^{-1}\bN=0.
\]

The column vectors \(\FF_J\bydef\bF\cdot \partial/\partial X^{J}\) of \(\bF\) in the canonical basis of \(\RR^3\) are characterized by the properties:
\begin{equation}\label{eq:F_I}
  \bN^\flat(\FF_J)=0
  \quad\text{and}\quad
  \dd\Psi^I_{t}(\FF_J)=\tensor{\delta}{^I_J}.
\end{equation}
We shall define the \textsl{observer frame} as
\[
  \mF_\mathrm{obs}
  \bydef
  (\bN,\FF_1,\FF_2,\FF_3).
\]
Beware that the vector fields in this frame are \emph{not} general covariant of \(g\) and \(\Psi\).

\begin{rem}
  The frame \(\mF_{\mathrm{rest}}\) introduced in theorem~\ref{thm:rest-frame} is \emph{linked to the matter} and is thus the \emph{rest frame} of the matter (meaning that the spatial velocity \(\uu = 0\) and the Lorentz factor \(\gamma = 1\) in this frame). In General Relativity, any frame
  \[
    \mF \bydef \left(\bN,\ee_{i} \right), \quad \text{with} \quad g(\bN,\ee_{i}) = 0,
  \]
  respecting the orthogonal \((3+1)\)-decomposition induced by a time function is called an \textsl{Eulerian frame}~\cite{Gou2012}. Our observer frame $\mF_\mathrm{obs}$ is thus a particular choice of an Eulerian frame.
\end{rem}

\subsection{Rewriting the invariants in the observer frame}

Rewriting the vector valued invariants $\bA$, $\bL$ and $\bH$ using the components of the Hessian $\Hess^g\Psi^{I}$ in the observer frame $(\bN, \FF_{1}, \FF_{2}, \FF_{3})$ will give more mechanistic expressions than the initial expressions \eqref{eq:AHess}, \eqref{eq:LHess} and \eqref{eq:HHess}. These expressions will be used to compute their classical limit, and to recover the invariants of classical second gradient hyperelasticity. However, in order to get a compact statement and avoid enlarging the formulas, we will first keep the vector field \(\UU\) and express the invariants in the frame $(\UU, \FF_{1}, \FF_{2}, \FF_{3})$ rather than $(\bN, \FF_{1}, \FF_{2}, \FF_{3})$, as illustrated in the following result.

\begin{lem}\label{lem:hessian}
  We have
  \[
    \left\{
    \begin{aligned}
       & A^{I} = \dd\Psi^I\cdot\nabla^{g}_{c\UU}(c\UU),
      \\
       & \tensor{L}{^{I}_{J}} = \dd\Psi^I\cdot\nabla^{g}_{\FF_{J}}(c\UU) + \frac{1}{c} A^{I}\UU^\flat(\FF_J),
      \\
       & \tensor{H}{^{I}_{JK}} = \dd\Psi^I\cdot\nabla^{g}_{\FF_{K}}(\FF_{J}) +  \frac{1}{c}\left(\tensor{L}{^{I}_{J}}\, \UU^\flat(\FF_K)+\tensor{L}{^{I}_{K}}\, \UU^\flat(\FF_J) \right)+\frac{1}{c^2} A^{I}\,\UU^\flat(\FF_J)\,\UU^\flat(\FF_K),
    \end{aligned}
    \right.
  \]
  or, in tensorial notation
  \[
    \left\{
    \begin{aligned}
      \bA & = T\Psi\cdot\nabla^{g}_{c\UU}(c\UU),
      \\
      \bL & = T\Psi \cdot (\nabla^{g} c\UU) \cdot \bF + \frac{1}{c} \bA \otimes (\UU^\flat \bF),
      \\
      \bH & = (T\Psi\cdot\nabla^{g}_{\FF_{K}}(\FF_{J})) \otimes \dd X^{J} \otimes \dd X^{K} + \frac{2}{c} \big(\bL \otimes (\UU^\flat\bF)\big)^{(23)} + \frac{1}{c^{2}} \bA \otimes (\UU^\flat\bF) \otimes (\UU^\flat\bF),
    \end{aligned}
    \right.
  \]
  where $(\cdot)^{(23)}$ means the symmetrization over the second and third factors.
\end{lem}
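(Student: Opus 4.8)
\emph{Proof proposal.} The plan is to read all three formulas as pointwise statements about the single symmetric tensor field $\Hess^{g}\Psi^{I}=\nabla^{g}\dd\Psi^{I}$, obtained by re-expressing its components in the frame $(\UU,\FF_{1},\FF_{2},\FF_{3})$ in terms of its components in the rest frame $\mF_{\mat}=(\UU,\EE_{1},\EE_{2},\EE_{3})$. The bridge between the two frames is the relation $(T\Psi)_{\UU^{\perp}}^{-1}=\UU\otimes(\UU^{\flat}\cdot\bF)+\bF$ established above: evaluating it on $\partial/\partial X^{J}$ and using $\EE_{J}=(T\Psi)_{\UU^{\perp}}^{-1}\cdot\partial/\partial X^{J}$ together with $\FF_{J}=\bF\cdot\partial/\partial X^{J}$ yields
\[
  \EE_{J}=\FF_{J}+\UU^{\flat}(\FF_{J})\,\UU .
\]

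I would then record two instances of the Leibniz rule for $\nabla^{g}$. Since $\dd\Psi^{I}(\UU)=0$ (as $\UU$ spans $\ker T\Psi$) and $\dd\Psi^{I}(\FF_{J})=\delta^{I}_{J}$ (because $T\Psi\cdot\bF=\id_{V}$) are constant functions on $\mW$, differentiating them along an arbitrary vector field $\XX$ and using $\Hess^{g}\Psi^{I}=\nabla^{g}\dd\Psi^{I}$ gives
\[
  \dd\Psi^{I}\cdot\nabla^{g}_{\XX}\UU=-(\Hess^{g}\Psi^{I})(\UU,\XX),
  \qquad
  \dd\Psi^{I}\cdot\nabla^{g}_{\XX}\FF_{J}=-(\Hess^{g}\Psi^{I})(\FF_{J},\XX) .
\]
It is also convenient to rewrite the defining relations \eqref{eq:AHess}, \eqref{eq:LIJmixte}, \eqref{eq:HIJKmixte} as $(\Hess^{g}\Psi^{I})(\UU,\UU)=-A^{I}/c^{2}$, $(\Hess^{g}\Psi^{I})(\UU,\EE_{J})=-L^{I}_{J}/c$ and $(\Hess^{g}\Psi^{I})(\EE_{J},\EE_{K})=-H^{I}_{JK}$.

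The computation is then term by term. The $\bA$ identity is \eqref{eq:AHess} itself. For $\bL$, substitute $\EE_{J}=\FF_{J}+\UU^{\flat}(\FF_{J})\UU$ into $(\Hess^{g}\Psi^{I})(\UU,\EE_{J})$, expand by bilinearity, replace $(\Hess^{g}\Psi^{I})(\UU,\UU)$ by $-A^{I}/c^{2}$, rewrite $(\Hess^{g}\Psi^{I})(\UU,\FF_{J})$ via the first Leibniz identity, and multiply by $-c$; along the way one extracts the useful by-product $(\Hess^{g}\Psi^{I})(\UU,\FF_{J})=-L^{I}_{J}/c+A^{I}\UU^{\flat}(\FF_{J})/c^{2}$. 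For $\bH$, substitute both $\EE_{J}=\FF_{J}+\UU^{\flat}(\FF_{J})\UU$ and $\EE_{K}=\FF_{K}+\UU^{\flat}(\FF_{K})\UU$ into $(\Hess^{g}\Psi^{I})(\EE_{J},\EE_{K})=-H^{I}_{JK}$, expand by bilinearity and symmetry into the four pieces evaluated on $(\FF_{J},\FF_{K})$, $(\FF_{J},\UU)$, $(\UU,\FF_{K})$, $(\UU,\UU)$, insert the values just obtained, rewrite $(\Hess^{g}\Psi^{I})(\FF_{J},\FF_{K})$ via the second Leibniz identity, and collect. The passage to tensorial notation is routine: contract the component identities against $\dd X^{J}$, respectively $\dd X^{J}\otimes\dd X^{K}$, and identify $\UU^{\flat}\bF$ with the covector $\UU^{\flat}(\FF_{J})\,\dd X^{J}$, so that $\tfrac{2}{c}\big(\bL\otimes(\UU^{\flat}\bF)\big)^{(23)}$ has components $\tfrac{1}{c}\big(L^{I}_{J}\UU^{\flat}(\FF_{K})+L^{I}_{K}\UU^{\flat}(\FF_{J})\big)$. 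Equivalently, one may obtain everything at once by evaluating the orthogonal $\UU$-decomposition \eqref{eq:HessALK} of $\Hess^{g}\Psi^{I}$ directly on the triple $\FF_{J},\FF_{K},\UU$.

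No conceptual obstacle is expected: the whole argument is multilinear algebra together with the torsion-free Leibniz rule. The only delicate point --- and the most likely source of a sign slip --- is the cubic term of $\bH$: three contributions proportional to $\UU^{\flat}(\FF_{J})\UU^{\flat}(\FF_{K})A^{I}/c^{2}$, two from the cross-terms $(\FF_{J},\UU)$ and $(\UU,\FF_{K})$ and one from $(\UU,\UU)$, must be combined with their signs and then carried through the overall sign reversal coming from $H^{I}_{JK}=-(\Hess^{g}\Psi^{I})(\EE_{J},\EE_{K})$ before the claimed coefficient is reached.
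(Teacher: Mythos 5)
Your proposal follows the same route as the paper's proof: establish $\EE_{J}=\FF_{J}+\UU^{\flat}(\FF_{J})\,\UU$, convert values of $\Hess^{g}\Psi^{I}$ on frame vectors into $\dd\Psi^{I}\cdot\nabla^{g}(\cdot)$ terms via the torsion-free Leibniz rule, and expand by bilinearity; the paper obtains the frame relation from \eqref{eq:E_I} and \eqref{eq:F_I} rather than from the right-inverse identity, which is cosmetic. The formulas for $\bA$ and $\bL$ go through exactly as you describe.

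For $\bH$, the delicate point you flag is real, and your more careful bookkeeping does not reproduce the printed coefficient. Write $u_{J}=\UU^{\flat}(\FF_{J})$. Your by-product $(\Hess^{g}\Psi^{I})(\UU,\FF_{J})=-\tensor{L}{^{I}_{J}}/c+A^{I}u_{J}/c^{2}$ is correct; inserting it into the four-term expansion of $-(\Hess^{g}\Psi^{I})(\EE_{J},\EE_{K})$, each of the two cross terms contributes $-\tfrac{1}{c^{2}}A^{I}u_{J}u_{K}$ and the $(\UU,\UU)$ term contributes $+\tfrac{1}{c^{2}}A^{I}u_{J}u_{K}$, for a net coefficient $-\tfrac{1}{c^{2}}$ rather than the $+\tfrac{1}{c^{2}}$ in the statement. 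The paper's own last line replaces $-(\Hess^{g}\Psi^{I})(\FF_{J},\UU)$ by $\tensor{L}{^{I}_{J}}/c$, which silently identifies $(\Hess^{g}\Psi^{I})(\UU,\FF_{J})$ with $(\Hess^{g}\Psi^{I})(\UU,\EE_{J})$; these differ by exactly the $A^{I}u_{J}/c^{2}$ term you retain. A one-dimensional sanity check with $\EE=\FF+u\UU$ gives $-\Hess(\EE,\EE)=-\Hess(\FF,\FF)+\tfrac{2u}{c}L-\tfrac{u^{2}}{c^{2}}A$, confirming the minus sign. So carry out your computation as planned and trust the $-\tfrac{1}{c^{2}}\,\bA\otimes(\UU^{\flat}\bF)\otimes(\UU^{\flat}\bF)$ you will obtain (the same sign change then propagates to the Minkowski formula \eqref{eq:HMink}); the argument is otherwise complete and identical in substance to the paper's.
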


\begin{proof}
  From \eqref{eq:E_I} and \eqref{eq:F_I}, and the fact that $\dd\Psi^{I} \cdot \bF = \dd X^{I}$, one deduces that
  \begin{equation}\label{eq:EJ-FJ}
    \EE_J = \FF_J + \UU^\flat(\bF_J)\UU,
  \end{equation}
  and one has
  \[
    \dd\Psi^{I}\cdot\UU = 0 \quad \text{and} \quad \dd\Psi^{I}\cdot\EE_{J} = \dd\Psi^{I}\cdot\FF_{J}= \delta^{I}_{J}.
  \]
  Now, using the Leibniz rule
  \[
    (\Hess^g\Psi^{I}) (\XX, \YY) = \nabla^{g}_{\XX} \dd\Psi^{I} \cdot \YY = \dd (\dd\Psi^{I}\cdot \XX)\cdot \YY - \dd\Psi^{I}\cdot \nabla^{g}_{\XX}\YY,
  \]
  and the fact that the Hessian is symmetric (because the Levi-Civita connection is torsionless), one gets
  \begin{align*}
     & (\Hess^g\Psi^{I}) (c\UU,c\UU) = - \dd\Psi^I\cdot\nabla^{g}_{c\UU}(c\UU),                                                                                                   \\
     & (\Hess^g\Psi^{I}) (c\UU,\FF_{J}) = (\Hess^g\Psi^{I}) (\FF_{J},c\UU) = - \dd\Psi^I\cdot\nabla^{g}_{c\UU}(\FF_{J}) = - \dd\Psi^I\cdot\nabla^{g}_{\FF_{J}}(c\UU),             \\
     & (\Hess^g\Psi^{I}) (\FF_{J},\FF_{K}) = (\Hess^g\Psi^{I}) (\FF_{K},\FF_{J}) = - \dd\Psi^I\cdot\nabla^{g}_{\FF_{J}}(\FF_{K}) = - \dd\Psi^I\cdot\nabla^{g}_{\FF_{K}}(\FF_{J}).
  \end{align*}
  We get therefore, using \eqref{eq:EJ-FJ} and the bilinearity of the Hessian.
  \begin{itemize}
    \item Firstly:
          \[
            A^{I} = - c^{2}\,(\Hess^g\Psi^{I}) (\UU, \UU) = \dd\Psi^I\cdot\nabla^{g}_{c\UU}(c\UU),
          \]

    \item Secondly:
          \begin{align*}
            \tensor{L}{^{I}_{J}} & = - c\,(\Hess^g\Psi^{I}) (\UU, \EE_J)
            \\
                                 & = - c\,(\Hess^g\Psi^{I}) (\UU, \FF_J+ \UU^\flat(\FF_J)\,\UU)
            \\
                                 & = - c\,(\Hess^g\Psi^{I}) (\UU, \FF_J) -c  \,(\Hess^g\Psi^{I}) (\UU,\UU) \, \UU^\flat(\FF_J)
            \\
                                 & = \dd\Psi^I\cdot\nabla^{g}_{\FF_{J}}(c\UU) +\frac{1}{c}\, A^{I}\, \UU^\flat(\FF_J).
          \end{align*}
    \item Thirdly:
          \begin{align*}
            \tensor{H}{^{I}_{JK}} & = - (\Hess^{g}\Psi^I)(\EE_{J}, \EE_{K})
            \\
                                  & = - (\Hess^{g}\Psi^I)(\FF_J+ \UU^\flat(\FF_J)\,\UU, \FF_K+ \UU^\flat(\FF_K)\,\UU)
            \\
                                  & = - (\Hess^{g}\Psi^I)(\FF_J, \FF_K) - (\Hess^{g}\Psi^I)(\FF_J, \UU)\, \UU^\flat(\FF_K)
            \\
                                  & \quad - (\Hess^{g}\Psi^I)(\UU, \FF_K)\, \UU^\flat(\FF_J)
            - (\Hess^{g}\Psi^I)(\UU, \UU) \, \UU^\flat(\FF_J)\, \UU^\flat(\FF_K)
            \\
                                  & = \dd\Psi^I\cdot\nabla^{g}_{\FF_{J}}(\FF_{K}) + \frac{1}{c}\,\tensor{L}{^{I}_{J}}\, \UU^\flat(\FF_K)+ \frac{1}{c}\,\tensor{L}{^{I}_{K}}\, \UU^\flat(\FF_J) +\frac{1}{c^2}\, A^{I}\,\UU^\flat(\FF_J)\,\UU^\flat(\FF_K).
          \end{align*}
  \end{itemize}
\end{proof}

\subsection{Second Gradient Hyperelasticity in Minkowski spacetime}

We will now specify the general theory in a familiar simple setting. Since the presence of the studied matter in the laboratory does not affect (much) the Universe metric \(g\) compared to the presence of the Sun, of the Earth and of other planets (\textsl{passive matter} assumption), we can consider that \(g\) is now fixed and choose, among the numerous spacetimes encountered in General Relativity and available in the catalog~\cite{MG2009}, those describing solutions of Einstein equations in the vacuum. In this paper, we have chosen to work in the \textsl{Minkowski spacetime} \((\univ,\eta)\), the framework of Special Relativity. It is a flat pseudo-Riemannian manifold, diffeomorphic to \(\RR^{4}\) with canonical coordinates \((x^{0}=ct, x^{i})\) and endowed with the static metric
\[
  \eta = -c^{2} \dd t^{2}+ \gE,
  \quad \text{where} \quad
  \gE = \delta_{ij} \dd x^{i} \dd x^{j}.
\]
The time function is here \(\hat{t}(m)=t=x^{0}/c\) and accordingly, one has
\[
  \bN = \frac{1}{c}\pd{t}{}, \quad \text{and} \quad \bN^{\flat}=- c \dd t.
\]
Note that in this simple case \(\nabla^\eta\bN=0\). As a consequence, for any vector field \(\XX\) tangent to \(\lab_{t}\), one has
\[
  0 = \dd g(\XX,\bN) = g(\nabla^\eta\XX,\bN) + g(\nabla^\eta\bN,\XX),
\]
and thus
\[
  g(\nabla^\eta\XX,\bN) = -g(\nabla^\eta\bN,\XX) = 0.
\]
For such vector fields, at a point of \(\Omega_{t}\), one has therefore
\[
  T\Psi \cdot \nabla^\eta\XX = T\Psi_{t} \cdot \nabla^\eta\XX = \bF^{-1} \cdot \nabla^\eta\XX,
\]
where $\bF \bydef T\Psi_{t}^{-1} \colon T\body\to T\Omega_{t}$. If \(\XX\) and \(\YY\) are both tangent to \(\lab_{t}\), one has furthermore
\[
  \nabla^{\eta}_\XX\YY = \nabla^{\gE}_\XX\YY,
\]
\textit{i.e.} the hypersurface \(\lab_t\) of \((\univ,g)\) has no extrinsic curvature. Here, \( (\lab_t,\gE)\) is a three dimensional Euclidean affine space. Finally, the quadrivelocity and Lorentz factor have for expressions
\[
  c\,\UU = \gamma\left(\pd{t}{}+\uu\right),
  \qquad
  \gamma=\frac{1}{\sqrt{1-\dfrac{\norm{\uu}_{\gE}^{2}}{c^{2}}}}.
\]
and one has
\begin{equation}\label{eq:UbF}
  \UU^\flat\bF = \frac{\gamma}{c} \uu^\flat \bF.
\end{equation}

\begin{rem}\label{lem:material-derivative}
  One can observe that $\gamma^{-1}\nabla^\eta_{ c\, \UU} \XX$ corresponds here to the (relativistic) particle derivative, which we denote with a dot,
  \[
    \dot {\XX} \bydef \gamma^{-1}\nabla^{\eta}_{ c\, \UU}\XX = \partial_{t} \XX + \nabla^{\gE}_{\uu} \XX,
  \]
  and that if $\XX$ is tangent to \(\lab_{t}\), then so is $\dot {\XX}$.
\end{rem}

Since the curvature tensor vanishes, the invariants $\bM_{2}$, $\bM_{3}$ and $\bM_{4}$ vanish and we need only to explicit the remaining vector valued relativistic invariants \(\bK\), \(\bA\), \(\bL\) and \(\bH\).

\begin{thm}
  In the Minkowski spacetime:
  \begin{itemize}
    \item The \emph{conformation} $\bK = (K^{IJ})$ is given by
          \[
            \bK = \bF^{-1} \left(\gE^{-1}-\frac{1}{c^{2}}{\uu\otimes\uu}\right)\bF^{-\star},
          \]
          which corresponds to the formula derived in \cite{Sou1964}.

    \item The \emph{material relativistic acceleration} $\bA = (A^{I})$ is given by
          \[
            \bA =  \bF^{-1}\,\ba, \quad \text{where} \quad \ba := \gamma^{2}\, \dot \uu = \gamma^{2}\left(\partial_{t} \uu + \nabla^{\gE}_{\uu} \uu\right).
          \]
          Observe that $\bA$ is the pullback on the body $\body$ of the \emph{spatial relativistic acceleration} $\ba$, defined on $\Omega_{t}$.

    \item The \emph{material relativistic velocity gradient} $\bL = (L^{IJ})$ is given by
          \[
            \bL = \bF^{-1} \cdot \left(\gamma \, \nabla^{\gE} \uu + \frac{\gamma}{c^{2}} \, \ba \otimes \uu^\flat \right) \cdot \bF
          \]
          Observe that $\bL$ is the pullback on the body $\body$ of the spatial mixed tensor
          \[
            \gamma \, \nabla^{\gE} \uu + \frac{\gamma}{c^{2}} \, \ba \otimes \uu^\flat,
          \]
          defined on $\Omega_{t}$.

    \item The \emph{material relativistic strain gradient} $\bH = (\tensor{H}{^{I}_{JK}})$ is given by
          \begin{equation}\label{eq:HMink}
            \bH = \bF^{-1} \cdot \nabla^{\gE} \bF \cdot \bF + \frac{2\gamma}{c^{2}} \big(\bL \otimes (\uu^\flat\bF)\big)^{(23)} + \frac{\gamma^{2}}{c^{4}} \bA \otimes (\uu^\flat\bF) \otimes (\uu^\flat\bF).
          \end{equation}
          where $(\cdot)^{(23)}$ means the symmetrization over the second and third factors, and
          \[
            \tensor{(\nabla^{\gE} \bF)}{^{i}_{Jk}} \bydef \frac{\partial \tensor{F}{^{i}_{J}}}{\partial {x^{k}}}.
          \]
  \end{itemize}
\end{thm}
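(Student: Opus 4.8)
The plan is to specialize the general tensorial formulas already established --- namely \eqref{eq:K} for the conformation, and the expressions of \autoref{lem:hessian} for $\bA$, $\bL$, $\bH$ written in the frame $(\UU,\FF_{1},\FF_{2},\FF_{3})$ --- to the Minkowski spacetime, exploiting the three features recalled just above: (i) $\nabla^{\eta}\bN=0$, so that $\lab_{t}$ is totally geodesic, $\nabla^{\eta}_{\XX}\YY=\nabla^{\gE}_{\XX}\YY$ for $\XX,\YY$ tangent to $\lab_{t}$, and $T\Psi\cdot\nabla^{\eta}\XX=\bF^{-1}\cdot\nabla^{\eta}\XX$ for such $\XX$; (ii) the explicit decomposition $c\,\UU=\gamma(\partial_{t}+\uu)=\gamma c\,\bN+\gamma\uu$ together with the particle-derivative identity $\nabla^{\eta}_{c\UU}\XX=\gamma\dot{\XX}$ from \autoref{lem:material-derivative}; (iii) the identity \eqref{eq:UbF}, $\UU^{\flat}\bF=\tfrac{\gamma}{c}\uu^{\flat}\bF$. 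I will also repeatedly use $T\Psi\cdot c\,\bN=-\bF^{-1}\uu$ and $T\Psi|_{T\Omega_{t}}=\bF^{-1}$, both recorded in the paragraphs preceding the statement.

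For $\bK$, I would substitute the orthogonal decomposition $\eta^{-1}=-\bN\otimes\bN+\gE^{-1}$ into \eqref{eq:K}: the $\bN\otimes\bN$ term produces $-(T\Psi\cdot\bN)\otimes(T\Psi\cdot\bN)=-\tfrac{1}{c^{2}}(\bF^{-1}\uu)\otimes(\bF^{-1}\uu)$, while $\gE^{-1}$ has range in $T\Omega_{t}$, where $T\Psi=\bF^{-1}$, giving the term $\bF^{-1}\gE^{-1}\bF^{-\star}$; summing yields $\bF^{-1}(\gE^{-1}-\tfrac{1}{c^{2}}\uu\otimes\uu)\bF^{-\star}$. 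For $\bA=T\Psi\cdot\nabla^{\eta}_{c\UU}(c\UU)$, I would expand $\nabla^{\eta}_{c\UU}(\gamma c\,\bN+\gamma\uu)$: using $\nabla^{\eta}\bN=0$ and $\nabla^{\eta}_{c\UU}\uu=\gamma\dot{\uu}$ this equals $\bigl((c\UU)\gamma\bigr)(c\,\bN+\uu)+\gamma^{2}\dot{\uu}$. The key point is that $T\Psi$ kills $c\,\bN+\uu$ (because $T\Psi\cdot c\,\bN=-T\Psi\cdot\uu$), so the $\gamma$-derivative term disappears and, $\dot{\uu}$ being tangent to $\lab_{t}$, one gets $\bA=\gamma^{2}\,T\Psi\cdot\dot{\uu}=\bF^{-1}(\gamma^{2}\dot{\uu})=\bF^{-1}\ba$.

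For $\bL$, the same cancellation applies to $\nabla^{\eta}_{\FF_{J}}(c\UU)=\bigl(\FF_{J}\gamma\bigr)(c\,\bN+\uu)+\gamma\,\nabla^{\gE}_{\FF_{J}}\uu$ (here $\FF_{J},\uu$ are tangent to $\lab_{t}$), so that $T\Psi\cdot(\nabla^{\eta}c\UU)\cdot\bF=\gamma\,\bF^{-1}(\nabla^{\gE}\uu)\bF$; plugging this, together with $\bA=\bF^{-1}\ba$ and \eqref{eq:UbF}, into the $\bL$-formula of \autoref{lem:hessian} gives $\bL=\bF^{-1}(\gamma\nabla^{\gE}\uu+\tfrac{\gamma}{c^{2}}\ba\otimes\uu^{\flat})\bF$. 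For $\bH$, the only genuinely new term is the leading one: since $\FF_{J},\FF_{K}$ are tangent to $\lab_{t}$ one has $\nabla^{\eta}_{\FF_{K}}\FF_{J}=\nabla^{\gE}_{\FF_{K}}\FF_{J}$, which in the Cartesian coordinates $(x^{i})$ of the Euclidean space $(\lab_{t},\gE)$ reads $F^{k}_{K}(\partial_{k}F^{i}_{J})\partial_{i}$; applying $T\Psi=\bF^{-1}$ to it yields exactly $\bF^{-1}\cdot\nabla^{\gE}\bF\cdot\bF$ with $(\nabla^{\gE}\bF)^{i}_{Jk}=\partial F^{i}_{J}/\partial x^{k}$. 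The remaining two terms of \eqref{eq:HMink} then follow from \autoref{lem:hessian} by the substitutions $\UU^{\flat}\bF=\tfrac{\gamma}{c}\uu^{\flat}\bF$ and $\bA=\bF^{-1}\ba$.

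None of these steps is deep; the substance is bookkeeping. The point that requires care --- what I regard as the \emph{hard part} --- is, on the one hand, tracking the placement of $\bF$, $\bF^{-1}$, their transposes, and the symmetrization $(\,\cdot\,)^{(23)}$, and, on the other hand, recognizing the two cancellations of the $\gamma$-derivative terms under $T\Psi$: it is precisely these that turn the a priori frame-dependent objects $\bA$ and $\bL$ into clean pullbacks by $\bF^{-1}$ of the spatial acceleration $\ba=\gamma^{2}\dot{\uu}$ and of the spatial mixed tensor $\gamma\nabla^{\gE}\uu+\tfrac{\gamma}{c^{2}}\ba\otimes\uu^{\flat}$ defined on $\Omega_{t}$. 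The identifications $T\Psi\cdot c\,\bN=-\bF^{-1}\uu$ and $T\Psi|_{T\Omega_{t}}=\bF^{-1}$ used throughout are already established in the text preceding the statement.
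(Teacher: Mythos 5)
Your proposal is correct and follows essentially the same route as the paper: specialize \autoref{lem:hessian} using $\nabla^{\eta}\bN=0$, the decomposition $c\,\UU=\gamma(c\bN+\uu)$, the cancellation of the $\dd\gamma$-terms under $T\Psi$ (since $T\Psi\cdot(c\bN+\uu)=0$), and the identity \eqref{eq:UbF}, with the conformation obtained from the orthogonal decomposition of $\eta^{-1}$ and $T\Psi$ relative to $\bN$. The bookkeeping you flag as the delicate part is exactly what the paper's proof carries out.
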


\begin{rem}\label{rem:pgradMMC}
  In Classical Continuum Mechanics, one starts with an embedding
  \[
    \pp \colon \body \to \lab,
    \qquad
    \bX \mapsto \xx=\pp(\bX),
  \]
  possibly parameterized by time, and introduces the \emph{deformation gradient}
  \[
    \bF_{\pp}(\bX) \bydef T_{\bX}\pp \colon T_{\bX}\body \to T_{\pp(\bX)}\lab.
  \]
  In our setting, if we assume that the restriction $\Psi_{t} \colon \Omega_{t} \to \body$ is a diffeomorphism and we set $\pp \bydef {\Psi_{t}}^{-1}$, then, we have $\bF_{\pp}(\bX) = \bF (\pp(\bX))$ and
  \[
    \frac{\partial \tensor{(\bF_{\pp})}{^{i}_{J}}}{\partial {X^{K}}} = \frac{\partial \tensor{F}{^{i}_{J}}}{\partial {x^{l}}}\tensor{(\bF_{\pp})}{^{l}_{K}},
  \]
  or, in tensorial notation,
  \[
    \nabla\bF_{\pp} = \nabla^{\gE} \bF \cdot \bF_{\pp} = \nabla^{\gE} \bF \cdot \bF \circ \pp.
  \]
  This means that the left inner inside term of \eqref{eq:HMink} can be rewritten
  \[
    \bF^{-1} \cdot \nabla^{\gE} \bF \cdot \bF =\left( \bF_{\pp}^{-1} \cdot \nabla \bF_{\pp} \right) \circ \pp^{-1},
  \]
  so as to display the standard deformation gradient variable $\bF_{\pp}^{-1} \cdot \nabla \bF_{\pp}$ on $\body$
  of Classical Continuum Mechanics.
\end{rem}

\begin{proof}
  We have first the orthogonal decompositions relative to $\bN$
  \[
    g^{-1} = \gE^{-1} - \bN \otimes \bN, \quad T\Psi = \frac{1}{c} (\bF^{-1}\uu) \otimes \bN^\flat + \bF^{-1}, \quad T\Psi^{\star} = \frac{1}{c} \bN^\flat \otimes (\bF^{-1}\uu) + \bF^{-\star}.
  \]
  We get thus
  \[
    \bK =  T\Psi \cdot g^{-1} \cdot T\Psi^\star =  \bF^{-1}\gE^{-1}\bF^{-\star} - \frac{1}{c^{2}} (\bF^{-1}\uu) \otimes (\bF^{-1}\uu) = \bF^{-1} \left(\gE^{-1} - \frac{1}{c^{2}}{\uu\otimes\uu}\right)\bF^{-\star}.
  \]

  We will now particularize the formulas given in lemma~\ref{lem:hessian} when $g = \eta$. In that case, we have
  \[
    \nabla^{\eta} \bN = 0 \quad \text{and} \quad c\,\UU = \gamma\left(c\bN + \uu\right), \quad \text{with} \quad \bN = \frac{1}{c}\pd{t}{}.
  \]
  Note also that since $\nabla^\eta\bN=0$ and $T\Psi\cdot\UU=0$, we get that
  \begin{align*}
    \nabla^{\eta}_{\XX} (c\UU) & = \nabla^{\eta}_{\XX} \big(\gamma (c\bN + \uu)\big)
    \\
                               & = (\dd \gamma\cdot \XX) (c\bN + \uu) + \gamma \nabla^{\eta}_{\XX}(c\bN + \uu)
    \\
                               & = \frac{1}{\gamma}(\dd \gamma \cdot \XX) c\UU + \gamma \nabla^{\eta}_{\XX} \uu,
  \end{align*}
  for any vector field $\XX$. Hence
  \[
    T\Psi \cdot \nabla^{\eta} (c\UU) = \gamma \, T\Psi \cdot \nabla^{\eta} \uu.
  \]

  We get therefore, using remark~\ref{lem:material-derivative},
  \[
    \bA = T\Psi\cdot\nabla^{g}_{c\UU}(c\UU) = \gamma \, T\Psi \cdot \nabla^{\eta}_{c\UU} \uu = \gamma^{2} \, T\Psi_{t} \cdot \dot{\uu} = \gamma^{2} \, \bF^{-1} \dot{\uu}=\bF^{-1} \ba,
  \]
  then, using \eqref{eq:UbF},
  \begin{align*}
    \bL & = T\Psi \cdot (\nabla^{g} c\UU) \cdot \bF + \frac{1}{c} \bA \otimes (\UU^\flat \bF)
    \\
        & = \gamma \, T\Psi \cdot \nabla^{\eta} \uu \cdot \bF + \frac{\gamma}{c^{2}} \, (\bF^{-1} \ba) \otimes (\uu^\flat \bF)
    \\
        & =  \gamma \, \bF^{-1} \cdot  \nabla^{\gE} \uu \cdot \bF + \frac{\gamma}{c^{2}} \, \bF^{-1} \cdot (\ba \otimes \uu^\flat) \cdot \bF
    \\
        & = \bF^{-1} \cdot \left(\gamma \, \nabla^{\gE} \uu + \frac{\gamma}{c^{2}} \, \ba \otimes \uu^\flat \right) \cdot \bF
  \end{align*}
  and, using \eqref{eq:UbF},
  \[
    \bH  = (T\Psi\cdot\nabla^{\eta}_{\FF_{J}}(\FF_{K})) \otimes \dd X^{J} \otimes \dd X^{K} + \frac{2\gamma }{c^{2}} \big(\bL \otimes (\uu^\flat\bF)\big)^{(23)} + \frac{\gamma^{2}}{c^{4}} \bA \otimes (\uu^\flat\bF) \otimes (\uu^\flat\bF).
  \]
  But
  \[
    T\Psi\cdot\nabla^{\eta}_{\FF_{J}}(\FF_{K}) = T\Psi_{t}\cdot\nabla^{\gE}_{\FF_{J}}(\FF_{K}) = \bF^{-1}\cdot\nabla^{\gE}_{\FF_{J}}(\FF_{K}),
  \]
  where we have defined $\FF_{J}$ as $\bF \cdot \partial/\partial X^{J}$ and where $\bF(m) = (T_{m}\Psi_{t})^{-1}$ for each $m \in \Omega_{t}$. We can thus write in components
  \[
    (\FF_{J})^{i} = \tensor{F}{^{i}_{J}}, \quad \text{and} \quad  (\nabla^{\gE}_{\FF_{J}}(\FF_{K}))^{i} = \frac{\partial (\FF_{K})^{i}}{\partial {x^{l}}} (\FF_{J})^{l} = \frac{\partial \tensor{F}{^{i}_{K}}}{\partial {x^{l}}} \tensor{F}{^{l}_{J}}.
  \]
  We get thus
  \[
    \left(\bF^{-1}\nabla^{\gE}_{\FF_{J}} \FF_{K}\right)^{I} = \left(\bF^{-1}\nabla^{\gE}_{\FF_{K}} \FF_{J}\right)^{I} = \tensor{(\bF^{-1})}{^{I}_{i}} \frac{\partial \tensor{F}{^{i}_{J}}}{\partial {x^{l}}} \tensor{F}{^{l}_{K}} = \tensor{(\bF^{-1})}{^{I}_{i}} \tensor{(\nabla^{\gE} \bF)}{^{i}_{Jl}} \tensor{F}{^{l}_{K}},
  \]
  if we set
  \[
    \tensor{(\nabla^{\gE} \bF)}{^{i}_{Jk}} \bydef \frac{\partial \tensor{F}{^{i}_{J}}}{\partial {x^{k}}}.
  \]
  We have finally
  \[
    \bH  = \bF^{-1} \cdot \nabla^{\gE} \bF \cdot \bF + \frac{2 \gamma}{c^{2}} \big(\bL \otimes (\uu^\flat\bF)\big)^{(23)} +\frac{\gamma^{2}}{c^{4}} \bA \otimes (\uu^\flat\bF) \otimes (\uu^\flat\bF).
    \qedhere
  \]
\end{proof}

\section{Classical limit of Gradient Relativistic Hyperelasticity}
\label{sec:classical-limit}

\subsection{Classical limit of the relativistic invariants}

As the speed of light \(c\) goes to infinity, the conformation tends towards the inverse of the Cauchy--Green tensor \(\bC\bydef\bF^{\star} \gE\, \bF\) of Classical Continuum Mechanics \cite{Sou1958,Mau1978c},
\[
  \lim_{c\to\infty} \bK= \bC^{-1}.
\]

The vector $\bA$ depends on the speed of light $c$ through the Lorentz factor $\gamma$.
At the classical limit, since $ \lim_{c\to \infty} \gamma~=~1$, we have then
\[
  \lim_{c\to \infty} \bA=\bF^{-1}\, \ba ,
  \qquad
  \ba=\dot \uu= \partial_{t} \uu + \nabla^{\gE}_{\uu} \uu,
\]
meaning that $\bA$, interpreted as the  pullback on $\body$ of the Eulerian acceleration $\ba$, is the material acceleration of Classical Continuum Mechanics.

At the classical limit
\[
  \lim_{c\to \infty} \bL
  =
  \bF^{-1}  (\nabla^{\gE} \uu) \,\bF
  =\bF^{-1}\pmb{\ell}\,\bF
  ,\qquad
  \pmb{\ell}:=\nabla^{\gE} \uu,
\]
This tensor is interpreted as the pullback on $\body$ of the so-called Eulerian velocity gradient $\pmb{\ell}:=\nabla^{\gE}\uu$ (a mixed second order tensor on $\Omega_{t}$). It is the material (on $\body$) velocity gradient of Classical Continuum Mechanics. The infinite speed of light limit of $\bL$ is thus built from the sum of the (symmetric) strain rate tensor $\bd:=\pmb{\ell}^{s}=(\nabla^{\gE}\uu)^{s}$, which is known to be objective, and of the (skew-symmetric) spin tensor $\bw:=\pmb{\ell}^{a}=(\nabla^{\gE}\uu)^{a}$, which is known to be non-objective.

Lastly, the contravariant third order tensor $\bH$ has for limit at infinite speed of light
\[
  \lim_{c\to \infty} \bH \circ \pp
  =
  \bF_{\pp}^{-1}\nabla \bF_{\pp},
  \qquad
  \bF_{\pp}=T\pp,
\]
where $\pp\colon \body \to \lab$ is the embedding of the body in the Euclidean space (see \autoref{rem:pgradMMC}). The secondary variable $\bF_{\pp}^{-1}\nabla \bF_{\pp}$, obtained as such an infinite speed of light limiting case, is the fundamental variable (objective, see discussion below) of Toupin and Mindlin finite strain second gradient theory~\cite{Eri2012}. It is equal to $\bF_{\phi}^{-1} \nabla \bF_{\phi}$  when the body $\body$ is identified with the reference configuration $\Omega_{0}$, $\phi\colon \Omega_{0}\to \Omega_{t}$ is the deformation and $\bF_{\phi}=T\phi$ is the deformation gradient.

\begin{rem}
  The alternative (contravariant) invariants $\bL^{\sharp_{\bK^{-1}}} = \bL\cdot \bK$ and $\bH^{\sharp_{\bK^{-1}}}$ introduced in \autoref{rem:alternative-invariants} have as classical limits (at $c\to \infty$)
  \[
    \lim_{c\to \infty} \bL^{\sharp_{\bK^{-1}}} = \bF^{-1}  (\nabla^{\gE} \uu) \,\bF\cdot \bC^{-1}
    \quad \text{and}\quad
    \lim_{c\to \infty} \bH^{\sharp_{\bK^{-1}}} \circ \pp=
    \left(\bF_{\pp}^{-1}\nabla \bF_{\pp}\right)^{\sharp_{\bC^{-1}}},
  \]
  the latter of components
  \[
    {\left(\bF_{\pp}^{-1}\nabla \bF_{\pp}\right)^{I}}_{PQ} (\bC^{-1})^{PI} (\bC^{-1})^{QJ}.
  \]
\end{rem}

\subsection{From General Covariance to Objectivity}

Starting from General Relativity and fixing the metric to the flat Minkowski metric \(\eta\) on \(\univ \simeq \RR^{4}\) corresponds to the framework of Special Relativity.
It is a symmetry breaking because in this setting, only diffeomorphisms \(\varphi\) which fix \(\eta\), meaning that \(\pull\eta = \eta\), are allowed. These diffeomorphisms are the isometries of the Minkowski space and constitute the \textsl{Poincaré group}, a finite dimensional Lie group of dimension \(10\).
In Classical Continuum Mechanics, the Universe \(\univ \simeq \RR^{4}\) is no longer a Lorentzian manifold. It is equipped with a \textsl{Galilean structure}, \textit{i.e.}, a pair \((\bkappa,\theta)\), where \(\bkappa\) is a symmetric second-order contravariant tensor of signature \((0,+,+,+)\) (the classical spatial cometric) and \(\theta\) is a \(1\)-form which spans the kernel of \(\bkappa\) (the clock) \cite{KD2023}. In the present case, this Galilean structure \((\bkappa,\theta)\) is obtained as the limit of the \(1\)-parameter family of smooth Lorentz metrics
\[
  \eta_{c}^{-1} = -c^{-2}{\partial_{t}}^{2} + {\partial_{x}}^{2} + {\partial_{y}}^{2} + {\partial_{z}}^{2},
\]
when \(c \to \infty\), and where \(\theta = \dd t\).
Diffeomorphisms \(\varphi\) which preserve the Galilean structure
\[
  \pull\bkappa = \bkappa, \quad \text{and} \quad \pull\theta = \theta,
\]
are called \textsl{Gallileomorphisms}. In the canonical coordinate system of \(\RR^{4}\), they are given by
\[
  \overline{\xx} = \bQ(t)\xx + \vb(t), \qquad \overline{t} = t + c,
\]
where \(\xx = (x,y,z)\), \(\bQ(t)\) is a rotation and \(\vb(t)\) is a vector, both depending on time. These transformations are the ``change of observer'' of Classical Mechanics. Those quantities which are invariant under these transformations are called \textsl{objective} or \textsl{frame independent} \cite{KD2024}.
As a result of our study, it appears that even if all the quantities of \autoref{cor:second_gradient} are invariant under general (orientation-preserving) diffeomorphisms, their limit may or may not be objective. The reason for that is that some quantities like the conformation \(\bK\) or the material relativistic strain gradient \(\bH\), defined \textit{a priori} for Lorentzian metrics \(g\), extend continuously on the boundary of this set (that is, to degenerate Lorentzian cometrics) and thus to Galilean structures.
In that case the General Covariance of \(\bK\) and \(\bH\) leads to the objectivity of their limits \(\bC^{-1}\) and \(\bF_{\pp}^{-1}\nabla \bF_{\pp}\), by restricting to symmetries of the Galilean structure. However, this is not the case for other invariants like the material relativistic acceleration \(\bA\) and the material relativistic velocity gradient \(\bL\).

Such results in fact clarify the criticism levelled at the principle of objectivity, which is sometimes considered too restrictive, for instance for viscous fluids
\cite{Ede1973,Ast1979,Mur1983,Spe1998}: indeed, the general covariant secondary variables which do not have objective limits are in fact first time-gradient and first-spatial gradient (for \(\bA\) and \(\bL\)). All general covariant first-order (\(\bK\)) and second-order \((\bH)\)  spatial gradient variables authorized by \autoref{thm:second_gradient} for solids second-gradient hyperelasticity have objective limits at \(c\to\infty\).

\section{Conclusion}

We have extended Souriau's theory of Relativistic first gradient Hyperelasticity to a second gradient theory. We have shown that, assuming General Covariance for the Lagrangian density $L$, it is in fact a function of the following diffeomorphisms invariants: the matter field $\Psi$ itself, Souriau's conformation $\bK$ and additional three-dimensional tensorial valued invariants, namely
\begin{itemize}
  \item the material relativistic acceleration (vector $\bA$),
  \item the material relativistic velocity gradient (second order tensor $\bL$),
  \item the material relativistic strain gradient  (third order tensor $\bH$),
  \item and  relativistic gravitation/matter coupling tensors (second, third and fourth order tensors $\bM_{p}$), built from the curvature tensor $\bR^{g}$.
\end{itemize}
They play the role of Souriau's conformation, but for a relativistic second gradient theory. The Galilean limits of these invariants have been calculated in the flat Minkowski spacetime of Special Relativity. Our results show that the 3D Classical Continuum Mechanics second gradient theory can be derived as limiting case $c\to \infty$ of such a relativistic theory. Some secondary variables converge  in the Galilean limit to objective quantities (such as the fundamental variable $\bF_{\phi}^{-1}\nabla \bF_{\phi}$ of finite strain second gradient theory, modeling solids), other secondary variables  to non-objective quantities (modeling fluids). The present work thus contributes to the debate on the criticism of the Principle of Objectivity, sometimes considered too restrictive, particularly in fluid mechanics \cite{Ede1973,Ast1979,Mur1983,Spe1998}.

\appendix

\section{Covariant derivatives, Torsion, Curvature and Hessian tensors}
\label{sec:nabla-and-tensors}

\subsection{Covariant derivatives}
A \textsl{covariant derivative} \(\nabla\) on \(\univ\) is an operator which extends the usual derivative of scalar fields
\[
  \nabla_{\XX} f
  \bydef
  \dd f\cdot \XX,
  \qquad
  f \in \Cinf(\univ),
  \quad
  \XX  \in \mathfrak{X}^{1}(\univ),
\]
to tensor fields. It is characterized by the following two properties:
\[
  \nabla_{f\XX} \YY = f \nabla_{\XX} \YY,
  \qquad
  \nabla_{\XX} (f\YY) = \nabla_\XX(f) \YY + f\nabla_{\XX}(\YY),
\]
for all functions \(f \in \Cinf(\univ)\) and all vector fields \(\XX\), \(\YY\) on \(\univ\). Using Leibniz rule, it extends naturally to covector fields \(\alpha \in \Omega^{1}(\univ)\), via the formula
\[
  (\nabla_{\XX}\alpha) (\YY)
  \bydef
  \nabla_{\XX} (\alpha(\YY)) - \alpha(\nabla_{\XX} \YY),
\]
and more generally to all tensor fields. We admit that if \(\bt\) is a tensor field of order \(r\), then \(\nabla\bt\) is also a tensor field, of order \((r+1)\).
\begin{exam}
  For a metric \(g\), Leibniz rule yields
  \[
    (\nabla_{\pmb{Z}} g)(\XX,\YY)
    \bydef
    \dd(g(\XX,\YY))\cdot\pmb{Z} - g(\nabla_{\pmb{Z}} \XX,\YY) - g(\XX,\nabla_{\pmb{Z}} \YY).
  \]
\end{exam}

Given a local coordinates system \((x^{\mu})\), a covariant derivative \(\nabla\) is determined by its \textsl{Christoffel coefficients} (of the second kind)
\[
  \tensor{\Gamma}{^{\kappa}_{\lambda\mu}}
  \bydef
  \dd x^{\kappa}
  \left(\nabla_{\partial_{\mu}}\partial_{\lambda} \right)
\]
Note that despite the tensorial notation, these are not the components of a tensor.
\begin{defn}
  Denote by \([\XX, \YY]\in\mathfrak{X}^1(\univ)\) the Lie bracket of two vector fields \(\XX\) and \(\YY\) on \(\univ\).
  Given a covariant derivative \(\nabla\), one defines the following tensor fields.
  \begin{enumerate}
    \item the \textsl{torsion tensor}
          \[
            \bT^\nabla\left(\XX,\YY\right)
            \bydef
            \nabla_\XX \YY - \nabla_\YY \XX - [\XX,\YY],
          \]
          with components
          \[
            \tensor{T}{^{\kappa}_{\lambda\mu}}
            =
            \tensor{\Gamma}{^{\kappa}_{\lambda\mu}} - \tensor{\Gamma}{^{\kappa}_{\mu\lambda}}\ ;
          \]
    \item the \textsl{Riemann curvature tensor}
          \[
            \bR^\nabla\left(\XX,\YY\right) \pmb{Z}
            \bydef
            \nabla_\XX \nabla_\YY \pmb{Z} - \nabla_\YY \nabla_\XX \pmb{Z} - \nabla_{[\XX,\YY]} \pmb{Z} ,
          \]
          with components
          \[
            \tensor{R}{^{\kappa}_{\lambda\mu\nu}}
            =
            \partial_{\mu}\tensor{\Gamma}{^{\kappa}_{\nu\lambda}}
            -
            \partial_{\nu}\tensor{\Gamma}{^{\kappa}_{\mu\lambda}}
            +
            \tensor{\Gamma}{^{\kappa}_{\mu\rho}}\tensor{\Gamma}{^{\rho}_{\nu\lambda}}
            -
            \tensor{\Gamma}{^{\kappa}_{\nu\rho}}\tensor{\Gamma}{^{\rho}_{\mu\lambda}}\ ;
          \]
    \item the \textsl{Hessian tensor} of a scalar field \(f\)
          \[
            \Hess^\nabla\! f
            \bydef
            \nabla\nabla f
            =
            \nabla(\dd f),
          \]
          with components
          \[
            \tensor{(\Hess^{\nabla}\!f)}{_{\mu\nu}}
            =
            \partial_\mu\partial_\nu f - \partial_\kappa f\, \tensor{\Gamma}{^{\kappa}_{\nu\mu}}.
          \]
  \end{enumerate}
\end{defn}

A covariant derivative \(\nabla\) is said to be \textsl{symmetric} if its torsion tensor \(\bT^\nabla\) vanishes or, in other words, if \(\tensor{\Gamma}{^{\kappa}_{\lambda\mu}}\) is symmetric in the lower indices. In that case, the Hessian \(\Hess^\nabla\! f\) of any scalar field is also symmetric.

Recall that the Universe \((\univ,g)\) is assumed to be a pseudo-Riemannian manifold where \(g\) is Lorentzian metric of signature $(-+++)$. Such a manifold is endowed with a natural covariant derivative \(\nabla^{g}\), the \textsl{Levi-Civita connection}, which is uniquely defined by the following two properties:
\[
  \nabla^{g} g = 0, \quad \text{and} \quad \bT^{g} = 0,
\]
where \(\bT^{g}\bydef\bT^{\nabla^g}\) is the torsion of \(\nabla^{g}\).
The Levi-Civita connection is determined by the equations
\[
  \tensor{\Gamma}{^{\kappa}_{\lambda\mu}}
  =
  \frac{\tensor{g}{^{\kappa\rho}}}{2}
  \left( \tensor{g}{_{\rho\lambda,\mu}} + \tensor{g}{_{\rho\mu,\lambda}} - \tensor{g}{_{\lambda\mu,\rho}} \right),
\]
that can be inverted as follows:
\[
  \tensor{g}{_{\kappa\lambda,\mu}} = \tensor{g}{_{\kappa\rho}} \tensor{\Gamma}{^{\rho}_{\lambda\mu}} + \tensor{g}{_{\lambda\sigma}} \tensor{\Gamma}{^{\sigma}_{\kappa\mu}}.
\]
One concludes therefore that, \(g\) being invertible, the data of \(\set{g_{\mu\nu}, \partial_\mu\tensor{g}{_{\kappa\lambda}},\partial_\mu\partial_\nu\tensor{g}{_{\kappa\lambda}}}\) is equivalent to that of
\(\set{g_{\mu\nu}, \tensor{\Gamma}{^{\kappa}_{\lambda\mu}},\partial_\nu\tensor{\Gamma}{^{\kappa}_{\lambda\mu}}}\).

\subsection{Action of the diffeomorphism group}
\label{subsec:diff-action-tensors}
Following Erlangen's Program of Felix Klein~\cite{Kle1974}, a geometry is defined by a group action. As such, Differential Geometry is defined by the action of the diffeomorphism group of a manifold.
One of the main questions dealed with in this work is covariance under the diffeomorphism group. Let us describe how tensors and connections transform under the action of this group.

Let \(\varphi\colon U\to V\) be a smooth map. The differential \(T\varphi\) of \(\varphi\) allows to \textsl{push-forward} a contravariant vector \(\XX\) at \(m\in U\) to a contravariant vector \(T_{m}\varphi \cdot \XX\) at \(\varphi(m)\). Globalizing this construction for a vector field \(\XX\) on \(U\), one obtains a section \(T\varphi \cdot \XX\) of the pullback bundle
\[
  \pull TV \bydef \coprod_{m \in U} T_{\varphi(m)}V.
\]
When \(\varphi\) is a diffeomorphism, for a vector field \(\XX\) on \(U\), one can define a vector field \(\push \XX\) on \(V\) by setting at a point \(p\in V\):
\[
  (\push \XX)_p
  \bydef
  T_{\varphi^{-1}(p)}\varphi \cdot \XX_{\varphi^{-1}(p)}.
\]
One can extends this construction diagonally to higher order contravariant tensors in order to obtain a linear map \(\push \) from contravariant tensor fields on \(U\) to contravariant tensor fields on \(V\).

Let now \(\omega\) be a covariant tensor field on \(V\) (or a scalar field). For a smooth map \(\varphi\colon U\to V\), there is a natural covariant tensor field of the same type on \(U\), obtained by pullback by \(\varphi\). It is given at a point \(m\in U\) by:
\[
  (\pull\alpha)_{m}(\XX_1,\dotsc,\XX_k)
  \bydef
  \alpha_{\varphi(m)}(T_m\varphi \cdot \XX_1,\dotsc,T_m\varphi\cdot  \XX_k).
\]
For \(1\)-forms, using \(\star\) for the dual transpose, this action can also be written as
\[
  \pull\alpha = (T\varphi)^\star \cdot \alpha \circ \varphi.
\]
For exact \(1\)-forms, one gets
\[
  (\pull\dd f)_{m}(\XX)
  =
  \dd f_{\varphi(m)}(T_m\varphi \cdot \XX)
  =
  \dd(f\circ\varphi)_{m} \cdot \XX
  =
  \dd(\pull f)_{m}\cdot \XX,
\]
which we sum up as \(\pull\dd=\dd\pull\) (exterior derivatives and pullbacks commute).

When \(\varphi\) is a diffeomorphism, one can also define the pullback to \(U\) of a contravariant tensor field \(\XX\) on \(V\) as the pushforward by the diffeomorphism \(\varphi^{-1}\). For instance for a vector field:
\[
  (\pull \XX)_m
  \bydef
  ((\varphi^{-1})_*\XX)_m
  =
  T_{\varphi(m)}\varphi^{-1} \cdot \XX_{\varphi(m)}.
\]
In the same line of thought, one can define the pushforward of a covariant tensor field \(\omega\), by pulling back by the inverse diffeomorphism as in \(\push\omega=(\varphi^{-1})^*\omega\).

The right action \(\pull\) of the diffeomorphism group (the \textsl{pullback}) and its inverse left action \(\push \) (the \textsl{pushforward}) can be extended diagonally to arbitrary mixed tensor fields by using the tensor product decomposition of the tensor bundle. As intended, the pullback and the pushforward commute with contractions.

\begin{exam}
  The pullback of a metric \(g\colon \mathfrak{X}^1(\univ)\to\Omega^{1}(\univ)\) is given by
  \[
    (\pull g)_{m}
    =
    (T\varphi)^{\star} \cdot g_{\varphi(m)} \cdot T\varphi,
  \]
  and the pullback of a cometric \(g^{-1}\colon\Omega^{1}(\univ)\to\mathfrak{X}^1(\univ)\) is given by
  \[
    (\pull g^{-1})_{m}
    \bydef
    (T\varphi)^{-1} \cdot (g^{-1})_{\varphi(m)} \cdot (T\varphi)^{-\star}.
  \]
  One infers for instance that the \textsl{gradient} vector field \(\grad^g f \bydef g^{-1}\dd f\) is general covariant of the couple of variables \((g,f)\). This means that:
  \begin{align*}
    \grad^{\varphi^{*}g} \varphi^{*}f & = (\pull g)^{-1} \dd(\pull f)
    \\
                                      & =  T\varphi^{-1} \cdot g^{-1} \cdot T\varphi^{-\star} \cdot T\varphi^{\star} \cdot \dd f
    \\
                                      & = (T\varphi)^{-1} \cdot g^{-1}\dd f
    \\
                                      & = \pull(g^{-1}\dd f)
    \\
                                      & = \pull (\grad^g f).
  \end{align*}
\end{exam}

As already mentioned, covariant derivatives are not tensor fields. However, it is also possible to define pullbacks of covariant derivatives. This definition, in order to be natural, is required to satisfy
\[
  \pull\left(\nabla_\XX \YY \right)
  =
  \left(\pull\nabla \right)_{\pull \XX} \left(\pull \YY \right).
\]
which yields to the following definition of the pullback of a covariant derivative:
\begin{equation}
  \label{eq:pullback_covariant_derivative}
  \left(\pull\nabla\right)_{\XX} \YY
  \bydef
  \pull\left({\nabla}_{\push \XX}\push \YY\right).
\end{equation}

\begin{lem}
  The pullback of a covariant derivative is still a covariant derivative. Moreover, for any tensor field \(\bt\), one has
  \[
    \pull\left(\nabla_\XX \bt \right)
    =
    \left(\pull\nabla \right)_{\pull \XX} \left(\pull \bt \right).
  \]
\end{lem}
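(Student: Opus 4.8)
The plan is to verify the two defining axioms of a covariant derivative for $\pull\nabla$ directly from formula~\eqref{eq:pullback_covariant_derivative}, using only that $\pull$ and $\push$ are mutually inverse $\RR$-linear maps that respect tensor products and commute with contractions, together with the naturality identity $\push\bigl(\XX(f)\bigr) = (\push\XX)(\push f)$ for $f\in\Cinf(\univ)$ and $\XX\in\mathfrak{X}^{1}(\univ)$ (and its $\pull$-counterpart). First I would check that $\pull\nabla$ restricts to the ordinary derivative on scalar fields: $(\pull\nabla)_\XX f = \pull\bigl(\nabla_{\push\XX}(\push f)\bigr) = \pull\bigl((\push\XX)(\push f)\bigr) = \pull\push\bigl(\XX(f)\bigr) = \dd f\cdot\XX$. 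Then $\Cinf(\univ)$-linearity in the first slot follows from $\push(f\XX) = (\push f)(\push\XX)$ and the $\Cinf$-linearity of $\nabla$: $(\pull\nabla)_{f\XX}\YY = \pull\bigl((\push f)\,\nabla_{\push\XX}\push\YY\bigr) = f\,(\pull\nabla)_\XX\YY$. Finally the Leibniz rule in the second slot follows by expanding $\push(f\YY) = (\push f)(\push\YY)$, applying the Leibniz rule for $\nabla$, and pushing back: $(\pull\nabla)_\XX(f\YY) = (\dd f\cdot\XX)\,\YY + f\,(\pull\nabla)_\XX\YY$. These three computations show that $\pull\nabla$ is a covariant derivative (its output is a vector field since $\pull$ and $\push$ preserve contravariant tensor fields).

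For the second assertion, the identity $\pull(\nabla_\XX\bt) = (\pull\nabla)_{\pull\XX}(\pull\bt)$ holds for $\bt$ a vector field essentially by construction, since $(\pull\nabla)_{\pull\XX}\pull\YY = \pull\bigl(\nabla_{\push\pull\XX}\push\pull\YY\bigr) = \pull(\nabla_\XX\YY)$ using $\push\pull = \id$; and it holds for $\bt$ a scalar field by the first step together with $\pull\bigl(\XX(f)\bigr) = (\pull\XX)(\pull f)$. I would then extend it to arbitrary mixed tensors by induction on tensor order: pairing a covector $\alpha$ with an arbitrary vector field yields a scalar, so applying $\pull$ — which commutes with contraction — to the Leibniz identity $\nabla_\XX\langle\alpha,\YY\rangle = \langle\nabla_\XX\alpha,\YY\rangle + \langle\alpha,\nabla_\XX\YY\rangle$ and invoking the already established cases pins down $\pull(\nabla_\XX\alpha)$; compatibility of $\pull$, $\push$ and of both connections with tensor products then propagates the identity from these generators to all tensor fields.

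The computations are entirely formal; the only point requiring attention is bookkeeping — keeping straight that $\pull$ and $\push$ are inverse to each other, that they are compatible with tensor products and commute with every contraction, and that the naturality $\push(\XX f) = (\push\XX)(\push f)$ is precisely what makes $\pull\nabla$ restrict to $\dd$ on scalars. I do not expect any genuine obstacle: the statement is a compatibility lemma whose proof reduces to the elementary functorial properties of pullback and pushforward recalled in \autoref{subsec:diff-action-tensors}.
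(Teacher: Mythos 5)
Your proposal is correct and follows essentially the same route as the paper's proof: both verify the two axioms of a covariant derivative by the same computations using $\push(f\XX)=(\push f)(\push\XX)$ and $\pull\dd=\dd\pull$, and both obtain the identity for general tensors by the formal extension through tensor products and contractions (you merely spell out the induction via pairing with covectors, which the paper leaves as "formal" and illustrates on the metric). No gaps.
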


\begin{proof}
  Recall that a covariant derivative coincides with \(\dd\) on scalar fields, and that once it is defined on contravariant vector fields, it extends in a unique way to the tensor bundle using Leibniz rule. Let us thus check that \eqref{eq:pullback_covariant_derivative} defines indeed a covariant derivative on \(T\univ\).
  One has
  \[
    (\pull\nabla)_{f\XX}(\YY)
    =
    \pull({\nabla}_{\push (f\XX)}(\push \YY))
    =
    \pull(\push f)\pull({\nabla}_{\push \XX}(\push \YY))
    =
    f
    (\pull\nabla)_{\XX}(\YY),
  \]
  and
  \[
    (\pull\nabla)_{\XX}(f\YY)
    =
    \pull({\nabla}_{\push \XX}\push (f\YY))
    =
    \pull({\nabla}_{\push \XX}(\push f\push \YY))
    =
    \pull(\dd\push f(\XX)\push \YY+\push f{\nabla}_{\push \XX}(\push \YY)).
  \]
  Recall that \(\pull\dd=\dd\pull\), hence:
  \[
    (\pull\nabla)_{\XX}(f\YY)
    =
    \dd f(\XX)\YY+f(\pull\nabla)_{\XX}(\YY).
  \]

  Similarly the action of diffeomorphisms is the composition on scalar fields, and once it is defined on contravariant vector fields, it extends in a unique way to the tensor bundle using the diagonal action on contravariant tensor fields and the compatibility with contraction. We have already checked that for scalar fields \(\pull\dd=\dd\pull\) and \eqref{eq:pullback_covariant_derivative} is an ad-hoc definition to ensure that \(\pull\nabla=(\pull\nabla)\pull\) on contravariant vector fields. The extension of this formula to all tensor fields is then formal (see \autoref{exam:nabla_pull} below).
\end{proof}

\begin{exam}\label{exam:nabla_pull}
  For a metric \(g\), one has:
  \begin{align*}
    (\pull\nabla)(\pull g) & (\pull \XX,\pull \YY)
    \\
                           & =
    \dd\pull\left(g(\XX,\YY)\right)
    -
    (\pull g)((\pull\nabla)\pull \XX,\pull \YY)
    -
    (\pull g)(\pull \XX,(\pull\nabla)\pull \YY)
    \\
                           & =
    \pull\dd\left(g(\XX,\YY)\right)
    -
    (\pull g)(\pull(\nabla \XX),\pull \YY)
    -
    (\pull g)(\pull \XX,\pull(\nabla \YY))
    \\
                           & =
    \pull
    \left(
    \dd\left(g(\XX,\YY)\right)
    -
    g(\nabla \XX, \YY)
    -
    g(\XX,\nabla \YY)
    \right)
    \\
                           & =
    \pull((\nabla g)(\XX,\YY))
    =
    \pull(\nabla g)(\pull \XX,\pull \YY).
  \end{align*}
\end{exam}

We list below some formal consequences of \eqref{eq:pullback_covariant_derivative}, without demonstration.

\begin{cor}\label{cor:covariance_tensors}
  Let \(\nabla\) be a covariant derivative. The following properties hold, for any local diffeomorphism \(\varphi\) of \(\univ\).
  \begin{enumerate}
    \item General Covariance of the torsion tensor:
          \[
            \bT^{\pull\nabla} = \pull\left(\bT^{\nabla}\right).
          \]
    \item General Covariance of the curvature tensor:
          \[
            \bR^{\pull\nabla} = \pull\left(\bR^{\nabla}\right).
          \]
    \item General Covariance of the Hessian tensor:
          \[
            \Hess^{\pull\nabla} \pull f = \pull(\Hess^{\nabla} f).
          \]
  \end{enumerate}
\end{cor}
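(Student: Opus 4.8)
The plan is to obtain all three identities as purely formal consequences of the defining formula~\eqref{eq:pullback_covariant_derivative} for the pullback connection, together with three facts already available: that $\pull\dd=\dd\pull$, that pullback and pushforward commute with contractions, and the naturality of the Lie bracket under diffeomorphisms, $\push[\XX,\YY]=[\push\XX,\push\YY]$ (equivalently $\pull[\push\XX,\push\YY]=[\XX,\YY]$). The last of these is the only one not spelled out in the excerpt, so I would first record it; it is immediate from the local coordinate expression of the bracket, or from the fact that $\XX$ and $\push\XX$ are $\varphi$-related.

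Next I would extract the handier reformulation of~\eqref{eq:pullback_covariant_derivative} obtained by writing $\YY=\pull\WW$ for a vector field $\WW$ on the target, namely $(\pull\nabla)_{\XX}(\pull\WW)=\pull(\nabla_{\push\XX}\WW)$, and note that the preceding Lemma upgrades this to $(\pull\nabla)_{\XX}(\pull\bt)=\pull(\nabla_{\push\XX}\bt)$ for every tensor field $\bt$ on the target. I would pair this with the evaluation rule $(\pull\bt)(\XX_1,\dots,\XX_k)=\pull\bigl(\bt(\push\XX_1,\dots,\push\XX_k)\bigr)$ valid for a tensor field $\bt$ with $k$ covariant slots, which is nothing but the definition of the pullback combined with compatibility with contraction.

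With these at hand, the torsion identity is a one-liner:
\[
  \bT^{\pull\nabla}(\XX,\YY)
  =(\pull\nabla)_{\XX}\YY-(\pull\nabla)_{\YY}\XX-[\XX,\YY]
  =\pull\bigl(\nabla_{\push\XX}\push\YY-\nabla_{\push\YY}\push\XX-[\push\XX,\push\YY]\bigr)
  =\pull\bigl(\bT^{\nabla}(\push\XX,\push\YY)\bigr)=(\pull\bT^{\nabla})(\XX,\YY).
\]
For the curvature I would iterate the reformulated rule: from $(\pull\nabla)_{\YY}\pmb{Z}=\pull(\nabla_{\push\YY}\push\pmb{Z})$, applying $(\pull\nabla)_{\XX}$ with $\WW=\nabla_{\push\YY}\push\pmb{Z}$ gives $(\pull\nabla)_{\XX}(\pull\nabla)_{\YY}\pmb{Z}=\pull(\nabla_{\push\XX}\nabla_{\push\YY}\push\pmb{Z})$, symmetrically for the swapped term, while $(\pull\nabla)_{[\XX,\YY]}\pmb{Z}=\pull(\nabla_{[\push\XX,\push\YY]}\push\pmb{Z})$ by~\eqref{eq:pullback_covariant_derivative} and bracket naturality; subtracting yields $\bR^{\pull\nabla}(\XX,\YY)\pmb{Z}=\pull\bigl(\bR^{\nabla}(\push\XX,\push\YY)\push\pmb{Z}\bigr)=(\pull\bR^{\nabla})(\XX,\YY)\pmb{Z}$. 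For the Hessian I would use $\Hess^{\pull\nabla}\pull f=(\pull\nabla)(\dd\pull f)=(\pull\nabla)(\pull\dd f)$ via $\pull\dd=\dd\pull$, then apply the preceding Lemma with $\bt=\dd f$ and the evaluation rule to get $(\Hess^{\pull\nabla}\pull f)(\XX,\YY)=(\nabla_{\push\XX}\dd f)(\push\YY)=(\Hess^{\nabla}f)(\push\XX,\push\YY)$, i.e. $\pull(\Hess^{\nabla}f)$.

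I do not anticipate a genuine obstacle: the whole argument is bookkeeping. The single point demanding care is keeping the variances straight — remembering that a covariant slot is fed through $T\varphi$ while the contravariant output is transported back through $T\varphi^{-1}$ — so that the pattern ``push the arguments, differentiate, pull the result back'' truly reproduces the pullback of the answer. Isolating the evaluation rule in the second step is exactly what makes this routine.
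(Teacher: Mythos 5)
Your proof is correct and complete. The paper states this corollary explicitly ``without demonstration'' as a formal consequence of the defining relation \eqref{eq:pullback_covariant_derivative}, and your argument supplies exactly the intended bookkeeping: the reformulation \((\pull\nabla)_{\XX}(\pull\WW)=\pull(\nabla_{\push\XX}\WW)\), the naturality of the Lie bracket under pushforward, the identity \(\pull\dd=\dd\pull\), and the compatibility of pullback with contraction are precisely the ingredients the surrounding text (the preceding Lemma and \autoref{exam:nabla_pull}) has already put in place, so nothing further is needed.
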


Recall that on a pseudo-Riemannian manifold with metric \(g\), we denote the Levi-Civita connection by \(\nabla^{g}\).
The connection \(\pull\nabla^{g}\) satisfies
\(
\bT^{\pull\nabla^{g}} = \pull\left(\bT^{\nabla^{g}}\right) = 0
\),
and
\(
\left(\pull\nabla^{g} \right) {\pull g}  = \pull\left( \nabla^{g} g\right) = 0
\).
Therefore, by the unique characterization of the Levi-Civita connection, we have
\[
  \nabla^{\pull g} = \pull\nabla^{g}.
\]
We can hence specialize the above results as follow:
\[
  \bR^{\pull g} = \pull\bR^{g},
  \quad \text{and} \quad
  \Hess^{\pull g} \pull f = \pull(\Hess^{g} f).
\]
It follows that the \textsl{Ricci tensor} \(\Ric_{g} \bydef (R_{\lambda\nu})\) where \(R_{\lambda\nu}\bydef\tensor{R}{^\kappa_{\lambda\kappa\nu}}\) and the \textsl{scalar curvature} \(R_{g} \bydef g^{\lambda\nu}R_{\lambda\nu}\), obtained by contraction, also satisfy
\[
  \Ric_{\pull g} = \pull\Ric_{g}, \quad \text{and} \quad R_{\pull g} = \pull R_{g}.
\]

\subsection{Normalization of the Christoffel symbols}
\label{subsec:diff-action-jets}

We consider the Christoffel symbols of a connection \(\nabla\) in local coordinates
\[
  \tensor{\Gamma}{^{\kappa}_{\lambda\mu}}
  =
  \dd x^{\kappa}
  \left(\nabla_{\partial_{\mu}}\partial_{\lambda} \right).
\]
and their jets.
In this section, for a local diffeomorphism \(\varphi\), we adopt the notation
\[
  \tensor{\bar\Gamma}{^{\kappa}_{\lambda\mu}}
  \bydef
  \dd x^{\kappa}
  \left((\pull\nabla)_{\partial_{\mu}}\partial_{\lambda} \right),
\]
and we shall establish the following enhancement of a classical normalization result (see also~\cite[(35.8)]{Sou1964} and ~\cite[Section 2]{KS2021}).
\begin{thm}[Normalization of the Christoffel symbols]
  For any connection \(\nabla\), and for any point \(m\in\univ\), there exists a diffeomorphism with trivial linear part such that, punctually at \(m\),
  \[
    \tensor{\bar\Gamma}{^\kappa_{(\lambda\mu)}} (m) = 0
    \quad\text{and}\quad
    \tensor{\bar\Gamma}{^\kappa_{(\lambda\mu,\nu)}}(m) = 0,
  \]
  where the comma in the index is the traditional notation for the partial derivative, and where parentheses is a shorthand notation for symmetrization.
\end{thm}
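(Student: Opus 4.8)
The assertion is purely local and punctual at $m$, so the plan is to fix a chart $x\colon U\to\RR^{4}$ with $x(m)=0$ and to search for $\varphi$ among the polynomial diffeomorphisms which, read in this chart, have the form
\[
  \varphi^{\kappa}(x)=x^{\kappa}+\tfrac12\,a^{\kappa}_{\lambda\mu}\,x^{\lambda}x^{\mu}+\tfrac16\,b^{\kappa}_{\lambda\mu\nu}\,x^{\lambda}x^{\mu}x^{\nu},
\]
with $a^{\kappa}_{\lambda\mu}=a^{\kappa}_{\mu\lambda}$ and $b^{\kappa}_{\lambda\mu\nu}$ totally symmetric in the lower indices. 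Any such $\varphi$ fixes $m$ and has $T_{m}\varphi=\mathrm{id}$; since its linear part is the identity it restricts to a diffeomorphism of a neighbourhood of $m$, and (replacing the nonlinear part by its product with a bump function supported near $m$, if a globally defined diffeomorphism is wanted) it may be taken to coincide with the identity outside that neighbourhood. The free data are the symmetric arrays $a$ and $b$, which I will fix in turn so as to impose the two normalization identities.

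\textbf{Transformation law.} The key input is that $\pull\nabla$ is again a covariant derivative (established above) and that its Christoffel symbols $\bar\Gamma^{\kappa}_{\lambda\mu}$ in the chart $x$ are obtained from those of $\nabla$ by the classical inhomogeneous transformation rule associated with $\varphi$ — more precisely with $\varphi$ or $\varphi^{-1}$ according to the convention~\eqref{eq:pullback_covariant_derivative}, but since both maps have trivial $1$-jet at $m$ this changes only signs. The only features I need are structural. Evaluated at $m$, the Jacobian of $\varphi$ is the identity, so the inhomogeneous term reduces to (up to an overall sign) the second derivative of $\varphi$ at $m$, i.e. to $a^{\kappa}_{\lambda\mu}$, and every other contribution involves only the $1$-jet of $\varphi$ (here trivial) and $\Gamma(m)$. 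Differentiating the transformation rule once more and evaluating at $m$, the $3$-jet of $\varphi$ enters only through $\mp b^{\kappa}_{\lambda\mu\nu}$, while all remaining terms depend polynomially on $a$, on $\Gamma(m)$ and on $\partial_{\nu}\Gamma^{\kappa}_{\lambda\mu}(m)$ — quantities already fixed once $a$ is chosen.

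\textbf{Choosing $a$, then $b$.} At order zero we therefore get $\bar\Gamma^{\kappa}_{\lambda\mu}(m)=\Gamma^{\kappa}_{\lambda\mu}(m)-a^{\kappa}_{\lambda\mu}$ (in one of the two sign conventions). As $a$ is symmetric in its lower indices, only the symmetric part of the left-hand side is at our disposal; this matches the theorem, the antisymmetric part being $\tfrac12$ the torsion, a tensor, hence unchanged by a diffeomorphism with trivial linear part. Setting $a^{\kappa}_{\lambda\mu}\bydef\Gamma^{\kappa}_{(\lambda\mu)}(m)$ yields $\bar\Gamma^{\kappa}_{(\lambda\mu)}(m)=0$. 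With $a$ now fixed, the first-order relation reads $\partial_{\nu}\bar\Gamma^{\kappa}_{\lambda\mu}(m)=\Theta^{\kappa}_{\lambda\mu\nu}-b^{\kappa}_{\lambda\mu\nu}$ for some explicit $\Theta$ built from $\partial_{\nu}\Gamma^{\kappa}_{\lambda\mu}(m)$, $\Gamma(m)$ and $a$. Since $b$ is totally symmetric in $(\lambda\mu\nu)$, only the totally symmetrized part is reachable; setting $b^{\kappa}_{\lambda\mu\nu}\bydef\Theta^{\kappa}_{(\lambda\mu\nu)}$ gives $\bar\Gamma^{\kappa}_{(\lambda\mu,\nu)}(m)=0$. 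Finally, changing $b$ only alters the $3$-jet of $\varphi$, which affects $\bar\Gamma$ at $m$ at order one and higher but not at order zero, so the choice of $b$ does not disturb $\bar\Gamma^{\kappa}_{(\lambda\mu)}(m)=0$; the single $\varphi$ built from this $a$ and $b$ satisfies both identities.

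\textbf{Main obstacle.} All the substance is in the ``Transformation law'' paragraph: pinning down the inhomogeneous transformation rule with correct signs (including the precise bookkeeping forced by the convention~\eqref{eq:pullback_covariant_derivative} between $\varphi$ and $\varphi^{-1}$), and checking the claimed triangular structure — that the $(k{+}2)$-jet of $\varphi$ at $m$ enters the $k$-jet of $\bar\Gamma$ at $m$ affinely, with its fully symmetric part appearing with an invertible coefficient and coupling precisely to the fully symmetrized component of $\bar\Gamma$, all lower-order data already being determined. Everything else is routine manipulation of symmetrizations. As a remark, when $\nabla$ is torsion-free the two identities just proved, together with $R^{\kappa}_{\lambda\mu\nu}(m)=\partial_{\mu}\Gamma^{\kappa}_{\nu\lambda}(m)-\partial_{\nu}\Gamma^{\kappa}_{\mu\lambda}(m)$ (the quadratic terms dropping out since $\bar\Gamma(m)=0$), determine $\partial_{\nu}\bar\Gamma^{\kappa}_{\lambda\mu}(m)=-\tfrac13\big(R^{\kappa}_{\lambda\mu\nu}(m)+R^{\kappa}_{\mu\lambda\nu}(m)\big)$, the sharpened form invoked in the proof of \autoref{thm:second_gradient}.
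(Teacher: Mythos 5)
Your proposal is correct and follows essentially the same route as the paper: a diffeomorphism with trivial $1$-jet and freely prescribable symmetric $2$- and $3$-jets, whose effect on $\bar\Gamma$ is affine and triangular in those jets, so the two symmetrized conditions can be solved one after the other (first the $2$-jet, then the $3$-jet, the latter not disturbing the former). The one step you defer --- writing out the transformation law under the pullback convention with correct signs --- is precisely what the paper's proof computes in \eqref{eq:bar_gamma} and \eqref{eq:diff_action_gamma}, and every structural feature you assert (the $2$-jet entering $\bar\Gamma(m)$ affinely with invertible coefficient, the $3$-jet entering $\tensor{\bar\Gamma}{^\kappa_{\lambda\mu,\nu}}(m)$ affinely, only the symmetrized parts being reachable, and the torsion part being untouched) is confirmed by those formulas.
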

\begin{proof}
  Let us first set the context of the proof.
  We consider a local diffeomorphism \(\varphi\) fixing a local chart around \(m\), equipped with a local coordinate system \((x^\mu)\).
  At \(\varphi(m)\), we define the Taylor coefficients
  \[
    {A^\kappa}_\lambda
    =
    \push(\partial_{\lambda}\varphi^\kappa),
    \quad
    {B^\kappa}_{\lambda \mu}
    =
    \push(\partial_{\mu}\partial_{\lambda}\varphi^\kappa),
    \quad
    {C^\kappa}_{\lambda \mu \nu}
    =
    \push(\partial_\nu\partial_{\mu}\partial_{\lambda}\varphi^\kappa).
  \]
  Note that the matrix \(({A^\kappa}_\lambda)\) is by assumption invertible, and that by virtue of the Schwarz Lemma, both \({B^\kappa}_{\lambda \mu}\) and \({C^\kappa}_{\lambda \mu \nu}\) are totally symmetric with respect to their lower indices.
  It follows from the very definitions of \(\mathbf{A},\mathbf{B},\mathbf{C}\) that
  \[
    \left\{
    \begin{array}{lclclclcl}
      \nabla_{\push\partial_\mu}
      \tensor{A}{^\kappa_\lambda}
       & = &
      \nabla_{\push\partial_\mu}
      \push(\partial_{\lambda}\varphi^\kappa)
       & = &
      \push ( \nabla_{\partial_\mu} (\partial_{\lambda}\varphi^\kappa))
       & = &
      \push (\partial_\mu \partial_{\lambda}\varphi^\kappa)
       & = &
      \tensor{B}{^\kappa_{\lambda\mu}}
      \\
      \nabla_{\push\partial_\nu}
      \tensor{B}{^\kappa_{\lambda\mu}}
       & = &
      \nabla_{\push\partial_\nu}
      \push (\partial_\mu \partial_{\lambda}\varphi^\kappa)
       & = &
      \push( \nabla_{\partial_\nu} \partial_\mu \partial_{\lambda}\varphi^\kappa)
       & = &
      \push(\partial_\nu\partial_{\mu}\partial_{\lambda}\varphi^\kappa)
       & = &
      {C^\kappa}_{\lambda \mu \nu}
    \end{array}
    \right..
  \]

  We can now compute \(\bar\Gamma\). By definition:
  \begin{equation}\label{eq:A}
    (\push\partial_\lambda)_{\varphi(m)}
    =
    (\partial_\lambda\varphi^\kappa)_{m}(\partial_\kappa)_{\varphi(m)}
    =
    (\tensor{A}{^\kappa_\lambda}\partial_\kappa)_{\varphi(m)}.
  \end{equation}

  By Leibniz rule, at a point \(\varphi(m)\), one has
  \begin{align*}
    \nabla_{\push\partial_\mu}(\push\partial_\lambda)
     & =
    \nabla_{\push\partial_\mu}
    \left(\tensor{A}{^\sigma_\lambda}\partial_\sigma\right)
    =
    \nabla_{\push\partial_\mu} \left(\partial_\sigma\right) \tensor{A}{^\sigma_\lambda}
    +
    \nabla_{\push\partial_\mu}(\tensor{A}{^\sigma_\lambda}) \partial_\sigma
    \\
     & =
    \nabla_{\partial_\tau} \left(\partial_\sigma\right) \tensor{A}{^\tau_\mu}\tensor{A}{^\sigma_\lambda}
    +
    \tensor{B}{^\sigma_{\lambda\mu}} \partial_\sigma
    =
    \left(
    \tensor{\Gamma}{^{\rho}_{\sigma\tau}}\
    \tensor{A}{^\tau_\mu}
    \tensor{A}{^\sigma_\lambda}
    +
    \tensor{B}{^\rho_{\lambda\mu}}
    \right)
    \partial_\rho.
  \end{align*}

  Therefore, inverting \eqref{eq:A} to get
  \[
    \partial_\rho
    =
    \tensor{(\mathbf{A}^{-1})}{^\kappa_\rho}
    \push\partial_\kappa,
  \]
  and pulling back at \(m\):
  \[
    (\pull\nabla)_{\partial_{\mu}}(\partial_\lambda)
    =
    \pull\left(\nabla_{\push\partial_\mu}(\push\partial_\lambda)\right)
    =
    \pull
    \left(
    \tensor{(\mathbf{A}^{-1})}{^\kappa_\rho}
    \left(
      \tensor{\Gamma}{^{\rho}_{\sigma\tau}}
      \tensor{A}{^{\tau}_\mu}
      \tensor{A}{^{\sigma}_{\lambda}}
      +
      \tensor{B}{^\rho_{\lambda\mu}}
      \right)
    \right)
    \partial_{\kappa}.
  \]
  We get the following expression for \(\bar\Gamma\),
  \begin{equation}\label{eq:bar_gamma}
    \tensor{\bar\Gamma}{^\kappa_{\lambda\mu}}(m)
    =
    \tensor{(\mathbf{A}^{-1})}{^\kappa_\rho}
    \left(
    \tensor{\Gamma}{^{\rho}_{\sigma\tau}}
    \tensor{A}{^{\tau}_\mu}
    \tensor{A}{^{\sigma}_{\lambda}}
    +
    \tensor{B}{^\rho_{\lambda\mu}}
    \right)
    (\varphi(m)).
  \end{equation}
  Differentiating this functional equality, one gets first
  \[
    \tensor{\bar\Gamma}{^\kappa_{\lambda\mu,\nu}}(m)
    =
    (\push\partial_\nu)
    \left(
    \tensor{(\mathbf{A}^{-1})}{^\kappa_\rho}
    \left(
      \tensor{\Gamma}{^{\rho}_{\sigma\tau}}
      \tensor{A}{^{\tau}_\mu}
      \tensor{A}{^{\sigma}_{\lambda}}
      +
      \tensor{B}{^\rho_{\lambda\mu}}
      \right)
    \right)
    (\varphi(m)),
  \]
  from which one infers, using Leibniz rule and the definitions of \(\mathbf{A},\mathbf{B},\mathbf{C}\):
  \begin{multline*}
    \tensor{\bar\Gamma}{^\kappa_{\lambda\mu,\nu}}(m)
    =
    \tensor{(\mathbf{A}^{-1})}{^\kappa_{\alpha}}
    \tensor{B}{^\alpha_{\beta\nu}}
    \tensor{(\mathbf{A}^{-1})}{^\beta_{\rho}}
    \left(
    \tensor{\Gamma}{^{\rho}_{\sigma\tau}}
    \tensor{A}{^{\tau}_\mu}
    \tensor{A}{^{\sigma}_{\lambda}}
    +
    \tensor{B}{^\rho_{\lambda\mu}}
    \right)
    (\varphi(m))
    \\
    +
    \tensor{(\mathbf{A}^{-1})}{^\kappa_\rho}
    \left(
    \tensor{\Gamma}{^{\rho}_{\sigma\tau,\upsilon}}
    \tensor{A}{^{\upsilon}_{\nu}}
    \tensor{A}{^{\tau}_\mu}
    \tensor{A}{^{\sigma}_{\lambda}}
    +
    \tensor{\Gamma}{^{\rho}_{\sigma\tau}}
    \tensor{A}{^{\tau}_\mu}
    \tensor{B}{^{\sigma}_{\lambda\nu}}
    +
    \tensor{\Gamma}{^{\rho}_{\sigma\tau}}
    \tensor{B}{^{\tau}_{\mu\nu}}
    \tensor{A}{^{\sigma}_{\lambda}}
    +
    \tensor{C}{^\rho_{\lambda\mu\nu}}
    \right)
    (\varphi(m)).
  \end{multline*}

  We sum up the results obtained so far in the special case where \(\varphi(m)=m\) and \(T\varphi=\id\). At the point \(m\), one has:
  \begin{equation}\label{eq:diff_action_gamma}
    \left\{
    \begin{aligned}
      \tensor{\bar\Gamma}{^\kappa_{\lambda\mu}}
       & =
      \tensor{\Gamma}{^{\kappa}_{\lambda\mu}}
      +
      \tensor{B}{^\kappa_{\lambda\mu}}
      \\
      \tensor{\bar\Gamma}{^\kappa_{\lambda\mu,\nu}}
       & =
      \tensor{\Gamma}{^{\kappa}_{\lambda\mu,\nu}}
      +
      \tensor{B}{^\kappa_{\rho\nu}} ( \tensor{\Gamma}{^{\rho}_{\lambda\mu}} + \tensor{B}{^\rho_{\lambda\mu}})
      +
      \tensor{\Gamma}{^{\kappa}_{\sigma\mu}}
      \tensor{B}{^{\sigma}_{\lambda\nu}}
      +
      \tensor{\Gamma}{^{\kappa}_{\lambda\tau}}
      \tensor{B}{^{\tau}_{\mu\nu}}
      +
      \tensor{C}{^\kappa_{\lambda\mu\nu}}
    \end{aligned}
    \right..
  \end{equation}
  Since \(\mathbf{B}\) and \(\mathbf{C}\) are arbitrary symmetric in lower indices, we can fix them one-by-one in order to cancel out the symmetric parts (firstly \(\mathbf{B}\) in the top row and then \(\mathbf{C}\) in the bottom row).
\end{proof}

In the case of the Levi-Civita connection, the normalization result becomes
\begin{cor}
  \label{thm:normalization}
  For any point \(m\in(\univ,g)\), there exists a diffeomorphism with trivial linear part \(\varphi\) such that, punctually at \(m\), the Levi-Civita connection of the metric \(\pull g\) satisfies
  \[
    \tensor{\Gamma}{^\kappa_{\lambda\mu}}=0
    \quad\text{and}\quad
    \tensor{\Gamma}{^\kappa_{\lambda\mu,\nu}}
    =
    -\frac{1}{3} \big(\tensor{R}{^\kappa_{\lambda\mu\nu}}+\tensor{R}{^\kappa_{\mu\lambda\nu}}\big).
  \]
\end{cor}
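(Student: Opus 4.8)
The plan is to derive the Corollary directly from the Normalization theorem proved above by specializing to the Levi-Civita connection and then untangling the index symmetries that survive the normalization.

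First I would apply the Normalization theorem to $\nabla^{g}$ at the point $m$: it produces a diffeomorphism $\varphi$ with trivial linear part such that the Christoffel symbols $\tensor{\bar\Gamma}{^{\kappa}_{\lambda\mu}}$ of $\pull\nabla^{g}$ satisfy $\tensor{\bar\Gamma}{^{\kappa}_{(\lambda\mu)}}(m)=0$ and $\tensor{\bar\Gamma}{^{\kappa}_{(\lambda\mu,\nu)}}(m)=0$. Since $\bT^{\nabla^{g}}=0$, \autoref{cor:covariance_tensors} gives $\bT^{\pull\nabla^{g}}=\pull(\bT^{\nabla^{g}})=0$, and, combined with $(\pull\nabla^{g})(\pull g)=0$ and the uniqueness of the Levi-Civita connection recalled above, this identifies $\pull\nabla^{g}=\nabla^{\pull g}$; hence the $\tensor{\bar\Gamma}{^{\kappa}_{\lambda\mu}}$ are exactly the Christoffel symbols of $\pull g$. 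Torsion-freeness makes $\tensor{\bar\Gamma}{^{\kappa}_{\lambda\mu}}$ symmetric in $\lambda,\mu$, so $\tensor{\bar\Gamma}{^{\kappa}_{(\lambda\mu)}}=\tensor{\bar\Gamma}{^{\kappa}_{\lambda\mu}}$ and the first conclusion $\tensor{\bar\Gamma}{^{\kappa}_{\lambda\mu}}(m)=0$ follows immediately.

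For the second conclusion, note that $\tensor{\bar\Gamma}{^{\kappa}_{\lambda\mu,\nu}}$ is still symmetric in $\lambda,\mu$, so the total symmetrization that the theorem kills reads, at $m$, $\tensor{\bar\Gamma}{^{\kappa}_{\lambda\mu,\nu}}+\tensor{\bar\Gamma}{^{\kappa}_{\mu\nu,\lambda}}+\tensor{\bar\Gamma}{^{\kappa}_{\nu\lambda,\mu}}=0$. On the other hand, the coordinate expression of the curvature of $\pull g$ is $\tensor{R}{^{\kappa}_{\lambda\mu\nu}}=\partial_{\mu}\tensor{\bar\Gamma}{^{\kappa}_{\nu\lambda}}-\partial_{\nu}\tensor{\bar\Gamma}{^{\kappa}_{\mu\lambda}}+(\text{terms quadratic in }\bar\Gamma)$; evaluating at $m$, where $\bar\Gamma$ vanishes, the quadratic terms disappear and $\tensor{R}{^{\kappa}_{\lambda\mu\nu}}(m)=\tensor{\bar\Gamma}{^{\kappa}_{\nu\lambda,\mu}}-\tensor{\bar\Gamma}{^{\kappa}_{\mu\lambda,\nu}}$. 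Writing $\tensor{S}{^{\kappa}_{\lambda\mu\nu}}\bydef\tensor{\bar\Gamma}{^{\kappa}_{\lambda\mu,\nu}}$ (symmetric in its first two lower slots), adding this curvature relation to the one obtained by exchanging $\lambda\leftrightarrow\mu$, and then inserting the cyclic identity, I expect to reach $\tensor{R}{^{\kappa}_{\lambda\mu\nu}}+\tensor{R}{^{\kappa}_{\mu\lambda\nu}}=-3\,\tensor{S}{^{\kappa}_{\lambda\mu\nu}}$, which is the asserted formula.

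The only substantive step is this last manipulation, and it is bookkeeping rather than a genuine obstacle: $\tensor{\bar\Gamma}{^{\kappa}_{\lambda\mu,\nu}}$ is pinned down by three linear constraints — symmetry in the first two lower indices, vanishing of the total symmetrization, and the curvature relation above — which together determine it uniquely. As an alternative one could instead feed the explicit choices of the Taylor coefficients $\mathbf{B}$ and $\mathbf{C}$ used in the proof of the general theorem into \eqref{eq:diff_action_gamma}, but keeping that theorem as a black box is cleaner.
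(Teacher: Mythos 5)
Your proposal is correct and follows essentially the same route as the paper: apply the general normalization theorem, use torsion-freeness to collapse the symmetrizations, and exploit the coordinate formula for the curvature at a point where the Christoffel symbols vanish to solve for $\tensor{\bar\Gamma}{^\kappa_{\lambda\mu,\nu}}$ in terms of $\tensor{R}{^\kappa_{\lambda\mu\nu}}+\tensor{R}{^\kappa_{\mu\lambda\nu}}$. The identification $\pull\nabla^{g}=\nabla^{\pull g}$ that you spell out is likewise established in the paper just before the corollary, so nothing is missing.
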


\begin{proof}
  The Levi-Civita connection is symmetric, hence
  \[
    \tensor{\bar\Gamma}{^\kappa_{(\lambda\mu)}}=0
    \iff
    \tensor{\bar\Gamma}{^\kappa_{\lambda\mu}}=0.
  \]
  Moreover, at a point where the Christoffel coefficients vanish, the local components of the Riemann tensor are
  \( \tensor{R}{^{\kappa}_{\lambda\mu\nu}} = \tensor{\Gamma}{^{\kappa}_{\nu\lambda,\mu}} - \tensor{\Gamma}{^{\kappa}_{\mu\lambda,\nu}} \).
  One infers, using the symetries of \(\tensor{\Gamma}{^{\kappa}_{\mu\lambda,\nu}}\) in \(\mu,\lambda\) that
  \[
    \tensor{R}{^{\kappa}_{\lambda\mu\nu}}
    +
    \tensor{R}{^{\kappa}_{\mu\lambda\nu}}
    =
    \tensor{\Gamma}{^{\kappa}_{\nu\lambda,\mu}}
    +
    \tensor{\Gamma}{^{\kappa}_{\nu\mu,\lambda}}
    -
    2
    \tensor{\Gamma}{^{\kappa}_{\mu\lambda,\nu}}
    =
    \tensor{\Gamma}{^{\kappa}_{(\mu\lambda,\nu)}}
    -
    3
    \tensor{\Gamma}{^{\kappa}_{\mu\lambda,\nu}}.
  \]
  We finally get that
  \[
    \tensor{\Gamma}{^\kappa_{(\lambda\mu,\nu)}}
    =
    0
    \iff
    \tensor{\Gamma}{^\kappa_{\lambda\mu,\nu}}
    =
    -\frac{1}{3}
    \big(\tensor{R}{^\kappa_{\lambda\mu\nu}} + \tensor{R}{^\kappa_{\mu\lambda\nu}}\big),
  \]
  which ends the proof.
\end{proof}

\section{Volume forms and Hodge operator}
\label{sec:volume-and-Hodge}

\subsection{Volume pseudo-form}
\label{subsec:volume-pseudo-form}

The manifold \(\univ\) is orientable if and only if there exists a global trivialization \(\vartheta\) of the line bundle \(\Wedge^{4}\TM\).
Assume now that \(\univ\) is oriented. Given a Lorentzian metric \(g\) on \(\univ\), one can then define the \textsl{Riemannian volume form} \(\vol_{g}\) as follows (the dependence in \(\vartheta\) is implicit). At a point \(m\in\univ\), given the orientation \(\vartheta_{m}\) of \(\TM[m]\) (seen as a non-zero element of the \(1\)-dimensional vector space \(\Wedge^{4}\TM[m]\)), a basis \(v \bydef(v_{0},v_{1},v_{2},v_{3})\) of \(\TM[m]\) is called \textsl{direct} if
\[
  (v^{0}\wedge v^{1}\wedge v^{2}\wedge v^{3})(\vartheta_{m}) > 0,
\]
where \((v^{0},v^{1},v^{2},v^{3})\) is the dual basis of \(\TM[m]^{*}\). Given a direct orthonormal basis \((v_{0},v_{1},v_{2},v_{3})\) of \(\TM[m]\), we set
\[
  \vol_{g}(m) \bydef v^{0}\wedge v^{1}\wedge v^{2}\wedge v^{3}.
\]
Since any other direct orthonormal basis is written \(Lv\), where \(L\) belongs to the special Lorentz group
\[
  \SO(3,1) \bydef \set{L \in \OO(3,1);\; \det L = 1},
\]
this construction does not depend on the choice of the direct orthonormal basis \(v=(v_{0},v_{1},v_{2},v_{3})\). The data \(\vol_{g}(m)\) are glued together in order to define a global smooth volume form \(\vol_{g}\in\Omega^{4}(\univ)\).

For a local diffeomorphism \(\varphi\colon U\to \bar U\), recall from \autoref{subsec:diff-action-tensors} that
\[
  \pull g(\pull v_i,\pull v_j)=g(v_i,v_j)
  \quad\text{and}\quad
  \pull\vartheta(\pull v)=\vartheta(v).
\]
Therefore, if \(v\) is a direct orthonormal basis of \(\TM[\varphi(m)]\) for \((g,\vartheta)\), then \(\pull v\) is a direct orthonormal basis of \(\TM[m]\) for \((\pull g,\pull\vartheta)\).
Evaluating the volume forms on \(\pull v\), one infers that
\[
  \vol_{\pull g,\pull\vartheta}
  =
  \det(T\varphi)^{-1} (\vol_{g,\vartheta})\circ\varphi^{-1}
  =
  \pull\vol_{g,\vartheta}.
\]
In other words the volume form \(\vol_{g,\vartheta}\) is a covariant of the pair \((g,\vartheta)\) under the group of all diffeomorphisms.

If the orientation \(\vartheta\) has been fixed, a slight modification of our arguments shows that \(\vol_{g}\) is a covariant of the field \(g\) under the group of \emph{orientation-preserving} diffeomorphisms, but that it changes sign when the orientation is reversed (it is a density):
\[
  \vol_{\pull g,\vartheta}
  =
  \abs{\det(T\varphi)}^{-1} (\vol_{g,\vartheta})\circ\varphi^{-1}.
\]

In the present work we consider only the subgroup of orientation-preserving diffeomorphisms of \(\univ\), which is common in General Relativity.

\subsection{Hodge star operator}
\label{subsec:Hodge}

We consider an oriented Lorentzian \(4\)-dimensional manifold \((\univ,g,\vartheta)\) with metric signature \((-+++)\).
We fix the orientation \(\vartheta\) and from now on we will only imply it. Denote by \(\hodge\) the Hodge star operator. Recall that it is \(\mathcal{C}^\infty(\univ)\)-linear and defined recursively on decomposable \(k\) forms by the normalization \( \hodge1=\vol_{g} \) and by the formula
\[
  i_{\pmb V}(\hodge\omega)
  =
  \hodge(\omega\wedge\pmb{V}^\flat),
\]
where \(i_{\pmb{V}}\) is the contraction by a vector field \(\pmb{V}\).
Detailing the definition of \(\hodge\), one has for every \(1\)-forms \(\omega^1\), \(\omega^2\), \(\omega^3\), \(\omega^4\):
\begin{equation}
  \label{eq:hodge}
  \begin{aligned}
    \vol_{g}(\omega^{1\sharp}, \omega^{2\sharp}, \omega^{3\sharp}, \omega^{4\sharp})
    =
    \hodge(1)(\omega^{1\sharp}, \omega^{2\sharp}, \omega^{3\sharp}, \omega^{4\sharp})
     & =
    \hodge(\omega^{1}) (\omega^{2\sharp}, \omega^{3\sharp}, \omega^{4\sharp})
    \\
     & =
    \hodge(\omega^{1} \wedge \omega^{2})(\omega^{3\sharp}, \omega^{4\sharp})
    \\
     & =
    \hodge(\omega^{1} \wedge \omega^{2} \wedge \omega^{3})(\omega^{4\sharp})
    \\
     & =
    \hodge(\omega^{1} \wedge \omega^{2} \wedge \omega^{3} \wedge \omega^{4}).
  \end{aligned}
\end{equation}
The left hand side of \eqref{eq:hodge} being covariant of \(g\) (see \autoref{subsec:diff-action-tensors}, \autoref{subsec:volume-pseudo-form}), these equations yield that the Hodge star operator is also covariant of \(g\) in the sense that for any orientation preserving local diffeomorphism \(\varphi\):
\begin{equation}
  \label{eq:hodge_covariant}
  \ast_{(\pull g)}(\pull\omega)
  =
  \pull(\hodge\omega).
\end{equation}
We admit that in Lorentzian signature \((-+++)\)
\begin{equation}
  \label{eq:hodge_dual}
  \hodge(\hodge\omega)=(-1)^{\deg(\omega)+1}\omega.
\end{equation}
This is easily checked in a local orthonormal coframe using \eqref{eq:hodge}.

If \(\deg(\omega)=3\), one infers from \eqref{eq:hodge_dual} the following alternative implicit definition of the \(1\)-form \(\hodge\omega\). Let \(\pmb{W}\bydef(\hodge\omega)^\sharp\).
Since
\[
  i_{\pmb{W}}\vol_g
  =
  i_{\pmb{W}}(\hodge1)
  =
  \hodge(\pmb{W}^\flat)
  =
  \hodge(\hodge\omega),
\]
the vector field \(\pmb{W}\) is defined implicitly by
\begin{equation}
  \label{eq:W}
  i_{\pmb{W}}\vol_g
  =
  \omega.
\end{equation}
For a decomposable \(3\)-form \(\omega=\omega^1\wedge\omega^2\wedge\omega^3\), using the definition of \(\pmb{W}\), \eqref{eq:hodge} and \eqref{eq:W}, one has
\[
  g(\pmb{W},\pmb{W})
  =
  \hodge\omega\cdot\pmb{W}
  =
  \vol_{g}(\omega^{1\sharp},\omega^{2\sharp},\omega^{3\sharp},\pmb{W})
  =
  -(\omega^1\wedge\omega^2\wedge\omega^3)(\omega^{1\sharp},\omega^{2\sharp},\omega^{3\sharp}),
\]
and thus
\begin{equation}\label{eq:norm_W}
  g(\pmb{W},\pmb{W})
  =
  -\det(g^{-1}(\omega^I,\omega^J)).
\end{equation}

\subsection{Matter coframe}

Let \(\Psi\colon\univ\to\RR^3\) be a submersion and suppose that the subspace \(\ker(T\Psi)\) is either timelike or spacelike relatively to $g$. Let \(\Psi^1,\Psi^2,\Psi^3\) denote its components in the canonical basis of \(\RR^3\). Applying \eqref{eq:hodge} to the decomposable \(3\)-form \(\dd\Psi^1\wedge\dd\Psi^2\wedge\dd\Psi^3\), one gets
\[
  \hodge(\dd\Psi^1\wedge\dd\Psi^2\wedge\dd\Psi^3)(\pmb{V})
  =
  \vol_{g}(\grad^{g}\Psi^1,\grad^{g}\Psi^2,\grad^{g}\Psi^3,\pmb{V}).
\]
Looking at the right hand side, the covector field \(\hodge(\dd\Psi^1\wedge\dd\Psi^2\wedge\dd\Psi^3)\) depends only on the punctual value of the metric \(g\) and on the first order jet of the function \(\Psi\). Furthermore, looking this time at the left hand side, by \eqref{eq:hodge_covariant}, the covector field \(\hodge(\dd\Psi^1\wedge\dd\Psi^2\wedge\dd\Psi^3)\) is covariant of \(g\) and \(\Psi\).

Notice that
\[
  g^{-1}(\hodge(\dd\Psi^1\wedge\dd\Psi^2\wedge\dd\Psi^3),\dd\Psi^I)
  =
  \vol_{g}(\grad^{g}\Psi^1,\grad^{g}\Psi^2,\grad^{g}\Psi^3,\grad\Psi^I)
  =
  0.
\]
Since \(\Psi\) is a submersion, the covector fields \(\dd\Psi^1,\dd\Psi^2,\dd\Psi^3\) span a subspace of dimension \(3\). Therefore the vector field \((\hodge(\dd\Psi^1\wedge\dd\Psi^2\wedge\dd\Psi^3))^\sharp\) is a generator of \(\ker(T\Psi)\). This yields by assumption that this vector field is not isotropic, and hence neither is its dual covector field \(\hodge(\dd\Psi^1\wedge\dd\Psi^2\wedge\dd\Psi^3)\).
One can sum up some of the previous consideration in the following Lemma.

\begin{lem}\label{lem:hodge}
  Let \(\Psi\colon\univ\to\RR^3\) be a submersion and suppose that the subspace \(\ker(T\Psi)\) is either timelike or spacelike relatively to $g$.
  Then, the family of covector fields
  \[
    \mF(g,\Psi)
    \bydef
    \left(\hodge(\dd\Psi^{1}\wedge\dd\Psi^{2}\wedge\dd\Psi^3), \dd\Psi^{1}, \dd\Psi^{2}, \dd\Psi^{3}\right),
  \]
  is a coframe of \(\univ\). It depends only, and in a covariant way, on the punctual value of the metric \(g\) and on the first order jet of the function \(\Psi\).
\end{lem}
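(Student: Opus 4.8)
The plan is to assemble, in the right order, the facts already prepared in \autoref{subsec:Hodge} and \autoref{subsec:volume-pseudo-form}; the single step where the hypothesis on $\ker(T\Psi)$ is actually used will be isolated, since that is the only place the argument can fail.

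First I would introduce the covector field $\WW^{\flat}\bydef\hodge(\dd\Psi^{1}\wedge\dd\Psi^{2}\wedge\dd\Psi^{3})$ and its metric dual $\WW=(\WW^{\flat})^{\sharp}$, which by \eqref{eq:W} (applied to the $3$-form $\dd\Psi^{1}\wedge\dd\Psi^{2}\wedge\dd\Psi^{3}$) is characterised by $i_{\WW}\vol_{g}=\dd\Psi^{1}\wedge\dd\Psi^{2}\wedge\dd\Psi^{3}$. Since $\Psi$ is a submersion, $\dd\Psi^{1},\dd\Psi^{2},\dd\Psi^{3}$ are pointwise linearly independent, hence so are $\grad^{g}\Psi^{1},\grad^{g}\Psi^{2},\grad^{g}\Psi^{3}$; evaluating the defining identity \eqref{eq:hodge} of the Hodge star on these gradients yields $g^{-1}(\WW^{\flat},\dd\Psi^{I})=\vol_{g}(\grad^{g}\Psi^{1},\grad^{g}\Psi^{2},\grad^{g}\Psi^{3},\grad^{g}\Psi^{I})=0$ for $I=1,2,3$, because of the repeated argument in an alternating form. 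Equivalently $\dd\Psi^{I}(\WW)=0$, so $\WW$ lies in the line $\ker(T\Psi)$, and moreover $\langle\grad^{g}\Psi^{1},\grad^{g}\Psi^{2},\grad^{g}\Psi^{3}\rangle\subseteq(\ker T\Psi)^{\perp}$, with equality by dimension count.

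The crux, and the main obstacle, is to rule out $\WW=0$. By hypothesis $\ker(T\Psi)$ is non-degenerate (timelike or spacelike), so its orthogonal complement $(\ker T\Psi)^{\perp}=\langle\grad^{g}\Psi^{I}\rangle$ is non-degenerate as well (see \cite[Theorem p.~9]{Lic1955}); therefore the Gram matrix $(K^{IJ})=(g^{-1}(\dd\Psi^{I},\dd\Psi^{J}))=(g(\grad^{g}\Psi^{I},\grad^{g}\Psi^{J}))$ is invertible, and \eqref{eq:norm_W} gives $g(\WW,\WW)=-\det(K^{IJ})\neq 0$. Thus $\WW$ is non-isotropic, in particular nowhere zero, and it generates $\ker(T\Psi)$. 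In the excluded degenerate (null) case one would instead have $\det(K^{IJ})=0$ and the family would fail to be a basis, which is precisely why the timelike/spacelike assumption is needed.

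Once $\WW\neq 0$ is known, the coframe property is immediate: the $\dd\Psi^{I}$ span a $3$-plane in $T^{*}_{m}\univ$ every element of which annihilates $\WW$, whereas $\WW^{\flat}(\WW)=g(\WW,\WW)\neq 0$; hence $\WW^{\flat}$ is not in that span, so $(\WW^{\flat},\dd\Psi^{1},\dd\Psi^{2},\dd\Psi^{3})$ is a basis of the $4$-dimensional cotangent space at each point, and being smooth it is a coframe. For the regularity and covariance claims, each $\dd\Psi^{I}$ depends only on the $1$-jet of $\Psi$ and satisfies $\pull(\dd\Psi^{I})=\dd((\pull\Psi)^{I})$, while the Hodge star depends only on the punctual value of $g$ (and the fixed orientation) and transforms by \eqref{eq:hodge_covariant}; combining these gives $\mF(\pull g,\pull\Psi)=\pull\mF(g,\Psi)$ component by component, which is the asserted covariance.
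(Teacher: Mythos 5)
Your proof is correct and follows essentially the same route as the paper's: the orthogonality relations obtained from \eqref{eq:hodge}, the non-isotropy of \(\WW\) deduced from the timelike/spacelike hypothesis on \(\ker(T\Psi)\), and covariance from \(\pull\dd=\dd\pull\) together with \eqref{eq:hodge_covariant}. The only cosmetic difference is that you get \(g(\WW,\WW)\neq 0\) via the non-degeneracy of \((\ker T\Psi)^{\perp}\), the invertibility of the Gram matrix \((K^{IJ})\) and \eqref{eq:norm_W}, whereas the paper first notes \(\WW\neq 0\) from the submersion hypothesis alone and then invokes the non-degeneracy of \(\ker(T\Psi)\) directly; both are valid.
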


\begin{rem}
  In the framework of \textsl{perfect matter}, following Souriau, we make the further hypothesis that the subspace \(\ker T\Psi\) is time-like, \textit{i.e.}, that
  \[
    g(\pmb{W},\pmb{W}) < 0.
  \]
  However, the reasoning implemented here can be carried unchanged if we assume that \(\ker(T\Psi)\) is spacelike (this would allow one to study the case of \textsl{tachyons}). By contrast, if \(\ker(T\Psi)\) is isotropic (the case of \textsl{photons}), our reasoning breaks, as we would not produce \(4\) linearly independent covector fields.
\end{rem}

\begin{rem}
  Whereas the vector fields \(\dd\Psi^{I}\) are naturally covariant under the group of general diffeomorphisms, the supplementary covector field \(\hodge(\dd\Psi^1\wedge\dd\Psi^2\wedge\dd\Psi^3)\) changes sign when the orientation is reversed. It could be made covariant under the group of general diffeomorphisms by making its dependence on the orientation explicit---it would become a tensor density, see \autoref{subsec:volume-pseudo-form} for more details on the volume pseudo-form---, but we have the feeling that this would hinder the clarity of the exposition.
\end{rem}


\end{document}